\documentclass[letterpaper,11pt]{article}

\usepackage[runin]{abstract}
\setlength{\abstitleskip}{-\parindent} 

\newcommand\ddfrac[2]{\frac{\displaystyle #1}{\displaystyle #2}}

\usepackage{geometry}
\geometry{verbose,letterpaper,tmargin=1in,bmargin=1in,lmargin=1in,rmargin=1in}
\usepackage{titlesec}
\usepackage{graphicx}
\usepackage{epsfig}
\usepackage{amsmath}
\usepackage{amsthm}
\usepackage{amsfonts}
\usepackage{amssymb}
\usepackage{empheq}
\usepackage{algorithmicx}
\usepackage{algorithm}
\usepackage{algpseudocode}
\usepackage{float}
\usepackage[authoryear]{natbib}
\usepackage{url}

\usepackage[font=sf]{caption}

\newcommand{\vertiii}[1]{{\left\vert\kern-0.25ex\left\vert\kern-0.25ex\left\vert #1 
    \right\vert\kern-0.25ex\right\vert\kern-0.25ex\right\vert}}

\titleformat*{\section}{\Large\bfseries}
\titleformat*{\subsection}{\large\sc}
\titleformat*{\subsubsection}{\itshape}

\usepackage{mathtools}

\usepackage{epigraph}

\setlength\epigraphwidth{6.6cm}
\setlength\epigraphrule{0pt}

\begin{document}

\title{{\bf Poising on Ariadne's thread: An algorithm for computing a maximum clique in polynomial time}}

\author{\large{Ioannis Avramopoulos and the Ghost of Helags}}

\date{}

\maketitle

\thispagestyle{empty} 

\newtheorem{definition}{Definition}
\newtheorem{proposition}{Proposition}
\newtheorem{theorem}{Theorem}
\newtheorem*{theorem*}{Theorem}
\newtheorem{corollary}{Corollary}
\newtheorem{lemma}{Lemma}
\newtheorem{axiom}{Axiom}
\newtheorem{thesis}{Thesis}

\vspace*{-0.2truecm}

\begin{abstract}
In this paper, we present a polynomial-time algorithm for the maximum clique problem, which implies {\bf P = NP}. Our algorithm is based on a continuous game-theoretic representation of this problem and at its heart lies a discrete-time dynamical system. The rule of our dynamical system depends on a parameter such that if this parameter is equal to the maximum-clique size, the iterates of our dynamical system are guaranteed to converge to a maximum clique.
\end{abstract}

\section{Introduction}

\setlength{\epigraphwidth}{0.34\textwidth}
\epigraph{``You want forever, always or never.''}{--- \textup{The Pierces}}

There was a general belief that {\bf NP}-complete problems are computationally intractable in that they required exponential time to solve in the worst case. In this paper, we prove a polynomial upper bound on such problems by giving a polynomial-time algorithm for the {\em maximum clique problem} (\citep{Pardalos2, BBPP, QWu} are introductions to this problem). Briefly, the {\em clique problem} is, given a positive integer $k$, to decide if an undirected graph $G$ has a clique of size $k$. This decision problem is {\bf NP}-complete \citep{Karp}. In this paper, we present a polynomial-time algorithm for the corresponding {\bf NP}-hard optimization problem (given an undirected graph $G$, find a maximum clique in $G$). Our approach draws on the power of characterizations. The maximum clique problem has many equivalent formulations as an integer programming problem or as a continuous non-convex optimization problem \citep{Pardalos2}. Our approach to computing a maximum clique is based on the latter continuous formulation.

\subsection{The maximum clique problem as quadratic optimization}

Let $G$ be an undirected graph and let $A$ be its adjacency matrix. \cite{Motzkin-Strauss} relate the solutions of the optimization problem
\begin{align*}
\max \left\{ X \cdot AX | X \in \Delta \right\},
\end{align*}
where
\begin{align*}
\Delta = \left\{ X \in \mathbb{R}^n \bigg| \sum_{i=1}^n X(i) = 1, X(i) \geq 0, i = 1, \ldots, n \right\},
\end{align*}
and $n$ is the number of vertices of $G$, with the maximum-clique size (clique number) of $G$. In particular, they show that if $X^*$ is a global maximizer, then 
\begin{align*}
X^* \cdot A X^* = 1 - \frac{1}{\omega(G)}
\end{align*}
where $\omega(G)$ is the clique number. They further show that the uniform strategy over a maximum clique is a global maximizer. But in their formulation other global maximizers may exist (even further stationary points do not necessarily coincide with maximal cliques \citep{Jagota}). \cite{Bomze2} shows that if $X^*$ is a maximizer of the optimization problem
\begin{align*}
\max \left\{ X \cdot C X | X \in \Delta \right\},
\end{align*}
where $C = A + 1/2 I$ and $I$ is the identity matrix, then
\begin{align*}
X^* \cdot C X^* = 1 - \frac{1}{2 \omega(G)}
\end{align*}
He further shows that uniform strategies over maximum cliques are the unique global maximizers (avoiding the spurious solutions of the Motzkin-Strauss formulation).

\subsection{The maximum clique problem as equilibrium computation}

It is often beneficial to study quadratic programs of the previous form as {\em doubly symmetric bimatrix games,} that is $2$-player games where the payoff matrix of each player is the transpose of that of the other and the payoff matrix is also symmetric. In a doubly symmetric game whose payoff matrix is C, there is a one-to-one correspondence between strict local maxima of $X \cdot CX$ over the probability simplex and {\em evolutionarily stable strategies} (for example, see \citep{Weibull}). 

The evolutionarily stable strategy (ESS) has its origins in mathematical biology \citep{TheLogicOfAnimalConflict, Evolution} but it admits a characterization \citep{HSS} more prolific in our setting: An ESS is an isolated equilibrium point that exerts an ``attractive force'' in a neighborhood (especially under the {\em replicator dynamic} in continuous\citep{TaylorJonker} or discrete \citep{Baum-Eagon} form). Starting from an interior strategy, the replicator dynamic in the doubly symmetric bimatrix whose payoff matrix is $C= A + 1/2 I$ is ensured to converge to a {\em maximal clique} (see \citep{Bomze2, Pelillo}) that is not necessarily maximum.

\subsection{Our maximum clique computation algorithm}

To compute a maximum clique in polynomial time further ideas are needed. Our approach in this paper is based on a game-theoretic construction due to \cite{Nisan-ESS}. By adapting the aforementioned result of \cite{Motzkin-Strauss} and building on \citep{Etessami}, \cite{Nisan-ESS} constructs a game-theoretic {\em transformation}  of the clique problem that receives as input an undirected graph and gives as output a doubly symmetric bimatrix game whose payoff matrix is akin to $C$. We refer to the payoff matrix of the game Nisan designed as the {\em Nisan-Bomze payoff matrix} and denote it by $C^+$. The primary question driving Nisan's inquiry is the computational complexity of recognizing an ESS: A distinctive property of the Nisan game is that a certain pure strategy, called strategy $0$ in his paper (and also denoted $E_0$ in this paper), is an ESS if and only if a parameter of the Nisan-Bomze payoff matrix (which we call the {\em Nisan parameter}) exceeds the clique number. Nisan shows that the problem of recognizing if $E_0$ is an ESS is {\bf coNP}-complete. The Nisan game is the conceptual basis of our maximum-clique computation algorithm.

Our early experience with maximum-clique computation in the Nisan game was negative. If the Nisan parameter is greater than the clique number, $E_0$ is a global ESS (GESS), which implies that $E_0$ is the unique symmetric equilibrium strategy. But if the Nisan parameter is equal to the clique number, $E_0$ remains a {\em global neutrally stable strategy} but other equilibria appear, namely, one equilibrium for each maximum clique (which we refer to as {\em maximum-clique equilibria}) and their corresponding convex combinations with $E_0$. Maximum-clique equilibria and their convex combinations with $E_0$ are {\em neutrally stable strategies}. An approach we followed to compute a maximum clique was to try to compute a neutrally stable strategy other than $E_0$. (See Section \ref{equalizers_section} for definitions of evolutionary and neutral stability as well as their global versions). The equilibria of the Nisan game when the Nisan parameter is equal to the clique number are an {\em evolutionarily stable set,} which implies that such equilibria are neutrally stable strategies that are attractive under the {\em replicator dynamic} (in continuous or discrete form). Given any neutrally stable strategy other than $E_0$ a maximum-clique can be readily recovered. Our approach to compute a neutrally stable strategy other than $E_0$ was to try to exploit that neutrally stable strategies attract {\em multiplicative weights}.\footnote{A generalization of the discrete-time replicator dynamic.} But our effort to provably stay out of the region of attraction of $E_0$ was futile. 

We then discovered that the problem of computing a maximum clique has a ``backdoor'' in the Nisan game. To unlock this backdoor we isolated the attractive force of $E_0$ by intersecting the evolution space of the Nisan game with a hyperplane perpendicular to $E_0$ and using the intersection of this hyperplane with the evolution space of the Nisan game as the evolution space of our equilibrium computation algorithm. This approach can be implemented by restricting the probability mass of strategy $E_0$ to a fixed value $\epsilon \in (0, 1)$ and by adapting the multiplicative weights algorithm. The multiplicative weights algorithm we adapted is Hedge \citep{FreundSchapire1, FreundSchapire2}. Denoting the Nisan-Bomze payoff matrix by $C$, Hedge assumes the following expression in the Nisan game:
\begin{align*}
T_i(X) = X(i) \cdot \frac{\exp\left\{ \alpha E_i \cdot CX \right\}}{ \sum_{j=1}^n X(j) \exp \left\{ \alpha E_j \cdot CX \right\} }, \quad i = 0, 1, \ldots, n
\end{align*}
where $\alpha$ is a parameter called the {\em learning rate,} which has the role of a {\em step size} in our equilibrium computation setting. If we restrict the probability mass of $E_0$ to the fixed value $0< \epsilon < 1$ our dynamical system assumes the following expression:
\begin{align*}
T_0(X) &= \epsilon\\
T_i(X) &= X(i) \cdot \frac{(1-\epsilon) \exp\left\{ \alpha E_i \cdot CX \right\}}{\sum_{j=1}^n X(j) \exp \left\{ \alpha E_j \cdot CX \right\} }, \quad i = 1, \ldots, n.
\end{align*}
The latter dynamical system is not ensured to converge to a maximum-clique equilibrium---instead, it may converge to a maximal clique. Here comes one critical idea in this vein: In the evolution space wherein this system is acting, maximum-clique equilibria have a distinctive property, namely, assuming the Nisan parameter is equal to the clique number, if $X^*$ is a maximum clique equilibrium, then
\begin{align*}
\max_{i=1}^n \left\{ (CX^*)_i \right\} - X^* \cdot CX^* = 0
\end{align*}
whereas if $X^*$ is a maximal clique equilibrium, then 
\begin{align*}
\max_{i=1}^n \left\{ (CX^*)_i \right\} - X^* \cdot CX^* < 0.
\end{align*}
To exploit this phenomenon, in an effort to enforce convergence of our dynamical system to a maximum clique we perturbed the components $E_i \cdot CX$ of the gradient $CX$ of the objective function $X \cdot CX$ with the components of the gradient of a logarithmic barrier function, a technique that is akin to {\em interior point optimization methods} barring we did not vanish the perturbation as time progresses. But we had trouble capturing the precise effect of the logarithmic barrier function on the behavior of our dynamical system analytically, and we decided to restrict the evolution space of our dynamical system even further. This latter approach is followed in this paper.

In this paper, our dynamical system evolves in a ``lower feasibility set'' (a slice of the game). This ensures that if the Nisan parameter $k$ is equal to greater than the clique number (denoted $\omega(G)$) no equilibria of $C$ appear within this feasibility set other than maximum-clique equilibria which appear when $k = \omega(G)$. Our algorithm is guided by this property starting with a large value of $k$ and incrementally decreasing $k$ until a maximum-clique equilibrium can be computed. We have derived a condition that enables us to determine when the search for a particular value of $k$ should be abandoned, that $k$ should decrease, and the search should continue using a smaller value.

\if(0)

\subsection{Our contributions}

Our main contribution is that {\bf P = NP}. Some of our detailed contributions are as follows:
\begin{enumerate}

\item We use a barrier function to restrict evolution of a dynamical system inside a desirable subset of the system's blanket evolution space via a ``growth transformation'' (see \citep{Baum-Eagon, Baum-Sell, Gopalakrishnan}). To the best of our knowledge this is the first paper where growth transformations are used in this fashion. 

\item We obtain fixed-point and equilibrium approximation bounds for the empirical average of the iterates our dynamical system generates. (In particular, for an associated sequence of ``multipliers.'') To the best of our knowledge these are the first bounds on equilibrium computation using an iterative dynamical systems approach based on multiplicative weights even in the setting of maximal clique computation wherein the problem of converging to a maximal clique via replicator dynamics has been studied quite thoroughly (see \citep{Bomze2, Bomze, Pelillo} and references therein).

\item We transform the fixed-point approximation scheme presented in this paper to a polynomial-time equilibrium computation algorithm using the concepts of {\em minimum gap} and {\em probability and payoff sectors}. 

\item We prove Hedge is a growth transformation for homogeneous polynomials with nonnegative coefficients for {\em all} positive values of the learning rate parameter, which separates the problem of selecting the learning rate from the problem of ensuring convergence to a fixed point. However, our algorithm uses the learning rate to ensure convergence to a maximum clique.

\end{enumerate}

\fi

\subsection{Our proof techniques}

In our algorithm, which we call {\em Ariadne,} the computation of a maximum clique is guided by the iterations of a dynamical system. In fact, there are two versions this dynamical system, both of which are discontinuous (but admit a continuous Lyapunov function). The secondary system is activated when the Nisan parameter is equal to the clique number. (Once the secondary dynamical system is activated, we learn the value of the clique number, but execution needs to continue to compute a maximum clique.) To prove that this combination of dynamical systems guides Ariadne toward a maximum clique, we prove asymptotic convergence to a maximum-clique equilibrium. The proof rests on the fact that these dynamical systems are {\em growth transformations} (see \citep{Baum-Eagon, Baum-Sell, Gopalakrishnan}) for a potential function. 

The growth transformations proposed by previous authors (in the aforementioned references) are based on the {\em discrete-time replicator dynamic}. In this paper, our growth transformations are based on Hedge, which facilitates the analysis deriving equilibrium approximation bounds. Our proof that Hedge is a growth transformation makes use of a limiting argument: We derive Hedge as the limit of more elementary maps and use elementary {\em functional analysis} for our conclusion.

However, perhaps the most important analytic technique introduced in this paper is the derivation of equilibrium approximation bounds using the {\em Chebyshev order inequality} (known also as the {\em generalized Chebyshev sum inequality}). To derive equilibrium approximation bounds for our dynamical system, we first derive an order preservation principle for our dynamical system which we then ``plug in'' the Chebyshev order inequality to obtain polynomial bounds on the number of iterations to approximately converge to a non-equilibrium fixed point or a maximum-clique equilibrium. 

Our main results in this latter direction are Lemma \ref{forever} (which is based on the inverse function theorem and the Hartman-Stampacchia theorem) and Theorem \ref{equilibrium_error_nonuniform} in Appendix \ref{Fixed_point_bound}, the latter applying to any symmetric bimatrix game and, therefore, its applicability is more general than the doubly symmetric games that are analyzed in this paper. In general symmetric bimatrix games, we cannot expect blanket convergence to a symmetric equilibrium strategy starting from any (interior) initialization: \cite{Daskalakis-SAGT} show that in Shapley's $3 \times 3$ symmetric bimatrix game the dynamics defined by using Hedge in each player position and computing the empirical average of each player's iterated sequence of strategies that ensues from the interaction diverge (under assumptions on the learning rate) for nonuniform initializations of play. For example, consider the symmetric game $(C, C^T)$, where 
\begin{align*}
C = \frac{1}{2} \left[ \begin{array}{cccccc}
0 & 0 & 0 & 0 & 1 & 2 \\
0 & 0 & 0 & 2 & 0 & 1 \\
0 & 0 & 0 & 1 & 2 & 0 \\
0 & 1 & 2 & 0 & 0 & 0 \\
2 & 0 & 1 & 0 & 0 & 0 \\
1 & 2 & 0 & 0 & 0 & 0 \\
\end{array} \right],
\end{align*}
which an extended form of Shapley's game.\footnote{We would like to thank an anonymous reviewer of a related submission for pointing out this example.} Figure \ref{Daskalakisetal} illustrates divergence of the sequence of averages from the uniform equilibrium starting from initial condition $(0.1, 0.2, 0.3, 0.2, 0.1, 0.1)$. This divergence phenomenon cannot manifest in Ariadne (because matrices are doubly symmetric and Lemma \ref{convergence} ensures convergence of the iterates to a maximum clique equilibrium).

\begin{figure}[tb]
\centering
\includegraphics[width=14cm]{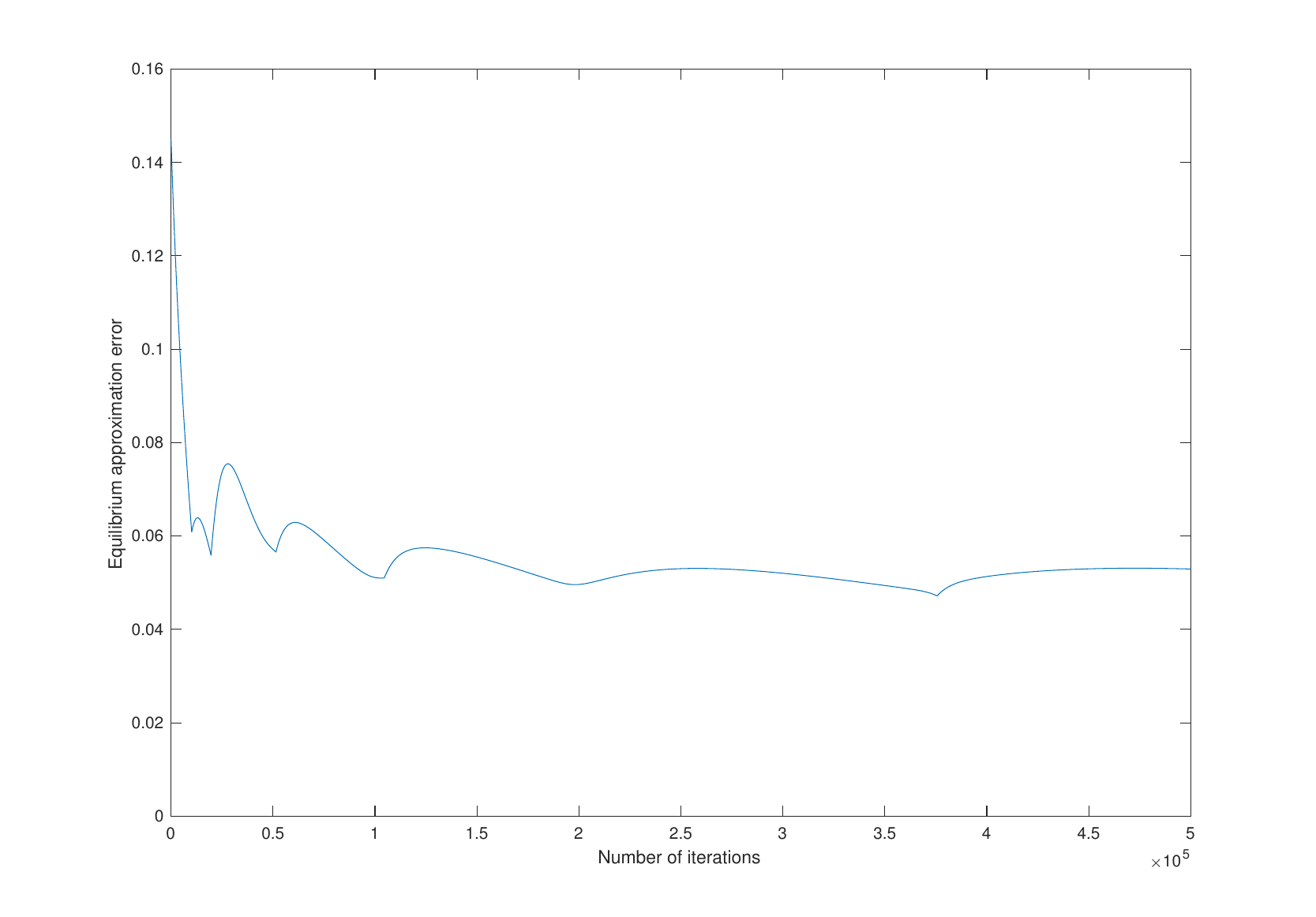}
\caption{\label{Daskalakisetal}
Divergence of the empirical average of the iterates of Hedge in extended Shapley's game starting from a non-uniform initial condition for a small learning rate.}
\end{figure}

\subsection{Overview of the rest of the paper}

The rest of this paper is organized as follows: In Section \ref{equalizers_section}, we present game-theoretic background. In Section \ref{Hedge}, we define the archetypical form of our dynamical system's map, characterize its fixed points (they coincide with the fixed points of the replicator dynamic), and give a characterization of the system's map using optimization theory.  In Section \ref{Algorithm}, we present the backbone of our algorithm, its various components, and their analysis. A main feature of our algorithm is that we use a barrier function to restrict evolution of a dynamical system inside a desirable subset of the system's blanket evolution space via a growth transformation. Using such barrier functions we are able to perform non-convex global optimization. To the best of our knowledge this is the first paper where growth transformations are used in this fashion. In Section \ref{Algorithm2}, we build on top of our algorithm's backbone to derive a dynamical system which has favorable properties regarding analytical tractability of computational complexity bounds. Using this latter formulation, in Section \ref{Complexity}, we prove that the Ariadne's complexity is polynomial. In Section \ref{conclusion}, we conclude along with discussing possible directions for future work. Finally, in the Appendix, we prove that Hedge is a growth transformation for positive values of the learning rate parameter in homogeneous polynomials with nonnegative coefficients (subject to constraints on the coefficients). We furthermore derive an inequality on the fixed-point approximation error of our dynamical system. To the extent of our knowledge, this inequality is the first equilibrium approximation bound in non-convex problems using Hedge (or other multiplicative weight algorithms such as the discrete-time replicator dynamic). Then we show that the composition of the relative entropy with Hedge is a convex function of the learning rate, and use this property to devise a lemma which in turn gives an upper bound on the learning rate that our maximum-clique computation algorithm leverages in a stage of its execution. Subsequently we derive upper bounds on the learning rate such that non-equilibrium fixed points are repelling under Hedge, in that the induced dynamics eliminate the possibility of convergence to a non-equilibrium fixed point. We, finally, give pseudocode for our clique computation algorithm.

\section{Preliminary background on Nash equilibria}
\label{equalizers_section}

\subsection{Bimatrix games and symmetric bimatrix games}

A $2$-player (bimatrix) game in normal form is specified by a pair of $n \times m$ matrices $A$ and $B$, the former corresponding to the {\em row player} and the latter to the {\em column player}. A {\em mixed strategy} for the row player is a probability vector $P \in \mathbb{R}^n$ and a mixed strategy for the column player is a probability vector $Q \in \mathbb{R}^m$. The {\em payoff} to the row player of $P$ against $Q$ is $P \cdot A Q$ and that to the column player is $P \cdot B Q$. Denote the space of probability vectors for the row player by $\mathbb{P}$ and for the column player by $\mathbb{Q}$. A Nash equilibrium of the bimatrix game $(A, B)$ is a pair of mixed strategies $P^*$ and $Q^*$ such that all unilateral deviations from these strategies are not profitable, that is, for all $P \in \mathbb{P}$ and $Q \in \mathbb{Q}$, we simultaneously have that
\begin{align}
P^* \cdot AQ^* &\geq P \cdot AQ^*\label{eqone}\\
P^* \cdot BQ^* &\geq P^* \cdot BQ.\label{eqtwo}
\end{align}
(For example, see \citep{BVS_AGT}.) $A, B$ are called {\em payoff matrices}. We denote the set of Nash equilibria of the bimatrix game $(A, B)$ by $NE(A, B)$. If $B = A^T$, where $A^T$ is the transpose of matrix $A$, the bimatrix game is called a {\em symmetric bimatrix game}. Let $(C, C^T)$ be a symmetric bimatrix game. We denote the space of symmetric bimatrix games by $\mathbb{C}$. $\mathbb{\hat{C}}$ denotes the space of payoff matrices whose entries lie in the range $[0, 1]$. Pure strategies are denoted either as $i$ or as $E_i$, where $E_i$ is a probability vector whose mass is concentrated in position $i$. $\mathbb{X}(C)$ denotes the space of mixed strategies of $(C, C^T)$ (a probability simplex). We call $(P^*, Q^*) \in NE(C, C^T)$ a symmetric equilibrium if $P^* = Q^*$. If $(X^*, X^*)$ is a symmetric equilibrium, we call $X^*$ a symmetric equilibrium strategy. It follows from \eqref{eqone} and \eqref{eqtwo} that a symmetric (Nash) equilibrium strategy $X^* \in \mathbb{X}(C)$ satisfies
\begin{align*}
\forall X \in \mathbb{X}(C) : (X^* - X) \cdot CX^* \geq 0.
\end{align*} 
$NE^+(C)$ denotes the symmetric equilibrium strategies of $(C, C^T)$. We denote the (relative) interior of $\mathbb{X}(C)$ by $\mathbb{\mathring{X}}(C)$ (every pure strategy in $\mathbb{\mathring{X}}(C)$ has probability mass). Let $X \in \mathbb{X}(C)$. We define the {\em support} or {\em carrier} of $X$ by
\begin{align*}
\mathcal{C}(X) \equiv \{ i \in \mathcal{K}(C) | X(i) > 0\}.
\end{align*}

A {\em doubly symmetric bimatrix game} \citep[p.26]{Weibull} is a symmetric bimatrix game whose payoff matrix, say $C$, is symmetric, that is $C = C^T$. Symmetric equilibria in doubly symmetric games are KKT points of a {\em standard quadratic program} (cf. \citep{Bomze}):
\begin{align*}
\mbox{ maximize } &X \cdot CX\\
\mbox{ subject to } &X \in \mathbb{X}(C).
\end{align*}
$X \cdot CX$ is the {\em potential function} of the game.

\subsection{Equalizers: Definition and basic properties}

\begin{definition}
$X^* \in \mathbb{X}(C)$ is called an {\em equalizer} if
\begin{align*}
\forall X \in \mathbb{X}(C) : (X^* - X) \cdot CX^* = 0. 
\end{align*} 
We denote the set of equalizers of $C$ by $\mathbb{E}(C)$.
\end{definition}

Note that $\mathbb{E}(C) \subseteq NE^+(C)$. Equalizers generalize interior symmetric equilibrium strategies, as every such strategy is an equalizer, but there exist symmetric bimatrix games with a non-interior equalizer (for example, if a column of $C$ is constant, the corresponding pure strategy of $C$ is an equalizer of $C$). Note that an equalizer can be computed in polynomial time by solving the linear (feasibility) program (LP)
\begin{align*}
(CX)_1 = \cdots = (CX)_n, \quad \sum_{i=1}^n X(i) = 1, \quad X \geq 0,
\end{align*}
which we may equivalently write as
\begin{align*}
CX = c \mathbf{1}, \quad \mathbf{1}^T X = 1, \quad X \geq 0,
\end{align*}
where $\mathbf{1}$ is a column vector of ones of appropriate dimension. We may write this problem as a standard LP as follows: Let
\begin{align*}
A \doteq \left[ \begin{array}{cc}
C & - \mathbf{1} \\
\mathbf{1}^T & 0 \end{array} \right]
\mbox{ and }
Y \doteq \left[ \begin{array}{c}
X \\
c \end{array} \right].
\end{align*}
then
\begin{align*}
A Y = \left[ \begin{array}{cc}
C & - \mathbf{1} \\
\mathbf{1}^T & 0 \end{array} \right]
\left[ \begin{array}{c}
X \\
c \end{array} \right] =
\left[ \begin{array}{c}
CX - c \mathbf{1} \\
\mathbf{1}^T X \end{array} \right],
\end{align*}
and the standard form of our LP, assuming $C > 0$, is
\begin{align}
\left[ \begin{array}{c}
CX - c \mathbf{1} \\
\mathbf{1}^T X \end{array} \right] = \left[ \begin{array}{c}
\mathbf{0} \\
1 \end{array} \right], X \geq 0, c \geq 0\label{my_LP}
\end{align}
where $\mathbf{0}$ is a column vector of zeros of appropriate dimension. We immediately obtain that:

\begin{lemma}
$\mathbb{E}(C)$ is a convex set.
\end{lemma}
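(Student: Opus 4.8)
The plan is to observe that the equalizer condition is equivalent to a system of linear equalities in $X^*$, so that $\mathbb{E}(C)$ is the intersection of the simplex $\mathbb{X}(C)$ with an affine subspace; convexity is then immediate since both of those sets are convex. This is exactly the ``we immediately obtain'' that the surrounding text alludes to, so I would keep the argument short.

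First I would unpack the definition. Testing $\forall X \in \mathbb{X}(C): (X^* - X) \cdot CX^* = 0$ against the pure strategies $X = E_i$ yields $(CX^*)_i = X^* \cdot CX^*$ for every $i$; conversely, if $CX^* = c\mathbf{1}$ for some scalar $c$, then for any $X \in \mathbb{X}(C)$ we get $X \cdot CX^* = c = X^* \cdot CX^*$ by taking the convex combination over the coordinates of $X$, so the equalizer condition holds. Hence
\begin{align*}
\mathbb{E}(C) = \bigl\{ X \in \mathbb{X}(C) \;\bigm|\; CX = c\mathbf{1} \text{ for some } c \in \mathbb{R} \bigr\},
\end{align*}
which is precisely the feasible set of the linear program \eqref{my_LP} read back in the $X$-coordinates (under the standing assumption $C > 0$ the constraint $c \geq 0$ is automatic, since then $CX > 0$ for $X \in \mathbb{X}(C)$).

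I would then conclude in either of two equivalent ways. The set $H = \{ X \in \mathbb{R}^n \mid (CX)_1 = \cdots = (CX)_n \}$ is a linear subspace of $\mathbb{R}^n$, hence convex, and $\mathbb{X}(C)$ is convex, so $\mathbb{E}(C) = \mathbb{X}(C) \cap H$ is convex as an intersection of convex sets. Alternatively, arguing directly: if $X_1, X_2 \in \mathbb{E}(C)$ with $CX_1 = c_1\mathbf{1}$ and $CX_2 = c_2\mathbf{1}$, then for $\lambda \in [0,1]$ the point $\lambda X_1 + (1-\lambda)X_2$ lies in $\mathbb{X}(C)$ and satisfies $C\bigl(\lambda X_1 + (1-\lambda)X_2\bigr) = \bigl(\lambda c_1 + (1-\lambda)c_2\bigr)\mathbf{1}$, so it is again an equalizer.

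There is essentially no obstacle here: the only substantive point is the reformulation of the equalizer condition as linear constraints, and the only step that warrants a line of care is checking both directions of that reformulation — the ``only if'' direction via pure-strategy deviations and the ``if'' direction via convex combinations of coordinates. Everything after that is a one-line appeal to convexity of intersections (or of the simplex together with linearity of $X \mapsto CX$).
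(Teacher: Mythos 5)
Your proof is correct and takes essentially the same approach as the paper: both reduce the equalizer condition to the linear constraint $CX = c\mathbf{1}$ and then invoke convexity of a feasible region. Your intersection-of-convex-sets phrasing is a touch cleaner than the paper's, which routes through the LP \eqref{my_LP} and projects away the $c$-coordinate (invoking uniqueness of $c$, a step that is not actually needed since projections of convex sets are automatically convex).
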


\begin{proof}
The set of feasible/optimal solutions of a linear program is a convex set. Let
\begin{align*}
\mathbb{Y}^* = \left\{ [X^T \mbox{ } c]^T | [X^T \mbox{ } c]^T \mbox{ is a feasible solution to \eqref{my_LP}} \right\}.
\end{align*}
Then $\mathbb{Y}^*$ is convex and therefore the set
\begin{align*}
\mathbb{X}^* = \left\{ X | [X^T \mbox{ } c]^T \mbox{ is a feasible solution to \eqref{my_LP}} \right\}
\end{align*}
is also convex since $c$ is unique provided the LP is feasible.
\end{proof}

We can actually show something stronger:

\begin{lemma}
\label{lman}
If $X_1^*, X_2^* \in \mathbb{E}(C)$ then $\Lambda(X_1^*, X_2^*) \subseteq \mathbb{E}(C)$, where
\begin{align*}
\Lambda(X_1^*, X_2^*) = \left\{ (1-\lambda) X_1^* + \lambda X^*_2 \in \mathbb{X}(C) | \lambda \in \mathbb{R} \right\}.
\end{align*}
\end{lemma}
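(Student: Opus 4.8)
The plan is to trade the variational definition of an equalizer for the linear-algebraic one that is already implicit in the LP \eqref{my_LP}. First I would record the characterization: $X^* \in \mathbb{E}(C)$ if and only if $X^* \in \mathbb{X}(C)$ and $CX^* = c\mathbf{1}$ for some scalar $c$, in which case necessarily $c = X^* \cdot CX^*$. The forward implication comes from instantiating the defining identity $(X^* - X) \cdot CX^* = 0$ at each pure strategy $X = E_i$, which forces every coordinate of $CX^*$ to equal the common value $X^* \cdot CX^*$; the converse is immediate because $(X^* - X) \cdot c\mathbf{1} = c(\mathbf{1}^T X^* - \mathbf{1}^T X) = 0$ whenever $X^*, X$ are probability vectors.

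With this in hand the lemma reduces to an observation about affine combinations. Given $X_1^*, X_2^* \in \mathbb{E}(C)$ with $CX_1^* = c_1 \mathbf{1}$ and $CX_2^* = c_2 \mathbf{1}$, I would fix $\lambda \in \mathbb{R}$ and set $X_\lambda = (1-\lambda) X_1^* + \lambda X_2^*$. Linearity of the map $X \mapsto CX$ gives $CX_\lambda = \big((1-\lambda) c_1 + \lambda c_2\big)\mathbf{1}$, again a scalar multiple of $\mathbf{1}$, while $\mathbf{1}^T X_\lambda = (1-\lambda) + \lambda = 1$. Hence the moment $X_\lambda$ is also coordinatewise nonnegative --- that is, $X_\lambda \in \mathbb{X}(C)$, which is precisely the membership condition built into the definition of $\Lambda(X_1^*, X_2^*)$ --- the point $X_\lambda$ satisfies the characterization above and therefore belongs to $\mathbb{E}(C)$. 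This yields $\Lambda(X_1^*, X_2^*) \subseteq \mathbb{E}(C)$.

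I do not expect a genuine obstacle; the one subtlety worth articulating is why this is strictly stronger than the preceding convexity lemma. That lemma used only the abstract fact that $\mathbb{E}(C)$ is (the $X$-projection of) the feasible set of an LP, and LP feasible sets are convex but need not contain the whole affine line through two of their points. What rescues the stronger claim is the concrete shape of the equalizer constraints: $CX = c\mathbf{1}$ and $\mathbf{1}^T X = 1$ are affine in $(X,c)$, so the only genuine inequality is $X \geq 0$; thus $\mathbb{E}(C)$ is the intersection of an affine subspace with the simplex $\mathbb{X}(C)$, and intersecting that affine subspace with $\mathbb{X}(C)$ is exactly the operation $\Lambda$ performs. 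Equivalently, the equality constraints in \eqref{my_LP} define an affine subspace in $(X,c)$-space, $\Lambda$ lifts to a segment (indeed a line) inside it, and projecting back to the $X$-coordinates preserves equalizer membership. Note also that, unlike the standard-form LP \eqref{my_LP}, this argument requires no positivity assumption on $C$.
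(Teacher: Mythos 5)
Your proof is correct, and it takes a genuinely different route from the paper's. The paper works directly with the variational definition of an equalizer: it expands $Y^* \cdot CY^*$ for $Y^* = (1-\lambda)X_1^* + \lambda X_2^*$ using bilinearity, substitutes the equalizer identity $(X_i^* - X) \cdot CX_i^* = 0$ at the points $X = Y^*$ and $X = $ an arbitrary simplex element, and collapses the expression to $X \cdot CY^*$. You instead first extract the explicit linear-algebraic characterization --- $X^* \in \mathbb{E}(C)$ iff $X^* \in \mathbb{X}(C)$ and $CX^* = c\mathbf{1}$ --- by testing the variational identity against pure strategies, and then the lemma falls out in one line from linearity of $X \mapsto CX$. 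The paper's computation is self-contained but opaque about \emph{why} affine (not merely convex) combinations work; your version makes the mechanism explicit: $\mathbb{E}(C)$ is the intersection of an affine subspace (the preimage of $\mathrm{span}\,\mathbf{1}$ under $C$, further sliced by $\mathbf{1}^T X = 1$) with the simplex, and $\Lambda$ is precisely the operation of intersecting a line in that affine subspace with the simplex. Your closing remark correctly identifies what the preceding convexity lemma fails to capture and why the affine structure rescues the stronger claim; you are also right that, unlike the standard-form reduction \eqref{my_LP}, no positivity of $C$ is needed here. One minor point: when you invoke the characterization for $X_\lambda$, the value of the scalar $c$ for $X_\lambda$ is $(1-\lambda)c_1 + \lambda c_2$, which need not be nonnegative, so it is worth noting that the characterization you stated places no sign restriction on $c$ --- that restriction is an artifact of the standard-form LP, not of the equalizer concept.
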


\begin{proof}
Assume $X_1^*, X_2^* \in \mathbb{E}(C)$. Then, by the definition of an equalizer,
\begin{align*}
\forall X \in \mathbb{X}(C) : X^*_1 \cdot CX^*_1 &= X \cdot CX^*_1 \mbox{ and }\\
\forall X \in \mathbb{X}(C) : X^*_2 \cdot CX^*_2 &= X \cdot CX^*_2.
\end{align*}
Let
\begin{align*}
Y^* = (1-\lambda) X_1^* + \lambda X^*_2, \mbox{ } \lambda \in \mathbb{R}.
\end{align*}
Then
\begin{align*}
Y^* \cdot CY^* &= (1-\lambda) \left( (1-\lambda) X^*_1 + \lambda X^*_2 \right) \cdot CX^*_1 + \lambda \left( (1-\lambda) X^*_1 + \lambda X^*_2 \right) \cdot CX^*_2\\
  &= (1-\lambda) X^*_1 \cdot CX^*_1 + \lambda X^*_2 \cdot CX^*_2\\
  &= (1-\lambda) X \cdot CX^*_1 + \lambda X \cdot CX^*_2\\
  &= X \cdot C \left((1-\lambda) X^*_1 + \lambda X^*_2 \right)\\
  &= X \cdot CY^*.
\end{align*}
Since $X$ is arbitrary, the proof is complete.
\end{proof}

\if(0)

\subsection{Approximate and well-supported approximate equilibria}
\label{approx_equilibria}

As mentioned earlier, conditions \eqref{eqone} and \eqref{eqtwo} simplify as follows for a symmetric equilibrium strategy $X^*$:
\begin{align*}
\forall X \in \mathbb{X}(C) : (X^* - X) \cdot CX^* \geq 0.
\end{align*}
An $\epsilon$-approximate symmetric equilibrium, say $X^*$, is defined as follows:
\begin{align*}
\forall X \in \mathbb{X}(C) : (X^* - X) \cdot CX^* \geq -\epsilon.
\end{align*}
We may equivalently write the previous expression as
\begin{align*}
(CX^*)_{\max} - X^* \cdot CX^* \leq \epsilon,
\end{align*}
where
\begin{align*}
(CX^*)_{\max} = \max\{ Y \cdot CX^* | Y \in \mathbb{X}(C) \}.
\end{align*}

Let us now give an important result on approximate equilibria. To that end, we need a definition:

\begin{definition}
\label{cool}
$(X^*, Y^*)$ is an $\epsilon$-well-supported Nash equilibrium of $(A, B)$ if
\begin{align*}
&E_i \cdot A Y^* > E_k \cdot A Y^* + \epsilon \Rightarrow X^*(k) = 0 \mbox{ and }\\
&X^* \cdot B E_j > X^* \cdot B E_k + \epsilon \Rightarrow Y^*(k) = 0.
\end{align*}
\end{definition}

Definition \ref{cool} is due to \cite{Daskalakis}. We note that an $\epsilon$-well-supported Nash equilibrium of $(A, B)$ is necessarily an $\epsilon$-approximate equilibrium of $(A, B)$ but the converse is not generally true. However, given an approximate equilibrium we can obtain a well-supported equilibrium:

\begin{proposition}
\label{cgood}
Let $(A, B)$ be such that $0 \leq A, B \leq 1$. Given an $\epsilon^2/8$-approximate Nash equilibrium of $(A, B)$, where $0 \leq \epsilon \leq 1$, we can find an $\epsilon$-well-supported Nash equilibrium in polynomial time.
\end{proposition}

The previous proposition is due to \citep{CDT} motivated by a related result in \citep{Daskalakis}. We have the following characterization of well-supported equilibria:

\begin{proposition}
\label{my_5_cents}
$(X^*, Y^*)$ is an $\epsilon$-well-supported Nash equilibrium of $(A, B)$ if and only if
\begin{align*}
X^*(i) > 0 &\Rightarrow E_i \cdot AY^* \geq \max_{k = 1}^m E_k \cdot A Y^* - \epsilon \mbox{ and }\\
Y^*(j) > 0 &\Rightarrow X^* \cdot BE_j \geq \max_{\ell = 1}^n X^* \cdot BE_\ell - \epsilon.
\end{align*}
\end{proposition}

\begin{proof}
The statement of the lemma is just the contrapositive of Definition \ref{cool}.
\end{proof}

In a symmetric bimatrix game, the previous proposition simplifies as:
\begin{proposition}
\label{my_10_cents}
$X^*$ is an $\epsilon$-well-supported symmetric equilibrium strategy of $(C, C^T)$ if and only if it is an $\epsilon$-approximate symmetric equilibrium strategy and
\begin{align*}
(\hat{C}X)_{\max} - (\hat{C}X)_{\min} \leq \epsilon
\end{align*}
where $\hat{C}$ is the carrier of $X^*$.
\end{proposition}

\fi

\subsection{Approximate and well-supported approximate equilibria}
\label{approx_equilibria}

As mentioned earlier, conditions \eqref{eqone} and \eqref{eqtwo} simplify as follows for a symmetric equilibrium strategy $X^*$:
\begin{align*}
\forall X \in \mathbb{X}(C) : (X^* - X) \cdot CX^* \geq 0.
\end{align*}
An $\epsilon$-approximate symmetric equilibrium, say $X^*$, is defined as follows:
\begin{align*}
\forall X \in \mathbb{X}(C) : (X^* - X) \cdot CX^* \geq -\epsilon.
\end{align*}
We may equivalently write the previous expression as
\begin{align*}
(CX^*)_{\max} - X^* \cdot CX^* \leq \epsilon,
\end{align*}
where
\begin{align*}
(CX^*)_{\max} = \max\{ Y \cdot CX^* | Y \in \mathbb{X}(C) \}.
\end{align*}

Let us now give an important result on approximate equilibria. To that end, we need a definition:

\begin{definition}
\label{cool}
$(X^*, Y^*)$ is an $\epsilon$-well-supported Nash equilibrium of $(A, B)$ if
\begin{align*}
&E_i \cdot A Y^* > E_k \cdot A Y^* + \epsilon \Rightarrow X^*(k) = 0 \mbox{ and }\\
&X^* \cdot B E_j > X^* \cdot B E_k + \epsilon \Rightarrow Y^*(k) = 0.
\end{align*}
\end{definition}

Definition \ref{cool} is due to \cite{Daskalakis}. We note that an $\epsilon$-well-supported Nash equilibrium of $(A, B)$ is necessarily an $\epsilon$-approximate equilibrium of $(A, B)$ but the converse is not generally true. However, given an approximate equilibrium we can obtain a well-supported equilibrium:

\begin{proposition}
\label{cgood}
Let $(A, B)$ be such that $0 \leq A, B \leq 1$. Given an $\epsilon^2/8$-approximate Nash equilibrium of $(A, B)$, where $0 \leq \epsilon \leq 1$, we can find an $\epsilon$-well-supported Nash equilibrium in polynomial time.
\end{proposition}

The previous proposition is due to \citep{CDT} motivated by a related result in \citep{Daskalakis}. We have the following characterization of well-supported equilibria:

\begin{proposition}
\label{my_5_cents}
$(X^*, Y^*)$ is an $\epsilon$-well-supported Nash equilibrium of $(A, B)$ if and only if
\begin{align*}
X^*(i) > 0 &\Rightarrow E_i \cdot AY^* \geq \max_{k = 1}^m E_k \cdot A Y^* - \epsilon \mbox{ and }\\
Y^*(j) > 0 &\Rightarrow X^* \cdot BE_j \geq \max_{\ell = 1}^n X^* \cdot BE_\ell - \epsilon.
\end{align*}
\end{proposition}

\begin{proof}
The statement of the lemma is just the contrapositive of Definition \ref{cool}.
\end{proof}

In a symmetric bimatrix game, the previous proposition simplifies as:
\begin{proposition}
\label{my_10_cents}
$X^*$ is an $\epsilon$-well-supported symmetric equilibrium strategy of $(C, C^T)$ if and only if it is an $\epsilon$-approximate symmetric equilibrium strategy and
\begin{align*}
(\hat{C}X)_{\max} - (\hat{C}X)_{\min} \leq \epsilon
\end{align*}
where $\hat{C}$ is the carrier of $X^*$.
\end{proposition}

\subsection{Evolutionary stability}

An equilibrium notion in symmetric bimatrix games (and, therefore, also in doubly symmetric bimatrix games) of primary interest in this paper is the GESS (global evolutionarily stable strategy), which is a global version of the ESS \citep{TheLogicOfAnimalConflict, Evolution}. Of primary interest are aslo related equilibrium notions such as the NSS (neutrally stable strategy) and GNSS (global NSS). These concepts admit the following definitions:

\begin{definition}
Let $C \in \mathbb{C}$. We say $X^* \in \mathbb{X}(C)$ is an ESS, if 
\begin{align*}
\exists O \subseteq \mathbb{X}(C) \mbox{ } \forall X \in O / \{ X^* \} : X^* \cdot CX > X \cdot CX.
\end{align*}
Here $O$ is a neighborhood of $X^*$. If $O$ coincides with $\mathbb{X}(C)$, we say $X^*$ is a GESS. If the above inequality is weak we have an NSS and a GNSS respectively.
\end{definition}

The aforementioned definition of an ESS was originally obtained as a characterization \citep{HSS}. Note that an NSS, and, therefore, also an ESS, is necessarily a symmetric equilibrium strategy. The ESS and NSS admit the following characterizations. These characterizations correspond to how they were initially defined.

\begin{proposition}
\label{ess_1}
$X^*$ is an ESS of $(C, C^T)$ if and only if the following conditions hold simultaneously
\begin{align*}
X^* \cdot C X^* &\geq X \cdot C X^*, \mbox{ } \forall X \in \mathbb{X}(C), \mbox{ and }\\
X^* \cdot C X^* &= X \cdot C X^* \Rightarrow X^* \cdot C X > X \cdot C X, \mbox{ } \forall X \in \mathbb{X}(C)\mbox{ such that } X \neq X^*.
\end{align*}
An NSS correspond to a weak inequality.
\end{proposition}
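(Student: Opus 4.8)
The plan is to prove the equivalence directly from the definition of an ESS, translating the geometric ``attractive force'' condition $X^* \cdot CX > X \cdot CX$ on a neighborhood into the two algebraic conditions by exploiting the bilinearity of the potential and a standard compactness/scaling argument. First I would show the forward direction. Suppose $X^*$ is an ESS, so there is a neighborhood $O$ of $X^*$ with $X^* \cdot CX > X \cdot CX$ for all $X \in O \setminus \{X^*\}$. Fix an arbitrary $X \in \mathbb{X}(C)$ with $X \neq X^*$ and consider the line segment $X_\lambda = (1-\lambda)X^* + \lambda X$ for small $\lambda > 0$; for $\lambda$ small enough $X_\lambda \in O$. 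Expanding $X^* \cdot CX_\lambda - X_\lambda \cdot CX_\lambda$ using bilinearity (and here using $C = C^T$, so cross terms combine cleanly) gives an expression of the form $\lambda\,\big[(X^*-X)\cdot CX^*\big] + \lambda^2\,\big[(X^*-X)\cdot C(X^*-X)\big]/(\text{sign bookkeeping})$ — more precisely a linear term with coefficient $(X^*-X)\cdot CX^*$ plus a term of order $\lambda^2$. Positivity for all small $\lambda>0$ forces the linear coefficient to be nonnegative, which is exactly $X^* \cdot CX^* \geq X \cdot CX^*$; this is the first condition. For the second condition, suppose equality $X^* \cdot CX^* = X \cdot CX^*$ holds for some $X \neq X^*$; then the linear term vanishes, the leading behavior of $X^* \cdot CX_\lambda - X_\lambda \cdot CX_\lambda$ is governed by the $\lambda^2$ term, and positivity on $O \setminus \{X^*\}$ forces $(X^*-X)\cdot C(X^*-X) > 0$, which unwinds (again using symmetry of $C$) to $X^* \cdot CX > X \cdot CX$.

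For the converse, assume the two displayed conditions hold. I would take an arbitrary $X \neq X^*$ and again expand along $X_\lambda = (1-\lambda)X^* + \lambda X$. The quantity $X^* \cdot CX_\lambda - X_\lambda \cdot CX_\lambda$ equals $\lambda\big[(X^*-X)\cdot CX^*\big] + \lambda^2\big[(X^*-X)\cdot C(X^*-X)\big]$ (up to the precise constant from the expansion, which I would pin down carefully). If $(X^*-X)\cdot CX^* > 0$ strictly, this is positive for all sufficiently small $\lambda>0$ because the linear term dominates. If $(X^*-X)\cdot CX^* = 0$, then by the second hypothesis $(X^*-X)\cdot C(X^*-X) > 0$, so the $\lambda^2$ term is positive and again the whole expression is positive for small $\lambda > 0$. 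To get a single neighborhood $O$ that works for all $X$ simultaneously — rather than a $\lambda$-threshold depending on $X$ — I would use a compactness argument: the relevant ratios are continuous functions on the compact set of directions (or one argues via the standard equivalence that ``local strict superiority along all segments'' is itself equivalent to existence of a uniform neighborhood, using that $\mathbb{X}(C)$ is a simplex and the potential is a polynomial, so the set where the ESS inequality fails is closed and bounded away from $X^*$). The NSS case is identical with every strict inequality replaced by its weak counterpart, and the $\lambda^2$-term analysis is not even needed since weak inequality of the linear term suffices.

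The main obstacle I anticipate is the uniformity of the neighborhood $O$ in the converse direction: showing that pointwise local dominance along each segment can be upgraded to dominance on a common neighborhood. This is the place where one genuinely uses that $\mathbb{X}(C)$ is finite-dimensional and compact and that $X \cdot CX$ is a quadratic (hence real-analytic) form — a quadratic in two ``coordinates'' $(X^*, X)$ cannot oscillate, so the sign near $X^*$ is determined by the first nonvanishing coefficient, uniformly over the compact sphere of directions. I would make this precise by writing, for $X \neq X^*$ on the simplex, $f(X) := X^* \cdot CX - X \cdot CX$ and noting $f$ is continuous with $f \geq 0$ forced by the conditions along the segment from $X^*$; the set $\{X : f(X) \leq 0, X \neq X^*\}$ is relatively closed in $\mathbb{X}(C)\setminus\{X^*\}$, and I must rule out a sequence approaching $X^*$ on which $f \leq 0$ — which is exactly where the two algebraic conditions, via the $\lambda$/$\lambda^2$ expansion applied to the limiting direction, deliver a contradiction. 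Everything else is routine bilinear algebra, with the only subtlety being careful bookkeeping of the symmetry $C = C^T$ when combining the cross terms $X^* \cdot CX$ and $X \cdot CX^*$.
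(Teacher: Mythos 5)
The paper does not actually prove this proposition; it cites the equivalence as a known characterization (the Hofbauer--Schuster--Sigmund uniform-invasion-barrier definition versus the original Maynard Smith conditions). So you are on your own to supply a proof, and your approach — expansion along segments from $X^*$ — is the standard one. The cleanest form of the expansion is
\begin{align*}
X^* \cdot CX_\lambda - X_\lambda \cdot CX_\lambda \;=\; (X^* - X_\lambda)\cdot CX_\lambda \;=\; \lambda(1-\lambda)\,(X^*-X)\cdot CX^* \;+\; \lambda^2\,(X^*-X)\cdot CX,
\end{align*}
which holds for \emph{any} $C$ (you keep invoking symmetry of $C$, but it is not needed here). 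Note the $\lambda^2$ coefficient is literally $X^*\cdot CX - X\cdot CX$, exactly the second condition; you do not need the detour through $(X^*-X)\cdot C(X^*-X)$, which moreover carries a flipped sign in your write-up: since the $\lambda^2$ contribution is $-\lambda^2(X^*-X)\cdot C(X^*-X)$ after re-grouping, positivity forces $(X^*-X)\cdot C(X^*-X) < 0$, not $>0$. Your final conclusion comes out right, but the intermediate sign claim is wrong as stated. Your remark on the NSS case — that ``the $\lambda^2$-term analysis is not even needed since weak inequality of the linear term suffices'' — is incorrect: if the linear coefficient vanishes, the sign of the $\lambda^2$ coefficient is still decisive, and the weak second condition is exactly what makes it $\geq 0$.

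The genuine content is the uniformity step in the converse, and your sketch is on the right track but does not quite land the contradiction. Writing $X_n = X^* + \epsilon_n Z_n$ with $\|Z_n\|=1$, $\epsilon_n\to 0$, a violating sequence gives $Z_n\cdot CX^* + \epsilon_n\,Z_n\cdot CZ_n \geq 0$. The key move, which you elide, is to combine this with the Nash condition $Z_n\cdot CX^*\leq 0$ to get $\epsilon_n\,Z_n\cdot CZ_n \geq -Z_n\cdot CX^*\geq 0$, hence $Z_n\cdot CZ_n\geq 0$ and $Z_n\cdot CX^*\to 0$. Passing to a limiting direction $Z$, compactness gives $Z\cdot CX^*=0$ and $Z\cdot CZ\geq 0$; the second hypothesis applied to $X=X^*+\epsilon Z$ for small $\epsilon>0$ gives $Z\cdot CZ<0$, the contradiction. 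Without explicitly extracting the inequality $Z_n\cdot CZ_n\geq 0$ \emph{before} taking limits, the argument only yields $Z\cdot CZ=0$ and does not close. With that step inserted and the sign slips corrected, the proof is sound for the ESS case; the NSS case needs a separate (and somewhat more delicate) uniformity argument, because there the limit $Z\cdot CZ\geq 0$ together with $Z\cdot CZ\leq 0$ is not yet a contradiction and one must dig into higher-order behavior along the sequence.
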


The characterization of the ESS and NSS in Proposition \ref{ess_1} does not readily yield the global versions of GESS and GNSS that are of primary interest in this paper. Note finally that:

\begin{lemma}
\label{111a}
If $X^*$ is an equalizer, then $X^*$ is an ESS if and only if it is a GESS.
\end{lemma}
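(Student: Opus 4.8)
The statement is Lemma \ref{111a}: if $X^*$ is an equalizer, then $X^*$ is an ESS if and only if it is a GESS. Since a GESS is by definition an ESS, only the forward direction needs work. The plan is to exploit the fact that an equalizer makes the quadratic form $X \cdot CX$ behave like a concave (in fact affine-dominated) function along rays emanating from $X^*$, so that a local strict advantage of $X^*$ propagates globally.

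First I would write out what it means for the equalizer $X^*$ to be an ESS via Proposition \ref{ess_1}. The first (equilibrium) condition $X^* \cdot CX^* \geq X \cdot CX^*$ holds for \emph{all} $X \in \mathbb{X}(C)$ automatically, since $X^*$ is an equalizer (indeed with equality). So ESS reduces to the second-order condition: for every $X \neq X^*$ in some neighborhood $O$ of $X^*$, $X^* \cdot CX > X \cdot CX$. The goal is to upgrade ``for every $X$ in $O$'' to ``for every $X \in \mathbb{X}(C)$.'' The key computation: fix an arbitrary $X \in \mathbb{X}(C)$ with $X \neq X^*$ and consider the segment $X_\lambda = (1-\lambda)X^* + \lambda X$ for $\lambda \in (0,1]$. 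I would expand $X^* \cdot C X_\lambda - X_\lambda \cdot C X_\lambda$. Using bilinearity and the equalizer identity $X^* \cdot CX^* = Y \cdot CX^*$ for all $Y$ — in particular the cross terms involving $CX^*$ collapse — this difference should simplify to something of the form $\lambda^2 \bigl( X^* \cdot CX - X \cdot CX \bigr)$ (the first-order term in $\lambda$ vanishes precisely because $X^*$ is an equalizer, so $(X^* - X)\cdot CX^* = 0$). Concretely, $X^*\cdot CX_\lambda - X_\lambda \cdot CX_\lambda = X^*\cdot CX_\lambda - \bigl((1-\lambda)X^* + \lambda X\bigr)\cdot CX_\lambda = \lambda (X^* - X)\cdot CX_\lambda = \lambda(X^*-X)\cdot C\bigl((1-\lambda)X^* + \lambda X\bigr) = \lambda(1-\lambda)(X^*-X)\cdot CX^* + \lambda^2 (X^*-X)\cdot CX = \lambda^2 (X^* - X)\cdot CX$, where the $(X^*-X)\cdot CX^* = 0$ step is the equalizer property.

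Now the argument closes as follows. By the ESS hypothesis there is a neighborhood $O$ of $X^*$ such that $X^* \cdot CY > Y \cdot CY$ for all $Y \in O \setminus \{X^*\}$. For small enough $\lambda > 0$ the point $X_\lambda$ lies in $O$ and is distinct from $X^*$, so $X^* \cdot CX_\lambda - X_\lambda \cdot CX_\lambda > 0$; by the identity just derived this says $\lambda^2 (X^* - X)\cdot CX > 0$, hence $(X^* - X)\cdot CX > 0$, i.e. $X^* \cdot CX > X \cdot CX$. Since $X \in \mathbb{X}(C)\setminus\{X^*\}$ was arbitrary, $X^*$ satisfies the ESS second-order condition on all of $\mathbb{X}(C)$, and together with the (already global) equilibrium condition this is exactly the definition of a GESS. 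For the NSS analogue, if one only needs it, the same computation with weak inequalities works verbatim, though the lemma as stated only concerns ESS/GESS.

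The only mild subtlety — and the place I would be most careful — is making sure the neighborhood-to-global passage is airtight: one must pick $\lambda$ depending on $X$ small enough that $X_\lambda \in O$, which is fine since $X_\lambda \to X^*$ as $\lambda \to 0^+$ and $O$ is open in $\mathbb{X}(C)$, but one should also note $X_\lambda \neq X^*$ for $\lambda \in (0,1]$ because $X \neq X^*$. No compactness or continuity beyond this is needed; the whole content is the algebraic identity that the equalizer property kills the linear term in $\lambda$, turning a local strict inequality into a global one. I expect no real obstacle here — this is a short lemma whose proof is essentially the one displayed computation.
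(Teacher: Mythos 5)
Your proof is correct, and it is genuinely more explicit than what the paper does. The paper's entire proof reads ``Straightforward from Proposition~\ref{ess_1}'': for an equalizer the first condition of Proposition~\ref{ess_1} holds with equality for \emph{every} $X\in\mathbb{X}(C)$, so the antecedent of the implication in the second condition is always satisfied, and the consequent $X^*\cdot CX > X\cdot CX$ therefore holds for \emph{all} $X\neq X^*$---which is precisely the GESS condition. No segment computation is needed once one accepts Proposition~\ref{ess_1} as stated (it is the classical Maynard Smith/Hofbauer--Schuster--Sigmund characterization and its two clauses are already global). You instead worked directly from the paper's local neighborhood definition of ESS, derived the identity $X^*\cdot CX_\lambda - X_\lambda\cdot CX_\lambda = \lambda^2(X^*-X)\cdot CX$, and upgraded the strict inequality from the neighborhood $O$ to all of $\mathbb{X}(C)$. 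What your route buys is self-containedness: you do not need to invoke the ``only if'' direction of Proposition~\ref{ess_1} (which is itself a nontrivial equivalence between the neighborhood definition and the global second-order condition); you re-derive exactly the part of that equivalence that matters in the equalizer case. One small caveat: your phrase ``ESS reduces to the second-order condition: for every $X\neq X^*$ in some neighborhood $O$'' slightly misattributes a neighborhood restriction to Proposition~\ref{ess_1}, whose second clause is already quantified over all of $\mathbb{X}(C)$; you are actually drawing that neighborhood from the paper's Definition of ESS, not from Proposition~\ref{ess_1}. That does not affect the validity of the argument, but the citation should be to the Definition rather than the Proposition for the local form.
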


\begin{proof}
Straightforward from Proposition \ref{ess_1}.
\end{proof}

\begin{lemma}
\label{111b}
If $X^*$ is an equalizer, then $X^*$ is an NSS if and only if it is a GNSS.
\end{lemma}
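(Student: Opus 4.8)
The plan is to mirror the proof of Lemma \ref{111a}, replacing the strict inequality in the ESS characterization by the weak inequality that characterizes an NSS. Since a GNSS is by definition an NSS (take $O = \mathbb{X}(C)$), only the forward direction requires work: assuming $X^*$ is an equalizer and an NSS, I must upgrade the local neighborhood $O$ to all of $\mathbb{X}(C)$.

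First I would invoke Proposition \ref{ess_1} in its NSS form: $X^*$ being an NSS means that for all $X \in \mathbb{X}(C)$ we have $X^* \cdot CX^* \geq X \cdot CX^*$, and moreover whenever $X^* \cdot CX^* = X \cdot CX^*$ with $X \neq X^*$ we have $X^* \cdot CX \geq X \cdot CX$. Now I would use the equalizer hypothesis: by definition of $\mathbb{E}(C)$, we have $(X^* - X)\cdot CX^* = 0$ for every $X \in \mathbb{X}(C)$, i.e. the equality $X^* \cdot CX^* = X \cdot CX^*$ holds for every $X$, not just some. Consequently the second (conditional) clause of the NSS characterization fires for every $X \neq X^*$, yielding $X^* \cdot CX \geq X \cdot CX$ for all $X \in \mathbb{X}(C)$ with $X \neq X^*$ (and trivially with equality when $X = X^*$). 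That is exactly the defining inequality of a GNSS with $O = \mathbb{X}(C)$.

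The converse direction is immediate: a GNSS satisfies the NSS inequality on the neighborhood $O = \mathbb{X}(C)$, hence is in particular an NSS, and this requires no equalizer hypothesis. Combining the two directions completes the proof.

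There is essentially no obstacle here — the entire content is the observation that the equalizer condition makes the ``indifference'' hypothesis $X^* \cdot CX^* = X \cdot CX^*$ vacuously true for all $X$, so the weak-inequality clause in Proposition \ref{ess_1} propagates globally. The only point to be a little careful about is the degenerate case $X = X^*$, where the conclusion $X^* \cdot CX \geq X \cdot CX$ holds with equality and need not be extracted from the NSS clause; handling this is a one-line remark. I would keep the proof to two or three sentences, parallel in structure to the proof of Lemma \ref{111a}, and simply cite Proposition \ref{ess_1} together with the definition of an equalizer.
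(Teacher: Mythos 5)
Your argument is correct and matches what the paper intends: the paper gives no explicit proof of Lemma~\ref{111b}, but the parallel Lemma~\ref{111a} is proved simply by citing Proposition~\ref{ess_1}, and your proof spells out exactly that reasoning in its NSS (weak-inequality) form, using the equalizer hypothesis to make the indifference clause of Proposition~\ref{ess_1} hold for every $X \in \mathbb{X}(C)$ and thereby globalize the NSS inequality.
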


\section{A dynamical systems approach to maximum-clique computation}
\label{Hedge}

At the heart of our maximum-clique computation algorithm lies a dynamical system based on Hedge \citep{FreundSchapire1, FreundSchapire2} that induces the following map in our setting:
\begin{align*}
T_i(X) = X(i) \cdot \frac{\exp\left\{ \alpha E_i \cdot CX \right\}}{ \sum_{j=1}^n X(j) \exp \left\{ \alpha E_j \cdot CX \right\} } \equiv X(i) \cdot \frac{\exp\left\{ \alpha (CX)_i \right\}}{ \sum_{j=1}^n X(j) \exp \left\{ \alpha (CX)_j \right\} } \quad i = 1, \ldots, n,
\end{align*}
where $C$ is the payoff matrix of a symmetric bimatrix game, $n$ is the number of pure strategies, $E_i$ is the probability vector corresponding to pure strategy $i$, and $X(i)$ is the probability mass of pure strategy $i$. Parameter $\alpha$ is called the {\em learning rate}, which has the role of a {\em step size}. Our algorithm also generates iterates using the {\em discrete-time replicator dynamic,} that is, the map
\begin{align*}
J_i(X) = X(i) \cdot \frac{(CX)_i}{ X \cdot CX } \quad i = 1, \ldots, n,
\end{align*}
as necessary. It is easy to show that the fixed points $X$ of $J$ satisfy
\begin{align*}
\forall i \in \mathcal{C}(X) : (CX)_i = X \cdot CX
\end{align*}
a condition that is equivalent to
\begin{align*}
\forall i, j \in \mathcal{C}(X) : (CX)_i = (CX)_j.
\end{align*}
The fixed points of $T$ coincide with the fixed points of $J$:

\begin{lemma}
\label{fixed_points_Hedge}
$X$ is a fixed point of $T$ if and only if $X$ is a pure strategy or otherwise
\begin{align*}
\forall i, j \in \mathcal{C}(X) : (CX)_{i} = (CX)_{j}. 
\end{align*}
\end{lemma}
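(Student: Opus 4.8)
The plan is to prove both directions of the equivalence directly from the explicit formula for $T$, treating the normalizing denominator as a single positive scalar. Write $Z(X) = \sum_{j=1}^n X(j) \exp\{\alpha (CX)_j\}$, which is strictly positive whenever $X \in \mathbb{X}(C)$, so that $T_i(X) = X(i) \exp\{\alpha (CX)_i\} / Z(X)$. The fixed-point condition $T_i(X) = X(i)$ for all $i$ then reads $X(i)\big(\exp\{\alpha (CX)_i\} - Z(X)\big) = 0$ for every $i$, i.e. for each $i$ either $X(i) = 0$ or $\exp\{\alpha (CX)_i\} = Z(X)$. Since $\exp$ is injective and $\alpha \neq 0$, the latter says $(CX)_i = (1/\alpha)\log Z(X)$, a constant independent of $i$. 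Hence $X$ is a fixed point iff $(CX)_i$ takes a common value on the carrier $\mathcal{C}(X)$.

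From here the two directions are short. For the ``if'' direction: if $X$ is a pure strategy $E_i$, then $T_i(X) = X(i)\exp\{\alpha(CX)_i\}/(X(i)\exp\{\alpha(CX)_i\}) = 1 = X(i)$ and $T_j(X) = 0 = X(j)$ for $j \neq i$, so $E_i$ is a fixed point; alternatively one notes a pure strategy has a singleton carrier so the displayed equalizer-on-carrier condition holds vacuously and is subsumed by the general case. If instead $(CX)_i = (CX)_j =: c$ for all $i,j \in \mathcal{C}(X)$, then for $i \in \mathcal{C}(X)$ we have $\exp\{\alpha(CX)_i\} = e^{\alpha c}$, and $Z(X) = \sum_{j \in \mathcal{C}(X)} X(j) e^{\alpha c} = e^{\alpha c}$ because the $X(j)$ over the carrier sum to $1$; thus $T_i(X) = X(i) e^{\alpha c}/e^{\alpha c} = X(i)$ for $i \in \mathcal{C}(X)$, and $T_j(X) = 0 = X(j)$ for $j \notin \mathcal{C}(X)$. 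For the ``only if'' direction: suppose $X$ is a fixed point and not a pure strategy, so $|\mathcal{C}(X)| \geq 2$. Pick any $i, j \in \mathcal{C}(X)$; since $X(i), X(j) > 0$, the fixed-point relations $X(i)\exp\{\alpha(CX)_i\} = X(i) Z(X)$ and $X(j)\exp\{\alpha(CX)_j\} = X(j) Z(X)$ give $\exp\{\alpha(CX)_i\} = Z(X) = \exp\{\alpha(CX)_j\}$, and injectivity of $\exp$ together with $\alpha \neq 0$ yields $(CX)_i = (CX)_j$.

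I anticipate no substantial obstacle: the only things to be careful about are (i) recording that $Z(X) > 0$ so division is legitimate, (ii) noting the implicit standing assumption $\alpha \neq 0$ (with $\alpha > 0$ as stated for the learning rate this is immediate, but I would state it to keep the equivalence honest), and (iii) handling the pure-strategy case cleanly — it is genuinely a separate clause of the statement because a singleton carrier makes the ``$\forall i,j \in \mathcal{C}(X)$'' condition vacuous, and one should observe that in that case $X$ is automatically a fixed point, which is why the lemma phrases it as a disjunction rather than folding it into the second condition. The argument is essentially the standard computation identifying fixed points of multiplicative-weights/replicator maps, adapted verbatim to the exponential weights of Hedge.
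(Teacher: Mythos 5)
Your proof is correct and follows essentially the same route as the paper's: isolate the common normalizer, observe that the fixed-point equation forces $\exp\{\alpha(CX)_i\}$ to equal that normalizer on the carrier, and conclude by injectivity of $\exp$. Your handling of the pure-strategy clause (noting it is vacuous under the carrier condition and so subsumed) and your explicit remark that $\alpha > 0$ is needed for injectivity are minor tidinesses beyond what the paper writes, but the underlying argument is the same.
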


\begin{proof}
First we show sufficiency, that is, if for all $i, j \in \mathcal{C}(X)$, $(CX)_{i} = (CX)_{j}$,
then $T(X) = X$:
Some of the coordinates of $X$
are zero and some are positive. Clearly, 
the zero coordinates will not become positive after applying $T$. 
Now, notice that, for all $i \in \mathcal{C}(X)$, $\exp\{\alpha (CX)_i\} = \sum_{j = 1}^n X(j) \exp\{\alpha (CX)_j\}$. Therefore, $T(X) = X$.

Now we show necessity, that is,
if $X$ is a fixed point of $T$, 
then for all $i$ and for all $i, j \in \mathcal{C}(X)$, $(CX)_i = (CX)_j$:
Let $\hat{X}(i) = T_i(x)$.
Because $X$ is a fixed point, $\hat{X}(i) = X(i)$. Therefore,
{\allowdisplaybreaks
\begin{align}
\hat{X}(i) &= X(i)\notag\\
\frac{X(i) \exp{\{\alpha (CX)_i\}}}{\sum_{j} X(j) \exp{\{\alpha (CX)_j\}}} &= X(i)\notag\\
\exp{\{\alpha (CX)_i\}} &= \sum_{j} X(j) \exp{\{\alpha (CX)_j\}},\notag
\end{align}}
which implies 
\[\exp{\{\alpha ((CX)_i - (CX)_j)\}} = 1, X(i) > 0,\]
and, thus,
\[(CX)_i = (CX)_j, X(i) > 0.\]
This completes the proof.
\end{proof}

Hedge can be derived as the dual of the optimization problem\footnote{See \citep{Bowen} for a related result.}
\begin{align*}
\mbox{ minimize } &RE(Y, X) - \alpha Y \cdot CX\\
\mbox{ subject to } &Y \in \mathbb{X}(C).
\end{align*}
Let us prove this: The Lagrangian is
\begin{align}
L(Y, \lambda) = RE(Y, X) - \alpha Y \cdot CX + \lambda (\mathbf{1}^T Y - 1),\label{langrangian}
\end{align}
where we assume that the constraint $Y \geq 0$ is implicit. The dual function $L_D(\lambda)$ is obtained by minimizing the Lagrangian $L(Y, \lambda)$:
\begin{align*}
L_D(\lambda) = \inf_{Y \in \mathbb{R}^n} \{RE(Y, X) - \alpha Y \cdot CX + \lambda (\mathbf{1}^T Y - 1)\}.
\end{align*}
The Lagrangian is minimized when the gradient is zero. Observing to that end that
\begin{align*}
\frac{\partial RE(Y, X)}{\partial Y(i)} = \frac{\partial }{\partial Y(i)} \left( Y(i) \ln \frac{Y(i)}{X(i)} \right) = 1 + \ln \frac{Y(i)}{X(i)}
\end{align*}
and
\begin{align*}
\frac{\partial }{\partial Y(i)} (Y \cdot CX) = (CX)_i,
\end{align*}
we obtain that the gradient is zero when
\begin{align*}
1 + \ln \frac{Y(i)}{X(i)} - \alpha (CX)_i + \lambda = 0.
\end{align*}
Solving for $Y(i)$ in the previous expression, we obtain
\begin{align}
Y(i) = X(i) \exp\{ - 1 - \lambda + \alpha (CX)_i  \} \quad i = 1, \ldots, n.\label{firm}
\end{align}
Note that, by the previous expression, the constraint $Y \geq 0$ is automatically satisfied. Substituting now the previous expression for $Y$ in \eqref{langrangian}, we obtain the dual function
\begin{align*}
L_D(\lambda) = &\sum_{i=1}^n X(i) \exp\{ - 1 - \lambda + \alpha (CX)_i \} (- 1 - \lambda + \alpha (CX)_i) -\\
  &- \alpha \sum_{i=1}^n X(i) \exp\{ - 1 - \lambda + \alpha (CX)_i  \} (CX)_i + \\
  &+ \lambda \left(\sum_{i=1}^n X(i) \exp\{ - 1 - \lambda + \alpha (CX)_i \} - 1\right),
\end{align*}
which simplifies to
\begin{align*}
L_D(\lambda) = &- \sum_{i=1}^n X(i) \exp\{ - 1 - \lambda + \alpha (CX)_i \} -\lambda.\\
\end{align*}
This function is concave in $\lambda$. Since the dual function is concave, to find the optimal $\lambda$ we simply need to set the derivative of $L_D(\lambda)$ (with respect to $\lambda$) equal to $0$. To that end, we have
\begin{align*}
\frac{d L_D(\lambda)}{d \lambda} = \sum_{i=1}^n X(i) \exp\{ - 1 - \lambda + \alpha (CX)_i \} - 1 = 0,
\end{align*}
which implies
\begin{align*}
\exp\{ - 1 - \lambda \} = \frac{1}{\sum_{i=1}^n X(i) \exp\{ \alpha (CX)_i \}}.
\end{align*}
Substituting in \eqref{firm} we obtain
\begin{align*}
Y(i) = X(i) \frac{ \exp\{\alpha (CX)_i\}}{\sum_{i=1}^n X(i) \exp\{ \alpha (CX)_i \}} \quad i=1, \ldots, n
\end{align*}
as claimed.\\

Solving the maximum clique problem using Hedge is an approach also taken by \cite{Pelillo}, where Hedge is referred to as ``exponential replicator dynamic'' in that paper. Hedge is reported in that paper to be dramatically faster than the discrete-time replicator dynamic and even more accurate. However, a blanket application of this dynamic can compute a maximal (instead of maximum) clique. The techniques considered by \cite{Pelillo} to enhance the efficacy of the approach do not provably compute a maximum clique (as we do in this paper).

\section{Ariadne: The primary sequence of iterates}
\label{Algorithm}

\setlength{\epigraphwidth}{0.75\textwidth}
\epigraph{``It seems that for the maximum clique problem a {\em good} formulation of the problem is of crucial importance in solving the problem.''}{--- \textup{P. M. Pardalos and J. Xue}}

In this section, we define the ``backbone'' of our maximum-clique computation algorithm.

\subsection{The Nisan game as the evolution space of our dynamical system}

The evolution space of our dynamical system is a subset of the evolution space of the Nisan game, but before defining what this evolution space is, let us start by defining the Nisan game first. Given an undirected graph $G(V, E)$, where $|V| = n$, and an integer $1 < k \leq n$, consider the following $(n+1) \times (n+1)$ symmetric matrix $C^+$: $C^+$'s rows and columns correspond to the vertices of $V$, numbered $1$ to $n$, with an additional row and column, numbered $0$.
\begin{itemize}

\item For $1 \leq i \neq j \leq n$: $C^+(i, j) = 1$ if $(i, j) \in E$ and $C^+(i, j) = 0$ if $(i, j) \not\in E$.

\item For $1 \leq i \leq n$: $C^+(i, i) = 1/2$.

\item For $1 \leq i \leq n$: $C^+(0, i) = C^+(i, 0) = 1 - \frac{1}{2k}$.

\item $C^+(0, 0) = 1 - \frac{1}{2k} \equiv C_{00}$.

\end{itemize}
That is, $C^+$ consists of a symmetric adjacency matrix of $0$'s and $1$'s with the value $1/2$ on the main diagonal and an extra strategy whose corresponding payoff entries are identical and equal to the potential value of a clique of size $k$. We refer to this matrix as the {\em Nisan-Bomze payoff matrix}. 

Cliques can be identified with their {\em characteristic vectors,} that is, uniform strategies over their corresponding carrier (a property retained from \cite{Motzkin-Strauss}). Every characteristic vector (of a clique) is a fixed point of $T$ (cf. Section \ref{Hedge}). We call $k$ the {\em Nisan parameter}. 

Considering the doubly symmetric game whose payoff matrix is $C^+$, one of Nisan's main results \citep{Nisan-ESS} is that strategy $0$ (which we also denote by $E_0$) is an ESS if and only if the maximum clique of $G$ is less than $k$. Note that $E_0$ is an equalizer and, therefore, it is an ESS if and only if it is a GESS (cf. Lemma \ref{111a}). If the Nisan parameter is such that $E_0$ is a GESS it is easily shown that it is the unique equilibrium of the game. If the Nisan parameter is equal to the clique number, then other equilibria appear, namely, an equilibrium for every maximum clique (which is a global maximizer of the quadratic potential such as $E_0$ is) and a corresponding equilibrium line (of global maximizers) with terminal points $E_0$ and the respective {\em maximum-clique equilibrium}.

We use $C$ to denote the $n \times n$ matrix obtained from $C^+$ by excluding strategy $0$. We denote the probability simplex of $C$ by $\mathbb{Y}$. The payoff matrix whereby iterates are generated is obtained from $C$ by adding a positive constant matrix (for example, a matrix all of whose entries are equal to one) and scaling with a positive scalar (for example, two) such that the maximum payoff entry over the minimum payoff entry is equal to a constant (for example, two). The proof our algorithm runs in polynomial time requires this technical manipulation. In the sequel, we assume $C$ has been transformed in this fashion: One has been added to every element and the matrix has been divided by two. We also assume that parameter $C_{00}$ has been accordingly adjusted.

There is a way to generalize the previous construction. To that end, let $0 < \omega < 1$ and define a matrix $C^+_\omega$ such that:
\begin{itemize}

\item For $1 \leq i \neq j \leq n$: $C^+_\omega(i, j) = 1$ if $(i, j) \in E$ and $C^+_\omega(i, j) = 0$ if $(i, j) \not\in E$.

\item For $1 \leq i \leq n$: $C^+_\omega(i, i) = \omega$.

\item For $1 \leq i \leq n$: $C^+_\omega(0, i) = C^+_\omega(i, 0) = 1 - \frac{1-\omega}{k}$.

\item $C^+_\omega(0, 0) = 1 - \frac{1-\omega}{k}$.

\end{itemize}
We may refer to the game corresponding to this payoff matrix as the {\em generalized Nisan game} that has properties analogous to the Nisan game (that is, the generalized Nisan game corresponding to $\omega = 1/2$). The benefit of considering the generalized Nisan game is that if $C$ (without adding the constant matrix) is not invertible (which may happen if the corresponding adjacency matrix has the eigenvalue $-1/2$ as follows from elementary matrix theory), there exists $\omega$ such that $C_\omega$, the $n \times n$ matrix obtained from $C^+_\omega$ by excluding strategy $0$, is invertible. The invertibility of $C$ is essential in Lemma \ref{forever}. To avoid cluttering the notation, we assume that $C$ is invertible.

\subsection{Evolution inside a desirable ``feasibility set''}

Our algorithm uses up to three dynamical systems, namely, a {\em preliminary,} a {\em primary,} and a {\em secondary.} The preliminary dynamical system is used to initialize the primary (which is our algorithm's ``heart''). The secondary system is activated upon the iterates of the primary dynamical system entering a neighborhood of a maximum-clique equilibrium (when the Nisan parameter is equal to the clique number). To ensure the iterates compute a maximum clique when the Nisan parameter becomes equal to the clique number, we restrict the evolution space of our primary and secondary systems. We call the restricted evolution space the ``lower feasibility set'' (noting that the iterates of the primary dynamical system may temporarily escape to the ``upper feasibility set''). The complement of the lower feasibility set consists of the ``upper feasibility set'' and the ``infeasibility set'' (see Figure \ref{feasibility_sets}). Our algorithm ensures that the iterates of the primary and secondary dynamical systems are initialized (using the preliminary system) and remain in the lower feasibility set (barring possible transient excursions of the primary and secondary dynamical systems to the ``upper feasibility set''). Let us define these sets precisely:

\begin{figure}[tb]
\centering
\includegraphics[width=14cm]{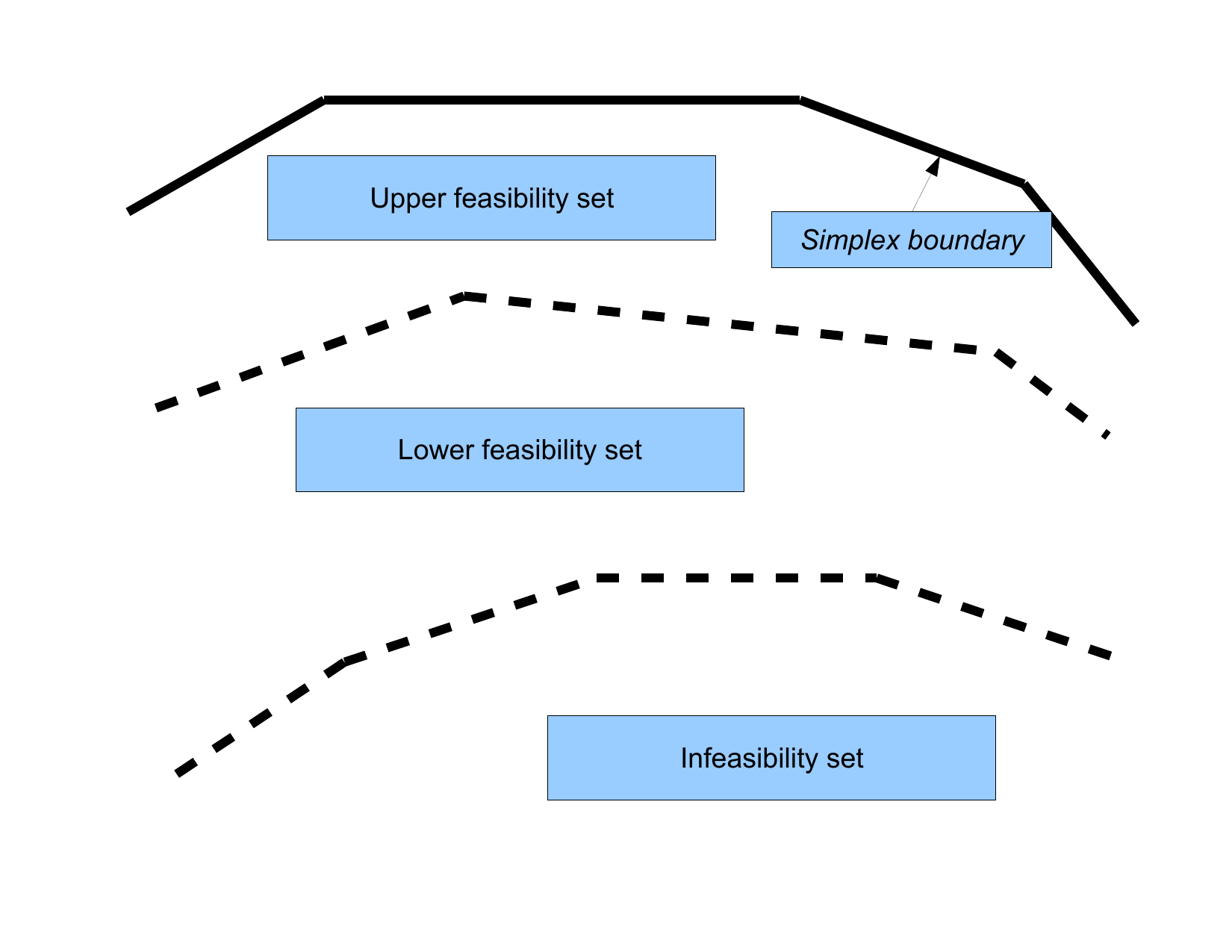}
\caption{\label{feasibility_sets}
The feasibility and infeasibility sets.}
\end{figure}

\subsubsection{Definition of the feasibility and infeasibility sets}

We denote the lower feasibility set by $\mathbb{F_L}$, the upper feasibility set by $\mathbb{F_U}$ and the infeasibility set by $\mathbb{I}$ and define them as
\begin{align*}
\begin{cases} \mathbb{F_U} &\equiv \left\{ X \in \mathbb{Y} \big| \max_{i=1}^n \left\{ (CX)_i \right\} > C_{00} + \varepsilon \right\}\\
\mathbb{F_L} &\equiv \left\{ X \in \mathbb{Y} \big| C_{00} - \varepsilon' \leq \max_{i=1}^n \left\{ (CX)_i \right\} \leq C_{00} + \varepsilon \right\}\\
\mathbb{I} &\equiv \left\{ X \in \mathbb{Y} \big| \max_{i=1}^n \left\{ (CX)_i \right\} < C_{00} - \varepsilon' \right\}\end{cases}.
\end{align*} 
All sets are polytopes and the infeasibility set $\mathbb{I}$ is also convex. We will specify values for $\varepsilon$ and $\varepsilon'$ in the sequel. Let us note for now that these parameters are chosen such that only maximum clique equilibria and no other equilibrium fixed points capable of attracting the iterates of our dynamical system can be located in the lower feasibility set $\mathbb{F_L}$. $X \in \mathbb{Y}$ is called strictly upper feasible if 
\begin{align*}
\max_{i=1}^n \left\{ (CX)_i \right\} > C_{00} + \varepsilon
\end{align*}
and strictly lower feasible if
\begin{align*}
C_{00} - \varepsilon' < \max_{i=1}^n \left\{ (CX)_i \right\} < C_{00} + \varepsilon.
\end{align*}

\if(0)

\begin{lemma}
\label{first_feasibility_lemma}
Suppose $X^*$ is a maximal-clique equilibrium and the Nisan parameter $k$ satisfies $k > X^* \cdot CX^* = \omega(G)$. Then $X^* \not\in \mathbb{F_L}$. Furthermore if $k = X^* \cdot CX^*$, $X^*$ is strictly lower feasible.
\end{lemma}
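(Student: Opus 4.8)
The plan is to reduce the whole statement to a single computation of $\max_{i=1}^n (CX^*)_i$, where $X^*$ is the characteristic vector of a maximal clique $Q$, and then compare this quantity with $C_{00}=1-\tfrac{1}{2k}$. First I would dispense with the ``implicit transformation'': the algorithm's matrix is obtained from $C$ by adding a constant matrix $B$ (all entries equal to some $\beta\ge 0$) and scaling by a positive scalar, and since $BX=\beta\mathbf{1}$ for every $X\in\mathbb{Y}$, this affine map shifts all coordinates of $CX^*$ by the same amount and transforms $C_{00}$ in exactly the same way. Hence it preserves the ordering among the $(CX^*)_i$ and every one of the comparisons ``$<$'', ``$=$'', ``$>$'' with $C_{00}$ that define $\mathbb{F_U},\mathbb{F_L},\mathbb{I}$; so it suffices to argue with the untransformed $C$ (adjacency matrix with $\tfrac12$ on the diagonal).

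Next I would carry out the key computation. Write $m=|Q|$; the hypothesis ``$X^*\cdot CX^*=\omega(G)$'' records that this maximal clique is in fact maximum, $m=\omega(G)$, and $X^*\cdot CX^*=1-\tfrac{1}{2m}$. For $i\in Q$, $(CX^*)_i=\sum_{j\in Q,\,j\neq i}\tfrac1m+\tfrac12\cdot\tfrac1m=\tfrac{m-1}{m}+\tfrac{1}{2m}=1-\tfrac{1}{2m}$. For $i\notin Q$, maximality of $Q$ forces $i$ to be non-adjacent to at least one $v\in Q$, so among the off-diagonal entries $C(i,j)$, $j\in Q$, at least one vanishes and the rest are at most $1$; hence $(CX^*)_i\le\tfrac{m-1}{m}=1-\tfrac1m<1-\tfrac{1}{2m}$. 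Therefore $\max_{i=1}^n(CX^*)_i=1-\tfrac{1}{2m}$, attained exactly on $Q$, and in particular this maximum equals $X^*\cdot CX^*$ (as one expects for a fixed point of $T$ supported on an equalizing carrier).

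Then I would finish with two comparisons. If $k>m$ (the case $k>X^*\cdot CX^*=\omega(G)$), then $\tfrac{1}{2k}<\tfrac{1}{2m}$, so $\max_i(CX^*)_i=1-\tfrac{1}{2m}<1-\tfrac{1}{2k}=C_{00}$, whence $X^*\notin\mathbb{F_U}$; moreover $C_{00}-\max_i(CX^*)_i=\tfrac{1}{2m}-\tfrac{1}{2k}=\tfrac{k-m}{2mk}\ge\tfrac{1}{2mk}>\tfrac{1}{2n^2}$ (using $1\le m<k\le n$ and $m\le n-1$, since $m=n$ would make $G$ complete and preclude $k>m$), so as soon as $\varepsilon$ is taken at or below $\tfrac{1}{2n^2}$ — consistent with the value specified later — we get $\max_i(CX^*)_i<C_{00}-\varepsilon$, i.e. $X^*\in\mathbb{I}$ and hence $X^*\notin\mathbb{F_L}$. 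For the furthermore part, if $k=X^*\cdot CX^*$, i.e. $k=m=\omega(G)$, then $C_{00}=1-\tfrac{1}{2k}=1-\tfrac{1}{2m}=\max_i(CX^*)_i$, which is precisely the definition of $X^*$ being weakly feasible.

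The only non-routine point — and the place I expect to have to be careful — is the ``$i\notin Q$'' estimate: maximality of $Q$ must be used exactly to conclude that each outside vertex misses some clique vertex, so that the row-sum of $C$ over $Q$ loses a full unit; without maximality one could have $(CX^*)_i=1>1-\tfrac{1}{2m}$ and the whole conclusion would collapse. A secondary, more cosmetic issue is that the bare inequality $\max_i(CX^*)_i<C_{00}$ only delivers $X^*\notin\mathbb{F_U}$; upgrading to $X^*\notin\mathbb{F_L}$ needs $\varepsilon$ to be below the (explicitly bounded) gap, so the proof should either quote the forthcoming choice of $\varepsilon$ or, more conservatively, record the sharper fact that $X^*$ lies strictly below $C_{00}$ by at least $\tfrac{1}{2n^2}$, hence in $\mathbb{I}$ for every sufficiently small $\varepsilon$.
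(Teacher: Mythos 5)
Your proof is correct and takes essentially the same route as the paper's: the paper's argument also reduces to the identity $X^*\cdot CX^*=\max_{i}(CX^*)_i$ for a maximal-clique characteristic vector and to comparing this value against $C_{00}=1-\tfrac{1}{2k}$, except that the paper simply \emph{asserts} both the identity and the strict inequality $\max_i(CX^*)_i<C_{00}-\varepsilon$ without the vertex-by-vertex computation or the use of maximality; you have filled in exactly the details the paper leaves implicit (and your observation that maximality is what forces $(CX^*)_i\leq 1-\tfrac1m$ for $i\notin Q$ is the right one).

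The one point to fix is the claim that $\varepsilon\leq\tfrac{1}{2n^2}$ is ``consistent with the value specified later.'' The paper sets
\begin{align*}
\varepsilon \;=\; C_{00}-\mathsf{C}_\ell \;=\; \frac{1}{4}\left(\frac{1}{k-1}-\frac{1}{k}\right) \;=\; \frac{1}{4k(k-1)},
\end{align*}
which for small $k$ and large $n$ (e.g.\ $k=2$, $n=5$) exceeds $\tfrac{1}{2n^2}$. The comparison you actually need is $\varepsilon<\frac{k-m}{2mk}$, and this \emph{does} hold for the paper's choice: since $m\leq k-1$,
\begin{align*}
\frac{k-m}{2mk}\;\geq\;\frac{1}{2(k-1)k}\;>\;\frac{1}{4(k-1)k}\;=\;\varepsilon,
\end{align*}
so $X^*\in\mathbb{I}$ as you intended. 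Replacing the $\tfrac{1}{2n^2}$ detour with this direct comparison closes the issue; the rest of the argument is sound.
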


\begin{proof}
If $k > X^* \cdot CX^*$, then 
\begin{align*}
X^* \cdot CX^* = \max_{i=1}^n \left\{ (CX^*)_i \right\} < C_{00} - \varepsilon
\end{align*}
and the definition of the feasibility set implies $X^* \not\in \mathbb{F_L}$ as claimed. If $k = X^* \cdot CX^*$, then
\begin{align*}
X^* \cdot CX^* = \max_{i=1}^n \left\{ (CX^*)_i \right\} = C_{00},
\end{align*}
which implies $X^*$ is strictly lower feasible.
\end{proof}

\fi

\subsubsection{Preliminaries of the method by which iterates remain within the lower feasibility set}

The primary mechanism by which we restrict evolution of the iterates to the interior of the lower feasibility set $\mathbb{F_L}$ is by means of two {\em barrier functions}.\footnote{See \citep[Chapter 4]{Bertsekas} on {\em Lagrange multiplier algorithms} (see also \citep{Bertsekas-Lagr}) where barrier functions are discussed in their elementary form in conjunction with {\em interior point algorithms}.} The first primary barrier function is a function $\mathsf{G} : \mathbb{Y} \rightarrow \mathbb{R}$ where
\begin{align*}
\mathsf{G}(X) = \ddfrac{X \cdot CX - \mathsf{C}}{ \left( \max_{i=1}^n \left\{ (CX)_i \right\}  - \mathsf{C}_\ell \right) \prod_{i=1}^n \left( \mathsf{C}_u - (CX)_i \right) } \quad X \in \mathbb{F_L}
\end{align*}
where $\mathsf{C} = C_{00}$, $\mathsf{C}_\ell = C_{00} - \varepsilon'$, and $\mathsf{C}_\ell = C_{00} + \varepsilon$. Using the product in the denominator of the barrier function, instead of
\begin{align*}
\mathsf{C}_u - \max_{i=1}^n \left\{ (CX)_i \right\},
\end{align*}
is a trick that facilitates our subsequent analysis and appears in a related form in \citep[Proposition 3.3.10]{Bertsekas}. The second primary barrier function is
\begin{align*}
\mathsf{G}'(X) = \ddfrac{\left( X \cdot CX - \mathsf{C} \right) \left( \frac{1}{2} X \cdot X \right)}{ \left( \max_{i=1}^n \left\{ (CX)_i \right\}  - \mathsf{C}_\ell \right) \prod_{i=1}^n \left( \mathsf{C}_u - (CX)_i \right) } \quad X \in \mathbb{F_L}
\end{align*}
with the same parameters as above. We call $\mathsf{G}$-feasible any $X \in \mathbb{Y}$ which is strictly lower feasible that is
\begin{align*}
\mathsf{C}_\ell < \max_{i=1}^n \left\{ (CX)_i \right\} < \mathsf{C}_u.
\end{align*}
We assume that the initial condition of the primary dynamical system is $\mathsf{G}$-feasible and we will prove that the remaining iterates remain so, that is,
\begin{align*}
\forall k \geq 0 : \mathsf{C}_\ell < \max_{i=1}^n \left\{ (CX^k)_i \right\} < \mathsf{C}_u,
\end{align*}
barring excursions to the upper feasibility set. In the previous expression, we start counting iterations from $0$ for simplicity. The secondary dynamical system is discussed in Section \ref{secondary_system}.

\subsection{How iterates (typically) remain in the lower feasibility set}

The discrete-time replicator dynamic $J$ is a {\em growth transformation} for any polynomial with nonnegative coefficients \citep{Baum-Eagon, Baum-Sell}. That is, the discrete-time replicator dynamic strictly ascends polynomials with nonnegative coefficients except at fixed points wherein the value of the polynomial remains constant. This implies that in a doubly symmetric bimatrix game $C$, $\forall X \in \mathbb{X}(C)$ such that $X$ is not a fixed point, this dynamic ascends the potential function $\mathsf{P} : \mathbb{X}(C) \rightarrow \mathbb{R}$, where $\mathsf{P}(X) = X \cdot CX$. In the Appendix, we prove that $T$ (cf. Section \ref{Hedge}) is a growth transformation for $\mathsf{P}$ $\forall \alpha > 0$ provided $0 \leq C \leq 1$ (element-wise). We ensure that the iterates of our dynamical system remain $\mathsf{G}$-feasible and to that end we design corresponding growth transformations for $\mathsf{G}$ and $\mathsf{G'}$. To that end, we extend a result by \cite{Gopalakrishnan}, namely, that rational functions admit growth transformations, which can be obtained by growth transformations for corresponding polynomials. One of their results is that:

\begin{proposition}
\label{Gopalakrishnan_proposition}
Given a domain
\begin{align*}
D = \left\{ X \in \mathbb{R} \bigg| \sum_{j=1}^{q_i} X_{ij} = 1 \quad X_{ij} \geq 0 \quad i=1, \ldots, p \quad j =1, \ldots, q_i \right\}
\end{align*}
and a rational function $R(X) = S_1(X) \big/ S_2(X)$ where $S_1(X)$ and $S_2(X)$ are polynomials with real coefficients and $S_2(X)$ has only positive values in $D$, for any $X \in D$, there exists a polynomial $P_q(X)_{q \leftarrow X}$ parametrized by $X$ such that
\begin{align*}
P_q(Y)_{q \leftarrow X} > P_q(X)_{q \leftarrow X} \Rightarrow R(Y) > R(X)
\end{align*}
and for this it is enough to set 
\begin{align*}
P_q(X)_{q \leftarrow X} = S_1(X) - R(q)_{q \leftarrow X} S_2(X).
\end{align*}
\end{proposition}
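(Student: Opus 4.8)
The plan is a direct verification that the explicitly proposed polynomial works; no auxiliary machinery is needed, and the product-of-simplices structure of $D$ plays no role in this proof (it matters only for the later constructions that invoke this proposition). Fix $X \in D$ once and for all. Since $S_2$ is positive throughout $D$ we have $S_2(X) > 0$, so the scalar $c \doteq R(q)_{q \leftarrow X} = S_1(X)/S_2(X)$ is well defined, and
\begin{align*}
P_q(Y)_{q \leftarrow X} \doteq S_1(Y) - c\, S_2(Y)
\end{align*}
is a genuine polynomial in $Y$ with real coefficients (the coefficients of $S_1$ shifted by $-c$ times those of $S_2$), exactly the choice prescribed in the statement.

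First I would evaluate this polynomial at the base point: substituting $Y = X$ gives $P_q(X)_{q \leftarrow X} = S_1(X) - c\,S_2(X) = S_1(X) - (S_1(X)/S_2(X))S_2(X) = 0$. Hence the hypothesis of the asserted implication reads $S_1(Y) - c\,S_2(Y) > 0$, i.e.\ $S_1(Y) > c\, S_2(Y)$. Now for $Y \in D$ we again have $S_2(Y) > 0$, so dividing by $S_2(Y)$ preserves the inequality and yields $S_1(Y)/S_2(Y) > c$, that is $R(Y) > R(X)$. This proves the implication and, simultaneously, shows that the prescribed $P_q(\cdot)_{q \leftarrow X}$ suffices, establishing existence.

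There is essentially no obstacle here; the single subtlety is that the implication is only claimed --- and only meaningful --- for $Y$ ranging over $D$, since that is exactly where the positivity hypothesis on $S_2$ is used. One should note, for the sake of later sections, that $P_q(\cdot)_{q \leftarrow X}$ typically has coefficients of both signs, so Baum--Eagon-type growth theorems do not apply to it verbatim; turning this proposition into an actual growth transformation for $R$ therefore requires the additional arguments developed elsewhere in the paper and is not part of the present statement.
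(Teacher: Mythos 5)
Your proof is correct, and the approach is the natural one: since $S_2 > 0$ on $D$, the constant $c = R(X)$ is well defined, the polynomial $P_q(\cdot)_{q \leftarrow X} = S_1(\cdot) - c\,S_2(\cdot)$ vanishes at $X$, and positivity at $Y$ plus $S_2(Y) > 0$ gives $R(Y) > R(X)$ by division. The paper itself gives no proof of this proposition — it is cited verbatim from Gopalakrishnan et al.\ with the phrase ``They show that'' — so there is nothing in the source to compare against; your self-contained verification is exactly what a reader of the paper would want. Your closing remark is also worth keeping in mind for the rest of the paper: the induced polynomial $P_q(\cdot)_{q \leftarrow X}$ generally has coefficients of both signs, so neither the Baum--Eagon theorem (Proposition~\ref{Baum_Eagon}) nor the paper's Hedge extensions (Lemmas~\ref{Baum_Eagon_extension_powers_4} and~\ref{Baum_Eagon_extension_exponential_function}, which also require the nonnegativity/bound hypothesis~\eqref{basic_assumption}) apply to it directly, and that gap is what the additional massaging in the proof of Lemma~\ref{gt_lemma} (the passage to $Q(X|Y)$, the Bomze homogenization trick, and the shift to a positive operative matrix $\bar C_X$) is supposed to close.
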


At the heart of Ariadne lies a (primary) dynamical system based on Hedge that is the composition of two maps, one growth transformation for the first primary barrier function and another for the second primary barrier function. Parameters are adjusted such that each of these maps ascends $\mathsf{P}$ (we will show that this is always possible). The secondary system (Section \ref{secondary_system}) also ascends $\mathsf{P}$.

\begin{lemma}
\label{gt_lemma}
Given a $\mathsf{G}$-feasible $X$, there exists an $n \times n$ positive matrix $\bar{C}_X$, which we call the operative matrix at $X$, that depends on $X$, such that
\begin{align*}
T_{\mathsf{G}}(X)_i = X(i) \ddfrac{\exp\left\{ \alpha (\bar{C}_X X)_i \right\}}{\sum_{j=1}^n X(j) \exp\left\{ \alpha (\bar{C}_X X)_j \right\}} \quad i =1, \ldots, n 
\end{align*}
is a growth transformation for $\mathsf{G}$. Furthermore if the initial condition is $\mathsf{G}$-feasible and the learning rate $\alpha$ is then chosen from iteration to iteration to be equal to
\begin{align*}
(\exp\{\alpha\} - 1)^2 = \frac{1}{4} \delta^2,
\end{align*}
where $\delta$ is the minimum of the distance between $X$ and the upper feasibility set and the distance between $X$ and the infeasibility set, then repeatedly applying $T_{\mathsf{G}}$, the iterates $\{ X^k \}$ remain $\mathsf{G}$-feasible.
\end{lemma}

\begin{proof}
To prove the second part of the lemma, we prove that $\hat{X} \in \mathbb{F_L}$ where $\hat{X} = T_{\mathsf{G}}(X)$. Along the way, we prove the first part of the lemma. We may write the barrier function as
\begin{align*}
\mathsf{G}(X) &= \ddfrac{X \cdot CX - \mathsf{C}}{ \left( \max_{i=1}^n \left\{ (CX)_i \right\}  - \mathsf{C}_\ell \right) \prod_{i=1}^n \left( \mathsf{C}_u - (CX)_i \right) }\\
  &= \ddfrac{X \cdot CX - \mathsf{C}}{ \left( \max_{Y \in \mathbb{Y}} \left\{ Y \cdot CX \right\}  - \mathsf{C}_\ell \right) \prod_{i=1}^n \left( \mathsf{C}_u - (CX)_i \right) }\\
  &= \ddfrac{X \cdot CX - \mathsf{C}}{ \left( Y \cdot CX  - \mathsf{C}_\ell \right) \prod_{i=1}^n \left( \mathsf{C}_u - (CX)_i \right) }\\
  &= \mathsf{G}(X | Y)
\end{align*}
where $Y$ is a best response to $X$.\footnote{If the best response $Y$ is not unique, a variety of rules can be used to compute a best response such as selecting any one of them---see \citep[pp. 245-7]{ConvexAnalysis} for the computation of the {\em subdifferential}.} To prove that $X \in \mathbb{F_L}$ implies $\hat{X} \in \mathbb{F_L}$ we prove that 
\begin{align}
\mathsf{G}(\hat{X}) = \mathsf{G}(\hat{X} | \hat{Y}) > \mathsf{G}(\hat{X} | Y) > \mathsf{G}(X | Y) = \mathsf{G}(X).\label{GT}
\end{align}
Our proof extends an idea of \cite{Gopalakrishnan} (cf. Proposition \ref{Gopalakrishnan_proposition}) who reduce the problem of devising a growth transformation for a rational function to one of devising a growth transformation for a corresponding polynomial. In our particular problem,  their methodology stipulates that a growth transformation for the polynomial 
\begin{align*}
P_{\mathsf{G}}(X | Y) = \left( X \cdot CX - \mathsf{C} \right) - \left. \mathsf{G}(q | Y) \right|_{q \leftarrow X} \left( Y \cdot CX  - \mathsf{C}_\ell \right) \prod_{\ell =1}^n \left( \mathsf{C}_u - (CX)_\ell \right),
\end{align*}
where $Y$ is a best response to $X$, is also a growth transformation for $\mathsf{G}(X|Y)$. Growth transformations for $P_{\mathsf{G}}(X | Y)$ are based on its gradient, which assumes the expression
\begin{align*}
\ddfrac{\partial P_{\mathsf{G}}(X | Y)}{\partial X(i)} = (CX)_i - \mathsf{G}(X) \left( \prod_{\ell=1}^n \left( \mathsf{C}_u - (CX)_\ell \right) \right) (CY)_i +
\end{align*}
\begin{align}
+ G(X) \left( \max_{i=1}^n \left\{ (CX)_i \right\} - \mathsf{C}_\ell \right) \sum_{m=1}^n \left( \prod_{\ell =1, \ell \neq m}^n \left( \mathsf{C}_u - (CX)_\ell \right) \right) C_{im}\label{avc}
\end{align}
and which is equal to the gradient of
\begin{align*}
Q(X | Y) = X \cdot CX - \left. \mathsf{G}(q) \left( \prod_{l=1}^n \left( \mathsf{C}_u - (Cq)_l \right) \right) \right|_{q \leftarrow X} Y \cdot CX +
\end{align*}
\begin{align}
+ \left. G(q) \left( \max_{i=1}^n \left\{ (Cq)_i \right\} - \mathsf{C}_\ell \right) \sum_{m=1}^n \left( \prod_{l=1, l \neq m}^n \left( \mathsf{C}_u - (Cq)_l \right) \right) \right|_{q \leftarrow X} (CX)_m.\label{acv}
\end{align}
To find a growth transformation for $P_{\mathsf{G}}(X | Y)$ it suffices to find a growth transformation for $Q(X | Y)$. The advantage of this equivalence is that $Q(X|Y)$ can be expressed in the form of a homogeneous quadratic function using a trick by \cite{Bomze}. We may thus write
\begin{align*}
Q(X|Y) = X \cdot \bar{C} X,
\end{align*}
where $\bar{C}$ is a square matrix. We may add a positive constant to the entries of this matrix and then normalize with a positive scalar to obtain a positive matrix $\bar{C}_X$, which we call the operative matrix at $X$. In Lemma \ref{Baum_Eagon_extension_exponential_function} in the appendix, we prove that applying $T$ on $X$ using $\bar{C}_X$, unless $X$ is a fixed point corresponding to $\bar{C}_X$, we obtain $\hat{X}$ such that $\hat{X} \cdot \bar{C}_X \hat{X} > X \cdot \bar{C}_X X$. This implies 
\begin{align}
\mathsf{G}(\hat{X} | Y) = \mathsf{G}(T_{\mathsf{G}} (X) | Y) > \mathsf{G}(X | Y).\label{asdkfjsfkdjsfajhf}
\end{align}
Keeping now $\hat{X}$ fixed and considering the polynomial $P_{\mathsf{G}}(\hat{X}|Y)$ in the variable $Y$ this time, we have
\begin{align*}
P_{\mathsf{G}}(\hat{X} | Y) = \left( \hat{X} \cdot C\hat{X} - \mathsf{C} \right) - \left. \mathsf{G}(\hat{X} | q) \right|_{q \leftarrow Y} \left( Y \cdot C\hat{X}  - \mathsf{C}_\ell \right) \prod_{\ell =1}^n \left( \mathsf{C}_u - (C\hat{X})_\ell \right).
\end{align*}
Maximizing with respect to $Y$ weakly increases $P_{\mathsf{G}}(\hat{X} | Y)$ and, therefore, also weakly increases $\mathsf{G}(\hat{X} | Y)$, implying that
\begin{align*}
\mathsf{G}(\hat{X} | \hat{Y}) = \max_{Y \in \mathbb{Y}} \left\{ \mathsf{G}(\hat{X} | Y) \right\} \geq \mathsf{G}(\hat{X} | Y)
\end{align*}
and, therefore, combining with \eqref{asdkfjsfkdjsfajhf}, proving our claim \eqref{GT} (which implies that $T_{\mathsf{G}}$ is a growth transformation for the barrier function $\mathsf{G}$). It remains prove that this ensures the iterates of our dynamical system cannot leap across the negative infinity barrier. There two ways things could go wrong. The first is an even number of terms in the denominator's product become negative. To prove that this is not possible, consider the barrier function
\begin{align*}
\mathsf{G}'(X) = \ddfrac{X \cdot CX - \mathsf{C}}{ \left( \max_{i=1}^n \left\{ (CX)_i \right\}  - \mathsf{C}_\ell \right) \left( \mathsf{C}_u - \max_{i=1}^n \left\{ (CX)_i \right\} \right) \prod_{i=1, i \neq \max}^n | \mathsf{C}_u - (CX)_i | },
\end{align*}
where $|\cdot|$ is the absolute value. Any growth transformation for $\mathsf{G}'$ satisfies the property that if $X$ is $\mathsf{G}$-feasible, then $\hat{X}$ is also $\mathsf{G}$-feasible (provided the learning rate is small enough). But $T_{\mathsf{G}}$ is a growth transformation for $\mathsf{G}'$, which implies that if $X^0$ is $\mathsf{G}$-feasible, repeatedly applying $T_{\mathsf{G}}$ (using a small enough learning rate), the iterates $\{ X^k \}$ remain $\mathsf{G}$-feasible. How small should the learning be? If the learning rate is chosen as in the statement of the lemma, Lemma \ref{convexity_lemma_2} and Pinsker's inequality imply that
\begin{align}
\frac{1}{2} \|X - T(X)\|^2 \leq RE(X, T(X)) \leq \alpha (\exp\{\alpha\} - 1) \leq (\exp\{\alpha\} - 1)^2\label{vgood_inequality}
\end{align}
which implies iterates cannot ``jump across'' the negative infinity barrier, completing the proof.
\end{proof}

An analogous result holds for the second primary barrier function.\\ 

Computing $\delta$ requires solving two convex optimization problems, namely,
\begin{align*}
\mbox{ minimize } &\| Y - X \|\\
\mbox{ subject to } &(CY)_{\max} \leq \mathsf{C}_\ell\\
 &Y \in \mathbb{Y},
\end{align*}
which computes the distance to the infeasibility set, and
\begin{align*}
\mbox{ maximize } &\rho\\
\mbox{ subject to } &\|Y - X\| \leq \rho\\
 &(CY)_{\max} \leq C_u\\
 &Y \in \mathbb{Y},
\end{align*}
which computes the distance to the upper feasibility set and setting $\delta$ to be equal to the minimum of these distances. These problems need only be solved approximately, as the algorithm is not sensitive to an exact solution, but performing this computation in every iteration is time-consuming. Our algorithm thus uses an alternative method to find an appropriate value for the learning rate.

\subsubsection{Partitioning $\mathbb{F_L}$ to ensure the barrier functions are bounded away from $-\infty$}

\begin{figure}[tb]
\centering
\includegraphics[width=14cm]{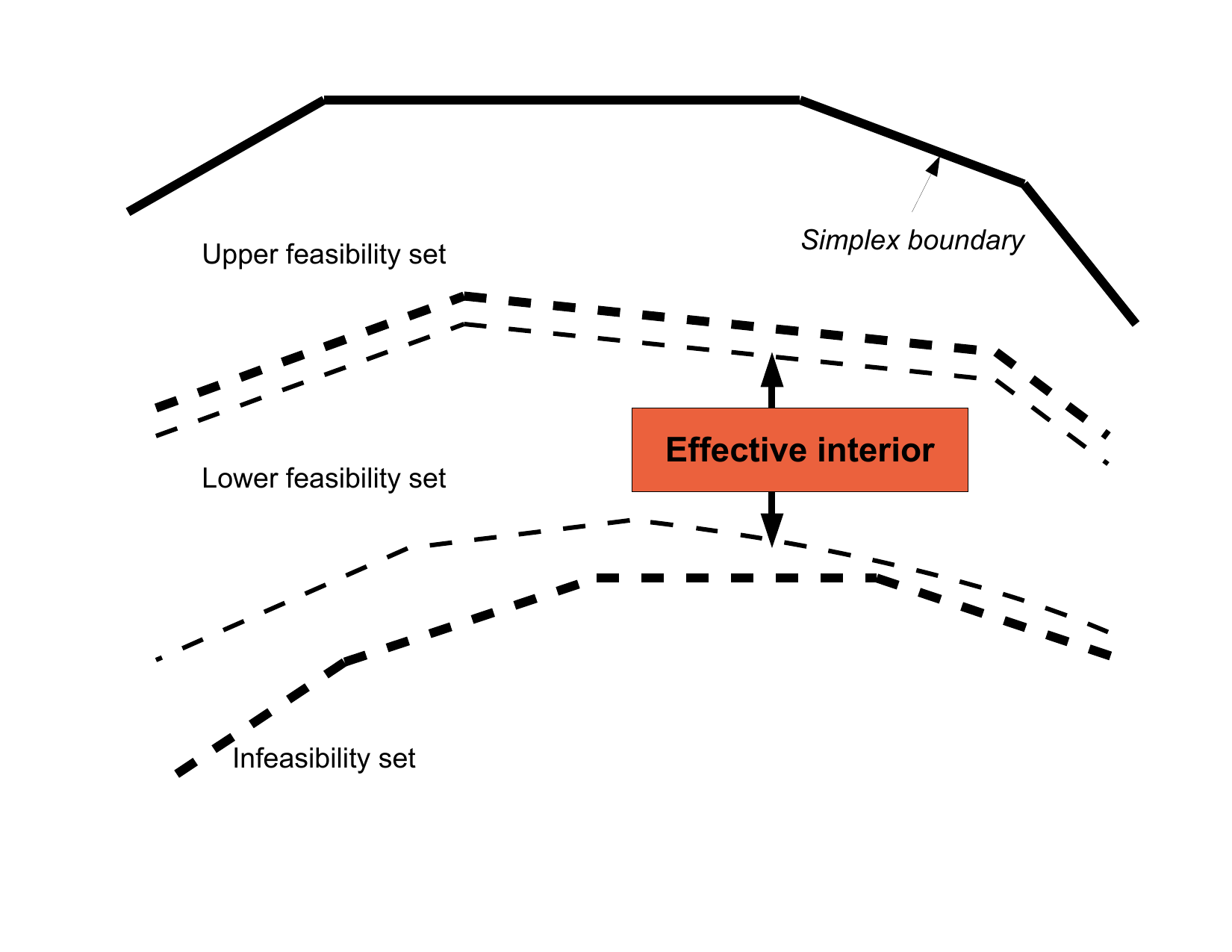}
\caption{\label{effective_interior}
The effective interior and the lower and upper boundaries of the lower feasibility set.}
\end{figure}

Our algorithm is designed such that iterates ascend the potential function $\mathsf{P}$ rather than $\mathsf{G}$ and $\mathsf{G'}$. As $\mathsf{P}$ increases, $\mathsf{G}$ and $\mathsf{G'}$ may descrease. The correctness of our algorithm relies on a lower bound on these barrier functions away from $-\infty$ for if the barrier functions can descend to $-\infty$, iterates may get stuck due to a diminishing learning rate rather than convergence to an equilibrium. To attain this lower bound we further partition the lower feasibility set into three sets, namely, the {\em lower boundary}, the {\em effective interior}, and the {\em upper boundary} (see Figure \ref{effective_interior}). The lower boundary consists of all strategies in the lower feasibility set such that  
\begin{align*}
\mathsf{G}^*(X) = \ddfrac{X \cdot CX - \mathsf{C}}{\max_{i=1}^n \left\{ (CX)_i \right\} - \mathsf{C}_\ell} \leq \mathsf{G}^*_0
\end{align*}
for some $\mathsf{G}^*_0$ small, certainly small enough such that the lower feasibility set is not partitioned, for example,
\begin{align*}
\mathsf{G}^*_0 = - \ddfrac{n}{C_{00} - C_\ell}.
\end{align*}
Let us further call the set of strategies of the lower feasibility set such that
\begin{align*}
\max_{i=1}^n \left\{ (CX)_i \right\} \geq \mathsf{C}_u - \epsilon
\end{align*}
for some $\epsilon > 0$ small, for example, 
\begin{align*}
\epsilon = \frac{1}{n} (\mathsf{C}_u - C_{00}),
\end{align*} 
the {\em upper boundary} of the lower feasibility set. The set of strategies such that
\begin{align*}
\mathsf{G}^*(X) > \mathsf{G}_0^* \quad \mbox{ and } \quad \max_{i=1}^n \left\{ (CX)_i \right\} < \mathsf{C}_u - \epsilon.
\end{align*}
is called the {\em effective interior}.

\subsubsection{Selecting parameters to ensure iterates remain in the effective interior of the lower feasibility set as the potential function $\mathsf{P}$ monotonically increases}
\label{barrier_lower_bound}

We are going to consider two bounds for the learning rate: an upper bound $\alpha_h$ (say equal to $1.0$) and a lower bound $\alpha_\ell$ (say equal to $0.001$). We use the upper bound $\alpha_h$ to ``wind'' parameter $\mathsf{C}$ such that $\mathsf{P}$ increases. To make sure that our dynamical system monotonically increases the potential function $\mathsf{P}$ from iteration to iteration, we treat each $\mathsf{C}$ (corresponding to the first and second barrier functions) as a parameter (to that effect) making sure that it is always greater than or equal to $\mathsf{P}(X)$. That using this principle in the design of our algorithm can be effective in monotonically increasing $\mathsf{P}$ is argued by the next lemmas:

\begin{lemma}
\label{123abc}
For all $X$ in the lower feasibility set and for all $\hat{\alpha} > 0$, there exists $\mathsf{\hat{C}} > \mathsf{P}(X)$ such that for all $\alpha \leq \hat{\alpha}$ and for all $\mathsf{C}$ such that $P(X) < \mathsf{C} \leq \mathsf{\hat{C}}$ we have that $\mathsf{P}(T_{\mathsf{G}}(X)) > \mathsf{P}(X)$.
\end{lemma}

\begin{proof}
Expression \eqref{avc} in Lemma \ref{gt_lemma} implies that if $\mathsf{C} = \mathsf{P}(X)$, $T_{\mathsf{G}}(X) = T(X)$, and, therefore, by Lemma \ref{Baum_Eagon_extension_exponential_function} (in the Appendix), we have that $\forall \alpha > 0 : \mathsf{P}(T_{\mathsf{G}}(X)) = \mathsf{P}(T(X)) > \mathsf{P}(X)$. Since $\mathsf{P}(T_{\mathsf{G}}(X))$ is a continuous function of $\mathsf{C}$ and $\alpha$, the intermediate value theorem completes the proof.
\end{proof}

\begin{lemma}
\label{123abcd}
For all $X$ in the lower feasibility set and for all $\hat{\alpha} > 0$, there exists $\mathsf{\hat{C}} > \mathsf{P}(X)$ such that for all $\alpha \leq \hat{\alpha}$ and for all $\mathsf{C}$ such that $P(X) < \mathsf{C} \leq \mathsf{\hat{C}}$ we have that  $\mathsf{P}(T_{\mathsf{G'}}(X)) > \mathsf{P}(X)$.
\end{lemma}

\begin{proof}
Analogous to the proof of Lemma \ref{123abc}.
\end{proof}

\begin{lemma}
\label{123abcde}
For all $X$ in the lower feasibility set and for all $\hat{\alpha} > 0$, there exists $\mathsf{\hat{C}} > \mathsf{P}(X)$ and $\mathsf{\hat{C'}} > \mathsf{P}(T_{\mathsf{G}}(X))$ such that for all $\alpha \leq \hat{\alpha}$ and for all $\mathsf{C}$ such that $P(X) < \mathsf{C} \leq \mathsf{\hat{C}}$ and for all $\mathsf{C'}$ such that $P(T_{\mathsf{G}}(X)) < \mathsf{C'} \leq \mathsf{\hat{C'}}$ we have that $\mathsf{P}(T_{\mathsf{G}}(X)) > \mathsf{P}(X)$ and $\mathsf{P}((T_{\mathsf{G'}} \circ T_{\mathsf{G}})(X)) > \mathsf{P}(T_{\mathsf{G}}(X)) > \mathsf{P}(X)$.
\end{lemma}

\begin{proof}
We apply the reasoning in Lemma \ref{123abc} first followed by the reasoning in Lemma \ref{123abcd}.
\end{proof}

\noindent
In the succeeding discussion, whenever we refer to the map $T_{\mathsf{G}'} \circ T_{\mathsf{G}}$ we assume both maps use the same $\alpha$ and {\em corresponding} values for parameter $\mathsf{C}$ (in that the update rule that generates these values is applied an identical number of times) and that these parameters are configured such that either map ascends $\mathsf{P}$. To that effect, given $X$ and $\alpha$, our algorithm selects $\mathsf{C}$ starting with $\mathsf{C} = C_{00}$, computes $\hat{X} = T_{\mathsf{G}}(X)$ and $\hat{X'} = T_{\mathsf{G}'}(\hat{X})$ and sets 
\begin{align*}
\mathsf{C} \leftarrow \frac{1}{2} \left( \mathsf{C} + \mathsf{P}(X) \right) \mbox{ and } \mathsf{C'} \leftarrow \frac{1}{2} \left( \mathsf{C'} + \mathsf{P}(\hat{X}) \right)
\end{align*} 
upon failure to meet our objective that $\mathsf{P}$ increases for either map, that is, upon failure that
\begin{align*}
\mathsf{P}(\hat{X}) > \mathsf{P}(X) \quad \mbox{ and } \quad \mathsf{P}(\hat{X'}) > \mathsf{P}(\hat{X}).
\end{align*}
This process is executed in an iterative fashion and Lemma \ref{123abcde} guarantees that it terminates. However, after a constant number of iterations (for example, ten or twenty) our algorithm sets 
\begin{align}
\mathsf{C} \leftarrow \mathsf{P}(X) \mbox{ and } \mathsf{C'} \leftarrow \mathsf{P}(\hat{X}).\label{fgfgfg}
\end{align} 
This has the effect that $\mathsf{P}$ increases: To see this, considering the growth transformation for the first primary barrier function $\mathsf{G}$, let us recall equation \eqref{acv}:
\begin{align*}
Q(X) = X \cdot CX - \left. \mathsf{G}(q) \left( \prod_{l=1}^n \left( \mathsf{C}_u - (Cq)_l \right) \right) \right|_{q \leftarrow X} Y \cdot CX +
\end{align*}
\begin{align*}
+ \left. G(q) \left( \max_{i=1}^n \left\{ (Cq)_i \right\} - \mathsf{C}_\ell \right) \sum_{m=1}^n \left( \prod_{l=1, l \neq m}^n \left( \mathsf{C}_u - (Cq)_l \right) \right) \right|_{q \leftarrow X} (CX)_m.
\end{align*}
We may rewrite this expression as 
\begin{align*}
Q(X) = X \cdot CX - \ddfrac{X \cdot CX - \mathsf{C}}{\max_{i=1}^n \left\{ (CX)_i \right\} - \mathsf{C}_\ell} Y \cdot CX + \sum_{m=1}^n \ddfrac{X \cdot CX - \mathsf{C}}{\mathsf{C}_u - (CX)_m } (CX)_m.
\end{align*}
The effect of \eqref{fgfgfg} is then that $Q(X) = X \cdot CX$, which guarantees that $\mathsf{P}$ increases. An analogous situation emerges for the growth transformation of the second primary barrier function.\\

\noindent
Assuming the current iterate $X$ is in the effective interior of the lower feasibility set, if $\hat{X'} = (T_{\mathsf{G}'} \circ T_{\mathsf{G}})(X)$ using $\alpha_h$ is also in the effective interior of the lower feasibility set, we keep $\hat{X'}$ as the next iterate. Otherwise, we use the bisection method on the learning rate $\alpha$ to find the value $\hat{\alpha}$ such that $\hat{X'}$ lands on the boundary of the effective interior. If $\hat{\alpha}$ is greater than the lower bound $\alpha_\ell$, we set $\alpha \leftarrow \hat{\alpha} / 2$ and invoke $(T_{\mathsf{G}'} \circ T_{\mathsf{G}})(X)$. Then $\hat{X'}$ is ensured to land in the effective interior. If $\hat{\alpha}$ is equal to or smaller than the lower bound $\alpha_\ell$, then we perform the update that yields $\hat{X'}$ and immediately after we invoke one of two {\em complementary mechanisms} (described below). Note that this process cannot guarantee in itself that $\mathsf{P}(\hat{X}) > \mathsf{P}(X)$ in the event that the bisection method on $\alpha$ has to be invoked. To ensure $\mathsf{P}(\hat{X}) > \mathsf{P}(X)$, as the bisection method unfolds, upon detection of an (intermediate) point $\tilde{X}$ such that $\mathsf{P}(\tilde{X}) \leq \mathsf{P}(X)$, we restart the bisection method: Although it is straightforward to update $\mathsf{C}$ using the previous update rule since $X$ remains fixed, as $\mathsf{C}$ changes, the point $\hat{X}$ where $T_{\mathsf{G'}}$ is invoked must also change since $\hat{X}$ is a function of parameter $\mathsf{C}$ of $T_{\mathsf{G}}$. Leaving the details for the pseudocode (Section \ref{pseudocode}), we note that the proof that this algorithm runs in polynomial time is an implication of terminating the halving scheme after a constant number of iterations and using \eqref{fgfgfg} in the update. Note finally that the correctness of our algorithm (and its polynomial running time) is insensitive to the choice of the lower bound $\alpha_\ell$. However, if $\alpha_\ell$ is large, the map $T_{\mathsf{G}'} \circ T_{\mathsf{G}}$ doesn't have ``breathing space'' in that it is forced to yield to one of the complementary mechasnisms in almost every iteration, which may affect the running-time performance. Let us now specify the aforementioned pair of complementary mechanisms in detail:

\subsubsection{The first complementary mechanism}

In the event that 
\begin{align*}
\mathsf{G}^*(X) = \ddfrac{X \cdot CX - \mathsf{C}}{\max_{i=1}^n \left\{ (CX)_i \right\} - \mathsf{C}_\ell} = \mathsf{G}^*_0
\end{align*}
(or approximately so) our algorithm subsequently solves the convex optimization problem
\begin{align*}
\mbox{ minimize } &RE(Y, X) - \alpha Y \cdot \bar{C}X\\
\mbox{ subject to } &Y \cdot CE_m = X \cdot CE_m \\
 &Y \in \mathbb{Y},
\end{align*}
where $\bar{C}$ is the corresponding operative matrix of $\mathsf{G}$ at $X$ and $E_m$ is a best response to $X$. Since $CE_m$ is a column of $C$, the column corresponding to pure strategy $m$, the constraint $Y \cdot CE_m = X \cdot CE_m \equiv c$ corresponds to a hyperplane in $\mathbb{R}^n$ that intersects the simplex at $X$. Therefore, if $X$ is an interior point, there exists a continuum of points that satisfy the constraint. If $Y$ is a solution to this optimization problem, then
\begin{align*}
\max_{i=1}^n \left\{ (CY)_i \right\} \geq Y \cdot CE_m = X \cdot CE_m = \max_{i=1}^n \left\{ (CX)_i \right\}.
\end{align*}
To prove that this method is effective, it remains to prove that $Y \cdot \bar{C} Y > X \cdot \bar{C} X$. Writing the KKT conditions for the previous problem, we obtain
\begin{align*}
1 + \ln \frac{Y(i)}{X(i)} - \alpha (\bar{C}X)_i + \lambda - \mu (CE_m)_i = 0 \quad i=1, \ldots, n
\end{align*}
Straight algebra gives that
\begin{align*}
Y(i) = X(i) \ddfrac{\exp\left\{ \alpha (\bar{C}X)_i + \mu (CE_m)_i \right\}}{\sum_{i=1}^n X(i) \exp\left\{ \alpha (\bar{C}X)_i + \mu (CE_m)_i \right\}} \quad i = 1, \ldots, n
\end{align*}
satisfies these conditions for all $\mu$ (and, therefore, for the optimal $\mu$). Lemma \ref{Baum_Eagon_extension_exponential_function} together with the aforementioned trick by Bomze imply then that 
\begin{align*}
Y \cdot \bar{C} Y + \frac{\mu}{\alpha} (CY)_m > X \cdot \bar{C} X + \frac{\mu}{\alpha} (CX)_m
\end{align*}
and, therefore, that 
\begin{align*}
Y \cdot \bar{C} Y > X \cdot \bar{C} X.
\end{align*} 
We denote the map obtained from the previous optimization problem as $T^{OPT}_{\mathsf{G}}$. We note that $T^{OPT}_{\mathsf{G}}$ ascends $\mathsf{G}^*$, which implies that the window of consecutive iterations required to invoke $T^{OPT}_{\mathsf{G}}$ has length {\em one}. This observation is important in computing the fixed points of our dynamical system. Note that by Berge's maximum theorem and the strict convexity of the objective function (cf. \citep[p.239]{Sundaram}) $Y$ is a continuous function of $\alpha$. There are a variety of methods for selecting $\alpha$: In our algorithm $\alpha$ is chosen such that in the next iteration $T_{\mathsf{G}'} \circ T_{\mathsf{G}}$ is invoked. To that end, it is sufficient that
\begin{align*}
\mathsf{G}^*(X) > \mathsf{G}_0^* \quad \mbox{ and } \quad \max_{i=1}^n \left\{ (CX)_i \right\} < \mathsf{C}_u - \epsilon.
\end{align*}

\subsubsection{Proof that the first complementary mechanism is polynomial}

Let us now show that the convex optimization problem in the first complementary mechanism, namely,
\begin{align*}
\mbox{ minimize } &RE(Y, X) - \alpha Y \cdot \bar{C}X\\
\mbox{ subject to } &Y \cdot CE_m = X \cdot CEm\\
  &Y \in \mathbb{X}(C)
\end{align*}
can be solved in polynomial time. To that end let us compute the dual. The Lagrangian is
\begin{align}
L(Y, \lambda, \mu) = RE(Y, X) - \alpha Y \cdot \bar{C}X + \lambda (\mathbf{1}^T Y - 1) + \mu (Y \cdot CE_m - X \cdot CEm),\label{langrangian2}
\end{align}
where we assume that the constraint $Y \geq 0$ is implicit. The dual function $L_D(\lambda)$ is obtained by minimizing the Lagrangian $L(Y, \lambda, \mu)$:
\begin{align*}
L_D(\lambda, \mu) = \inf_{Y \in \mathbb{R}^n} \{RE(Y, X) - \alpha Y \cdot \bar{C}X + \lambda (\mathbf{1}^T Y - 1) + \mu (Y \cdot CE_m - X \cdot CEm) \}.
\end{align*}
The Lagrangian is minimized when the gradient is zero and the gradient is zero when
\begin{align*}
1 + \ln \frac{Y(i)}{X(i)} - \alpha (\bar{C}X)_i + \lambda + \mu (CE_m)_i = 0.
\end{align*}
Solving for $Y(i)$ in the previous expression, we obtain
\begin{align}
Y(i) = X(i) \exp\{ - 1 - \lambda + \alpha (\bar{C}X)_i - \mu (CE_m)_i \} \quad i = 1, \ldots, n.\label{firm2}
\end{align}
Note that, by the previous expression, the constraint $Y \geq 0$ is automatically satisfied. Substituting now the previous expression for $Y$ in \eqref{langrangian2}, we obtain the dual function
\begin{align*}
L_D(\lambda, \mu) = &\sum_{i=1}^n X(i) \exp\{ - 1 - \lambda + \alpha (\bar{C}X)_i - \mu (CE_m)_i \} (- 1 - \lambda + \alpha (\bar{C}X)_i - \mu (CE_m)_i) -\\
  &- \alpha \sum_{i=1}^n X(i) \exp\{ - 1 - \lambda + \alpha (\bar{C}X)_i - \mu (CE_m)_i  \} (\bar{C}X)_i + \\
  &+ \lambda \left(\sum_{i=1}^n X(i) \exp\{ - 1 - \lambda + \alpha (\bar{C}X)_i - \mu (CE_m)_i\} - 1\right) \\
  &+ \mu \left( \sum_{i=1}^n X(i) \exp\{ - 1 - \lambda + \alpha (\bar{C}X)_i - \mu (CE_m)_i  \} (CE_m)_i - X \cdot CE_m \right)
\end{align*}
which simplifies to
\begin{align*}
L_D(\lambda, \mu) = - \sum_{i=1}^n X(i) \exp\{ - 1 - \lambda + \alpha (\bar{C}X)_i - \mu (CE_m)_i \} -\lambda - \mu X \cdot CE_m
\end{align*}
Since the dual function is concave, to find the optimal $\lambda$ and $\mu$ we simply need to set the derivative of $L_D(\lambda, \mu)$ (with respect to $\lambda$ and $\mu$) equal to $0$. To that end, we have
\begin{align*}
\frac{d L_D(\lambda, \mu)}{d \lambda} &= \sum_{i=1}^n X(i) \exp\{ - 1 - \lambda + \alpha (\bar{C}X)_i - \mu (CE_m)_i  \} - 1 = 0
\end{align*}
and
\begin{align}
\frac{d L_D(\lambda, \mu)}{d \mu} &= \sum_{i=1}^n X(i) \exp\{ - 1 - \lambda + \alpha (\bar{C}X)_i - \mu (CE_m)_i \} (CE_m)_i  - X \cdot CE_m = 0\label{xyz}
\end{align}
which implies
\begin{align*}
\exp\{ - 1 - \lambda \} = \frac{1}{\sum_{i=1}^n X(i) \exp\{ \alpha (\bar{C}X)_i - \mu (CE_m)_i \}}.
\end{align*}
Substituting in \eqref{firm2} we obtain
\begin{align*}
Y(i) = X(i) \frac{ \exp\{\alpha (\bar{C}X)_i - \mu (CE_m)_i\}}{\sum_{i=1}^n X(i) \exp\{ \alpha (\bar{C}X)_i - \mu (CE_m)_i \}} \quad i=1, \ldots, n
\end{align*}
as claimed. Substituting in \eqref{xyz} we obtain
\begin{align}
\ddfrac{\sum_{i=1}^n X(i) \exp\{ \alpha (\bar{C}X)_i - \mu (CE_m)_i \} (CE_m)_i}{\sum_{j=1}^n X(j) \exp\{ \alpha (\bar{C}X)_j - \mu (CE_m)_j \}}  - X \cdot CE_m = 0\label{bcd}
\end{align}  
We are looking for the value of $\mu$ that solves this equation. To that end, by the concavity of the dual function, we have
\begin{align*}
\frac{d^2 L_D(\lambda, \mu)}{d \lambda^2} &= - \sum_{i=1}^n X(i) \exp\{ - 1 - \lambda + \alpha (\bar{C}X)_i - \mu (CE_m)_i  \} < 0
\end{align*}  
\begin{align*}
\frac{d^2 L_D(\lambda, \mu)}{d \mu^2} &= - \sum_{i=1}^n X(i) \exp\{ - 1 - \lambda + \alpha (\bar{C}X)_i - \mu (CE_m)_i \} ((CE_m)_i)^2 < 0
\end{align*}
\begin{align*}
\frac{d^2 L_D(\lambda, \mu)}{d \lambda d \mu} &= - \sum_{i=1}^n X(i) \exp\{ - 1 - \lambda + \alpha (\bar{C}X)_i - \mu (CE_m)_i  \} (CE_m)_i
\end{align*}
\begin{align*}
\frac{d^2 L_D(\lambda, \mu)}{d \lambda d \mu} &= - \sum_{i=1}^n X(i) \exp\{ - 1 - \lambda + \alpha (\bar{C}X)_i - \mu (CE_m)_i \} (CE_m)_i
\end{align*}
and, therefore,
\begin{align*}
\left( \sum_{i=1}^n X(i) \exp\{ - 1 - \lambda + \alpha (\bar{C}X)_i - \mu (CE_m)_i  \} \right) \left( \sum_{i=1}^n X(i) \exp\{ - 1 - \lambda + \alpha (\bar{C}X)_i - \mu (CE_m)_i \} ((CE_m)_i)^2 \right) -
\end{align*}
\begin{align*}
- \left( \sum_{i=1}^n X(i) \exp\{ - 1 - \lambda + \alpha (\bar{C}X)_i - \mu (CE_m)_i \} (CE_m)_i \right)^2 > 0
\end{align*}
Letting
\begin{align*}
f(\mu) \equiv \ddfrac{\sum_{i=1}^n X(i) \exp\{ \alpha (\bar{C}X)_i - \mu (CE_m)_i \} (CE_m)_i}{\sum_{j=1}^n X(j) \exp\{ \alpha (\bar{C}X)_j - \mu (CE_m)_j \}}
\end{align*}
and taking the derivative with respect to $\mu$ we obtain in the numerator the negative of the previous positive expression. Therefore $f$ is strictly monotonically decreasing and we can apply the bisection method to solve \eqref{bcd}. By elementary numerical analysis the bisection method halves the error in the every iteration and, therefore, its complexity is linear in the number of precision digits.

\subsubsection{The second complementary mechanism}

In the event that
\begin{align*}
\max_{i=1}^n \left\{ (CX)_i \right\} = \mathsf{C}_u - \epsilon,
\end{align*}
(or approximately so) our algorithm iterates $J$. $J$ ascends $\mathsf{P}$ (cf. Proposition \ref{Baum_Eagon}) and, as shown in the sequel (cf. Lemma \ref{fundamental_lemma}), even if it escapes the lower feasibility set (to enter the upper feasibility set) it will return to the lower feasibility set at a strategy $X$, where
\begin{align*}
\mathsf{G}^*(X) \geq \mathsf{G}_0^* \quad \mbox{ and } \quad \max_{i=1}^n \left\{ (CX)_i \right\} \leq \mathsf{C}_u - \epsilon
\end{align*}
by an appropriate selection of an intermediate point in the secant line connecting the last iterate with the second-to-last iterate (\cite{Baum-Sell} show that all such intermediate points ascend the potential function). We denote the map that selects an intermediate point in the secant line by $J_{1/2}$. There are a variety of methods for selecting the intermediate point in the last iteration: Our algorithm selects a point such that $T_{\mathsf{G}'} \circ T_{\mathsf{G}}$ is invoked next. To that end, it is sufficient that the iterate lands in the effective interior, that is,
\begin{align*}
\mathsf{G}^*(X) > \mathsf{G}_0^* \quad \mbox{ and } \quad \max_{i=1}^n \left\{ (CX)_i \right\} < \mathsf{C}_u - \epsilon.
\end{align*}

\subsection{Fixed points of our dynamical system}
\label{fixed_points}

\begin{lemma}
\label{fixed_points_lemma_general}
Let $T_{\mathsf{G}} : \mathbb{Y} \rightarrow \mathbb{Y}$ and $\hat{J} : \mathbb{Y} \rightarrow \mathbb{Y}$ be such that $\hat{J}$ is invertible and
\begin{align*}
\hat{J}(X) \neq X \bigwedge T_{\mathsf{G}}(X) \neq X \Rightarrow (\hat{J} \circ T_{\mathsf{G}})(X) \neq X.
\end{align*} 
Then the fixed points of $T_{\mathsf{G}}$ are a subset of the fixed points of $\hat{J}$. 
\end{lemma}

\begin{proof}
Our first claim is that
\begin{align*}
\hat{J}(X) = X \bigwedge T_{\mathsf{G}}(X) = X \Leftrightarrow (\hat{J} \circ T_{\mathsf{G}})(X) = X.
\end{align*}
The forward direction, that is, that 
\begin{align*}
\hat{J}(X) = X \bigwedge T_{\mathsf{G}}(X) = X \Rightarrow (\hat{J} \circ T_{\mathsf{G}})(X) = X.
\end{align*}
is straightforward. The prove our claim it suffices to show that
\begin{align*}
\neg \left( \hat{J}(X) = X \bigwedge T_{\mathsf{G}}(X) = X \right) \Rightarrow (\hat{J} \circ T_{\mathsf{G}})(X) \neq X.
\end{align*}
This breaks down to showing that
\begin{align*}
\hat{J}(X) \neq X \bigwedge T_{\mathsf{G}}(X) = X \Rightarrow (\hat{J} \circ T_{\mathsf{G}})(X) \neq X,
\end{align*}
\begin{align*}
\hat{J}(X) = X \bigwedge T_{\mathsf{G}}(X) \neq X \Rightarrow (\hat{J} \circ T_{\mathsf{G}})(X) \neq X,
\end{align*}
and 
\begin{align*}
\hat{J}(X) \neq X \bigwedge T_{\mathsf{G}}(X) \neq X \Rightarrow (\hat{J} \circ T_{\mathsf{G}})(X) \neq X.
\end{align*}
But, by the definition of the synthesis of two maps,
\begin{align*}
(\hat{J} \circ T_{\mathsf{G}})(X) \equiv \hat{J}(T_{\mathsf{G}}(X))
\end{align*}
and it is a matter of straight algebra to verify that all three cases go through (the third case being identical to the assumption in the statement of the lemma). Therefore,
\begin{align*}
\{ X | (\hat{J} \circ T_{\mathsf{G}})(X) = X \} = \{ X | \hat{J}(X) = X \} \cap \{ X | T_{\mathsf{G}}(X) = X \}.
\end{align*}
That is, the intersection of the fixed points of $\hat{J}$ and those of $T_{\mathsf{G}}$ equals the set of fixed points of $\hat{J} \circ T_{\mathsf{G}}$.\\ 

We would like to show that
\begin{align*}
\{ X | T_{\mathsf{G}}(X) = X \} \subset \{ X | \hat{J}(X) = X \}.
\end{align*}
By the first part of the proof it suffices to show that
\begin{align*}
\{ X | T_{\mathsf{G}}(X) = X \} \subset \{ X | (\hat{J} \circ T_{\mathsf{G}})(X) = X \}.
\end{align*}
Let $X$ be such that
\begin{align*}
(\hat{J} \circ T_{\mathsf{G}})(X) \neq X.
\end{align*}
To prove the lemma, it suffices to show that
\begin{align*}
T_{\mathsf{G}}(X) \neq X.
\end{align*}
Let $\hat{J}^{-1}(X)$ denote the inverse of $X$ under $\hat{J}$.\footnote{In a typical application of this lemma, $\hat{J}$ is the discrete-time replicator dynamic $J$. Note that $J$ is a diffeormorphism \citep[Theorem 4]{LosertAkin} (Theorem 4 in that paper assumes $C > 0$) and, therefore, invertible.} If $X$ is not a fixed point of $\hat{J}$, we obtain
\begin{align*}
(\hat{J}^{-1} \circ \hat{J} \circ T_{\mathsf{G}})(X) \neq \hat{J}^{-1}(X) \neq X,
\end{align*}
where the last inequality follows by $\hat{J}$ and $\hat{J}^{-1}$ having identical fixed points since
\begin{align*}
\hat{J}(X) = X \Leftrightarrow (\hat{J}^{-1} \circ \hat{J})(X) = \hat{J}^{-1}(X) \Leftrightarrow X = \hat{J}^{-1}(X),
\end{align*}
which implies
\begin{align*}
T_{\mathsf{G}}(X) \neq X
\end{align*}
as claimed. If $X$ is a fixed point of $\hat{J}$, we obtain $\hat{J}^{-1}(X) = X$ and, therefore, that
\begin{align*}
(\hat{J}^{-1} \circ \hat{J} \circ T_{\mathsf{G}})(X) \neq X
\end{align*}
which implies
\begin{align*}
T_{\mathsf{G}}(X) \neq X
\end{align*}
as claimed. This completes the proof.
\end{proof}

\begin{lemma}
\label{fixed_points_lemma}
Suppose $T_{\mathsf{G}}$ / $T_{\mathsf{G'}}$ / $T_{\mathsf{G}}^{OPT}$ monotonically ascend the potential function $\mathsf{P}(X)$. Then the set of fixed points of $T_{\mathsf{G}}$ / $T_{\mathsf{G'}}$ / $T_{\mathsf{G}}^{OPT}$ is a subset of the set of fixed points of the replicator dynamic.
\end{lemma}

\begin{proof}
Let $J : \mathbb{Y} \rightarrow \mathbb{Y}$ be the discrete-time replicator dynamic, that is,
\begin{align*}
J(X)_i = X(i) \ddfrac{(CX)_i}{X \cdot CX} \quad i =1, \ldots, n,
\end{align*}
where $C > 0$. $J$ is invertible as implied by \citep[Theorem 4]{LosertAkin}. Consider the synthesis $J \circ T_{\mathsf{G}}$ of $J$ and $T_{\mathsf{G}}$. Both $J$ and $T_{\mathsf{G}}$ monotonically ascend $\mathsf{P}(X)$. This implies that
\begin{align*}
J(X) \neq X \bigwedge T_{\mathsf{G}}(X) \neq X \Rightarrow (J \circ T_{\mathsf{G}})(X) \neq X.
\end{align*} 
 Lemma \ref{fixed_points_lemma_general} completes the proof. The proof for $T_{\mathsf{G'}}$ and $T_{\mathsf{G}}^{OPT}$ is analogous.
\end{proof}

\begin{theorem}
\label{fundamental_theorem_on_fixed_points}
Suppose $T_{\mathsf{G}}$ and $T_{\mathsf{G'}}$ monotonically ascend the potential function $\mathsf{P}(X)$. Then the fixed points of $T_{\mathsf{G}} \circ T_{\mathsf{G}'}$ are pure strategies and uniform equalizers.
\end{theorem}

\begin{proof}
Referring back to \eqref{avc}, the fixed points of $T_{\mathsf{G}}$ satisfy $\forall i,j \in \mathcal{C}(X)$ that
\begin{align*}
(CX)_i - \mathsf{G}(X) \left( \prod_{\ell=1}^n \left( \mathsf{C}_u - (CX)_\ell \right) \right) (CY)_i +
\end{align*}
\begin{align*}
+ G(X) \left( \max_{i=1}^n \left\{ (CX)_i \right\} - \mathsf{C}_\ell \right) \sum_{m=1}^n \left( \prod_{\ell =1, \ell \neq m}^n \left( \mathsf{C}_u - (CX)_\ell \right) \right) C_{im} =
\end{align*}
\begin{align*}
= (CX)_j - \mathsf{G}(X) \left( \prod_{\ell=1}^n \left( \mathsf{C}_u - (CX)_\ell \right) \right) (CY)_j +
\end{align*}
\begin{align*}
+ G(X) \left( \max_{i=1}^n \left\{ (CX)_i \right\} - \mathsf{C}_\ell \right) \sum_{m=1}^n \left( \prod_{\ell =1, \ell \neq m}^n \left( \mathsf{C}_u - (CX)_\ell \right) \right) C_{jm}
\end{align*}
By Lemma \ref{fixed_points_lemma}, they also satisfy
\begin{align*}
- \mathsf{G}(X) \left( \prod_{\ell=1}^n \left( \mathsf{C}_u - (CX)_\ell \right) \right) (CY)_i +
\end{align*}
\begin{align*}
+ G(X) \left( \max_{i=1}^n \left\{ (CX)_i \right\} - \mathsf{C}_\ell \right) \sum_{m=1}^n \left( \prod_{\ell =1, \ell \neq m}^n \left( \mathsf{C}_u - (CX)_\ell \right) \right) C_{im} =
\end{align*}
\begin{align*}
= - \mathsf{G}(X) \left( \prod_{\ell=1}^n \left( \mathsf{C}_u - (CX)_\ell \right) \right) (CY)_j +
\end{align*}
\begin{align*}
+ G(X) \left( \max_{i=1}^n \left\{ (CX)_i \right\} - \mathsf{C}_\ell \right) \sum_{m=1}^n \left( \prod_{\ell =1, \ell \neq m}^n \left( \mathsf{C}_u - (CX)_\ell \right) \right) C_{jm}.
\end{align*}
The fixed points of $T_{\mathsf{G'}}$ satisfy $\forall i,j \in \mathcal{C}(X)$ that
\begin{align*}
\left( \frac{1}{2} X \cdot X \right) (CX)_i + (X \cdot CX - \mathsf{C}) X(i) - \mathsf{G}(X) \left( \prod_{\ell=1}^n \left( \mathsf{C}_u - (CX)_\ell \right) \right) (CY)_i +
\end{align*}
\begin{align*}
+ G(X) \left( \max_{i=1}^n \left\{ (CX)_i \right\} - \mathsf{C}_\ell \right) \sum_{m=1}^n \left( \prod_{\ell =1, \ell \neq m}^n \left( \mathsf{C}_u - (CX)_\ell \right) \right) C_{im} =
\end{align*}
\begin{align*}
= \left( \frac{1}{2} X \cdot X \right)(CX)_j + (X \cdot CX - \mathsf{C}) X(j) - \mathsf{G}(X) \left( \prod_{\ell=1}^n \left( \mathsf{C}_u - (CX)_\ell \right) \right) (CY)_j +
\end{align*}
\begin{align*}
+ G(X) \left( \max_{i=1}^n \left\{ (CX)_i \right\} - \mathsf{C}_\ell \right) \sum_{m=1}^n \left( \prod_{\ell =1, \ell \neq m}^n \left( \mathsf{C}_u - (CX)_\ell \right) \right) C_{jm}
\end{align*}
By Lemma \ref{fixed_points_lemma}, they also satisfy
\begin{align*}
(X \cdot CX - \mathsf{C}) X(i) - \mathsf{G}(X) \left( \prod_{\ell=1}^n \left( \mathsf{C}_u - (CX)_\ell \right) \right) (CY)_i +
\end{align*}
\begin{align*}
+ G(X) \left( \max_{i=1}^n \left\{ (CX)_i \right\} - \mathsf{C}_\ell \right) \sum_{m=1}^n \left( \prod_{\ell =1, \ell \neq m}^n \left( \mathsf{C}_u - (CX)_\ell \right) \right) C_{im} =
\end{align*}
\begin{align*}
= (X \cdot CX - \mathsf{C}) X(j) - \mathsf{G}(X) \left( \prod_{\ell=1}^n \left( \mathsf{C}_u - (CX)_\ell \right) \right) (CY)_j +
\end{align*}
\begin{align*}
+ G(X) \left( \max_{i=1}^n \left\{ (CX)_i \right\} - \mathsf{C}_\ell \right) \sum_{m=1}^n \left( \prod_{\ell =1, \ell \neq m}^n \left( \mathsf{C}_u - (CX)_\ell \right) \right) C_{jm}
\end{align*}
By the first part of the proof of Lemma \ref{fixed_points_lemma_general}, the fixed points of $T_{\mathsf{G}'} \circ T_{\mathsf{G}}$ are the intersection of the set of fixed points of $T_{\mathsf{G}}$ and the set of fixed points of  $T_{\mathsf{G'}}$. Therefore, the fixed points of $T_{\mathsf{G}'} \circ T_{\mathsf{G}}$ satisfy
\begin{align*}
(X \cdot CX - \mathsf{C}) X(i) = (X \cdot CX - \mathsf{C}) X(j)
\end{align*}
and cancelling the factors, we obtain the lemma.
\end{proof}

\begin{theorem}
\label{overall_fixed_points}
The fixed points of the primary dynamical system are pure strategies and uniform equalizers.
\end{theorem}

\begin{proof}
The primary dynamical system is a sequence each element of which is $T_{\mathsf{G}'} \circ T_{\mathsf{G}}$, $T_{\mathsf{G}}^{OPT}$, $J$, or $J_{1/2}$. (Note that all such elements ascend the potential function $\mathsf{P}$ and that by Lemma \ref{fixed_points_lemma_general} the fixed points of $J_{1/2}$ are fixed points of the replicator dynamic.) An example of a window of this sequence is:
\begin{align*}
\cdots \quad T_{\mathsf{G}'} \circ T_{\mathsf{G}} \quad T_{\mathsf{G}'} \circ T_{\mathsf{G}} \quad T_{\mathsf{G}}^{OPT} \quad T_{\mathsf{G}'} \circ T_{\mathsf{G}} \quad J \quad J \quad J_{1/2} \quad T_{\mathsf{G}'} \circ T_{\mathsf{G}} \quad \cdots
\end{align*}
and another example is
\begin{align*}
\cdots \quad T_{\mathsf{G}'} \circ T_{\mathsf{G}} \quad T_{\mathsf{G}'} \circ T_{\mathsf{G}} \quad T_{\mathsf{G}}^{OPT} \quad T_{\mathsf{G}'} \circ T_{\mathsf{G}} \quad J_{1/2} \quad T_{\mathsf{G}'} \circ T_{\mathsf{G}} \quad \cdots
\end{align*}
Our algorithm ensures that $T_{\mathsf{G}}^{OPT}$ and $J$ (or $J_{1/2}$) are necessarily separated by a window of $T_{\mathsf{G}'} \circ T_{\mathsf{G}}$ and such that, following an element equal to $T_{\mathsf{G}'} \circ T_{\mathsf{G}}$, the number of elements that are equal to either $T_{\mathsf{G}}^{OPT}$ or $J / J_{1/2}$ are finite. This implies that the fixed points of the primary dynamical system are pure strategies and uniform equalizers since the fixed points of a window
\begin{align*}
T_{\mathsf{G}'} \circ T_{\mathsf{G}} \quad T_{\mathsf{G}}^{OPT} \quad \cdots \quad T_{\mathsf{G}'} \circ T_{\mathsf{G}} \quad T_{\mathsf{G}}^{OPT}
\end{align*}
are the intersection of the fixed points of $T_{\mathsf{G}'} \circ T_{\mathsf{G}}$ and $T_{\mathsf{G}}^{OPT}$ (cf. Lemma \ref{fixed_points_lemma}) and the fixed points of a window
\begin{align*}
T_{\mathsf{G}'} \circ T_{\mathsf{G}} \quad J \quad \cdots \quad J
\end{align*}
or 
\begin{align*}
T_{\mathsf{G}'} \circ T_{\mathsf{G}} \quad J \quad \cdots \quad J \quad J_{1/2}
\end{align*}
or 
\begin{align*}
T_{\mathsf{G}'} \circ T_{\mathsf{G}} \quad J_{1/2}
\end{align*}
are the intersection of the fixed points of $T_{\mathsf{G}'} \circ T_{\mathsf{G}}$ and $J / J_{1/2}$ by the first part of the proof of Lemma \ref{fixed_points_lemma_general}.
\end{proof}


\subsection{Leapfrogging non-equilibrium fixed points and a fundamental property}

\begin{definition}
We say that the probability vectors $p$ and $q$ in $\mathbb{R}^n$ have the same ranking if 
\begin{align*}
\forall i, j \in \{1, \ldots, n\}: p(i) \geq p(j) \mbox{ if and only if } q(i) \geq q(j).
\end{align*}
\end{definition}

\begin{definition}
The probability sector of a fixed point, say $X^*$, is the set of all strategies that have the same ranking as $X^*$.
\end{definition}

The iterates of our dynamical system may converge to a non-equilibrium fixed point (for example, a clique) in the effective interior of the lower feasibility set unless our algorithm takes action to prevent this possibility. Note that on the event of convergence to a fixed point, the iterates enter and forever remain in its probability sector. To avoid such undesirable convergence, we rest on a property of non-equilibrium fixed points, namely, that they are necessarily ``interior'' points of the lower feasibility set, in that parameters can be configured such that they lie strictly below the upper feasibility set and strictly above the infeasibility set. This is shown in the following lemmas:

\begin{lemma}
\label{convexityy_lemma}
Let $C$ be an arbitrary square payoff matrix. Then the function $F : \mathbb{X}(C) \rightarrow \mathbb{R}$, where $F(X) = (CX)_{\max}$, is convex.
\end{lemma}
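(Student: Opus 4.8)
The plan is to show that $F(X) = (CX)_{\max} = \max_{i=1}^n (CX)_i$ is a pointwise maximum of finitely many affine functions of $X$, and then invoke the standard fact that such a maximum is convex. First I would write out $F$ explicitly: for each pure strategy index $i \in \{1, \ldots, n\}$, the map $X \mapsto (CX)_i = E_i \cdot CX = \sum_{j} C_{ij} X(j)$ is linear in $X$, hence in particular affine and therefore convex on $\mathbb{X}(C)$. Then $F(X) = \max_{i=1}^n (CX)_i$ is the pointwise supremum (here a maximum over a finite index set) of these $n$ convex functions.

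The key step is the elementary lemma that the pointwise supremum of a family of convex functions is convex. I would either cite a standard reference (e.g.\ \citep{ConvexAnalysis}) or give the one-line verification: for $X_1, X_2 \in \mathbb{X}(C)$ and $\lambda \in [0,1]$, and for each fixed $i$,
\begin{align*}
(C((1-\lambda)X_1 + \lambda X_2))_i = (1-\lambda)(CX_1)_i + \lambda (CX_2)_i \leq (1-\lambda) F(X_1) + \lambda F(X_2),
\end{align*}
since $(CX_1)_i \leq F(X_1)$ and $(CX_2)_i \leq F(X_2)$ by definition of the maximum. Taking the maximum over $i$ on the left-hand side preserves the inequality, giving $F((1-\lambda)X_1 + \lambda X_2) \leq (1-\lambda) F(X_1) + \lambda F(X_2)$, which is exactly convexity of $F$. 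One should also note that $\mathbb{X}(C)$ is a convex set (a probability simplex), so the convex combination $(1-\lambda)X_1 + \lambda X_2$ stays in the domain and the statement is well-posed.

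There is essentially no obstacle here — this is a textbook fact and the proof is a two-line computation. The only thing worth being slightly careful about is that the result is claimed for an arbitrary square payoff matrix $C$, with no positivity or symmetry assumption; since the argument only uses linearity of $X \mapsto CX$, nothing about $C$ beyond being a real matrix of matching dimension is needed, so the generality is genuine. I would keep the proof to a few lines and not belabor it.
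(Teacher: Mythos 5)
Your proof is correct and follows essentially the same route as the paper's: both express $F$ as a pointwise maximum of functions linear in $X$ (you use the finite family $\{X \mapsto (CX)_i\}_{i=1}^n$, the paper uses the equivalent $\max_{Y \in \mathbb{Y}} Y \cdot CX$) and conclude convexity from the standard fact that a pointwise maximum of convex functions is convex. The choice of a finite index set versus the simplex is immaterial since the linear objective attains its maximum at a vertex, so there is no substantive difference.
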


\begin{proof}
We may write $F$ as
\begin{align*}
F(X) = \max_Y \left\{ Y \cdot CX \right\}.
\end{align*}
By a basic property of the maximum function, we obtain for all $X \neq X'$ where $X, X' \in \mathbb{X}(C)$,
\begin{align*}
F((1-\epsilon)X + \epsilon X') = \max_Y \left\{ Y \cdot C((1-\epsilon)X + \epsilon X') \right\} \leq (1-\epsilon) \max_Y \left\{ Y \cdot CX \right\} + \epsilon \max_Y \left\{ Y \cdot CX' \right\}.
\end{align*}
Thus, $F$ is convex as claimed.
\end{proof}

\begin{lemma}
\label{Bertsekas_based_lemma}
Let $C$ be a Nisan-Bomze payoff matrix corresponding to a complete graph. Then the equilibrium, say $X^*$, of $C$ is a global minimizer of $F : \mathbb{X}(C) \rightarrow \mathbb{R}$ where $F(X) = (CX)_{\max}$.
\end{lemma}

\begin{proof}
To show that $X^*$ is a global minimizer of $F$, we will show a stronger property that $X^*$ is global minimizer of $(CX)_{\max}$ over all $X$ in the hyperplane
\begin{align*}
X(0) + X(1) + \cdots + X(n) = 1.
\end{align*}
Following \citep[pp. 331-332] {Bertsekas}, a necessary condition for $X^*$ to be a local minimizer of $(CX)_{\max}$ over the previous hyperplane is that there exists a vector $\mu$ such that $\mu \geq 0$ and $\sum \mu_i = 1$ and a scalar $\lambda$ such that
\begin{align*}
C \mu + \lambda \mathbf{1} = 0.
\end{align*}
Letting $\mu = X^*$ and $\lambda = - (1-1/(2n))$ satisfies these conditions. Thus, since, following the proof of Lemma \ref{convexityy_lemma}, $(CX)_{\max}$ is a convex function the aforementioned necessary condition is also sufficient, which implies that $X^*$ is a global minimizer of $(CX)_{\max}$ over the previous hyperplane and, therefore, also of $F$. This completes the proof of the lemma.
\end{proof}

The previous lemma implies by the continuity of $F(X) = (CX)_{\max}$ that there exists a neighborhood $O$ of $X^*$ such that
\begin{align*}
\forall X \in O : \max_{i \in \mathcal{C}(X^*)} \left\{ (CX)_i \right\} > \max_{i=1}^n \left\{ (CX^*)_i \right\}
\end{align*}
which further implies that
\begin{align*}
\forall X \in O : \max_{i = 1}^n \left\{ (CX)_i \right\} > \max_{i=1}^n \left\{ (CX^*)_i \right\}.
\end{align*}
Therefore, we may only consider two possibilities: 

\begin{itemize}

\item The first possibility is that the undesirable fixed point is on the boundary of the upper or lower boundary of the effective interior. On such event, once iterates are in the probability sector of the corresponding fixed point (an event which can be readily detected using the previous definitions by ranking the elements of corresponding iterates and checking if the top iterates correspond to a fixed point), we may temporarily increase $\epsilon$ or $\mathsf{G}^*_0$ until the potential value of the current iterate exceeds the potential value of the corresponding fixed point at which point we may restore the corresponding parameter to its original value. 

\begin{figure}[tb]
\centering
\includegraphics[width=14cm]{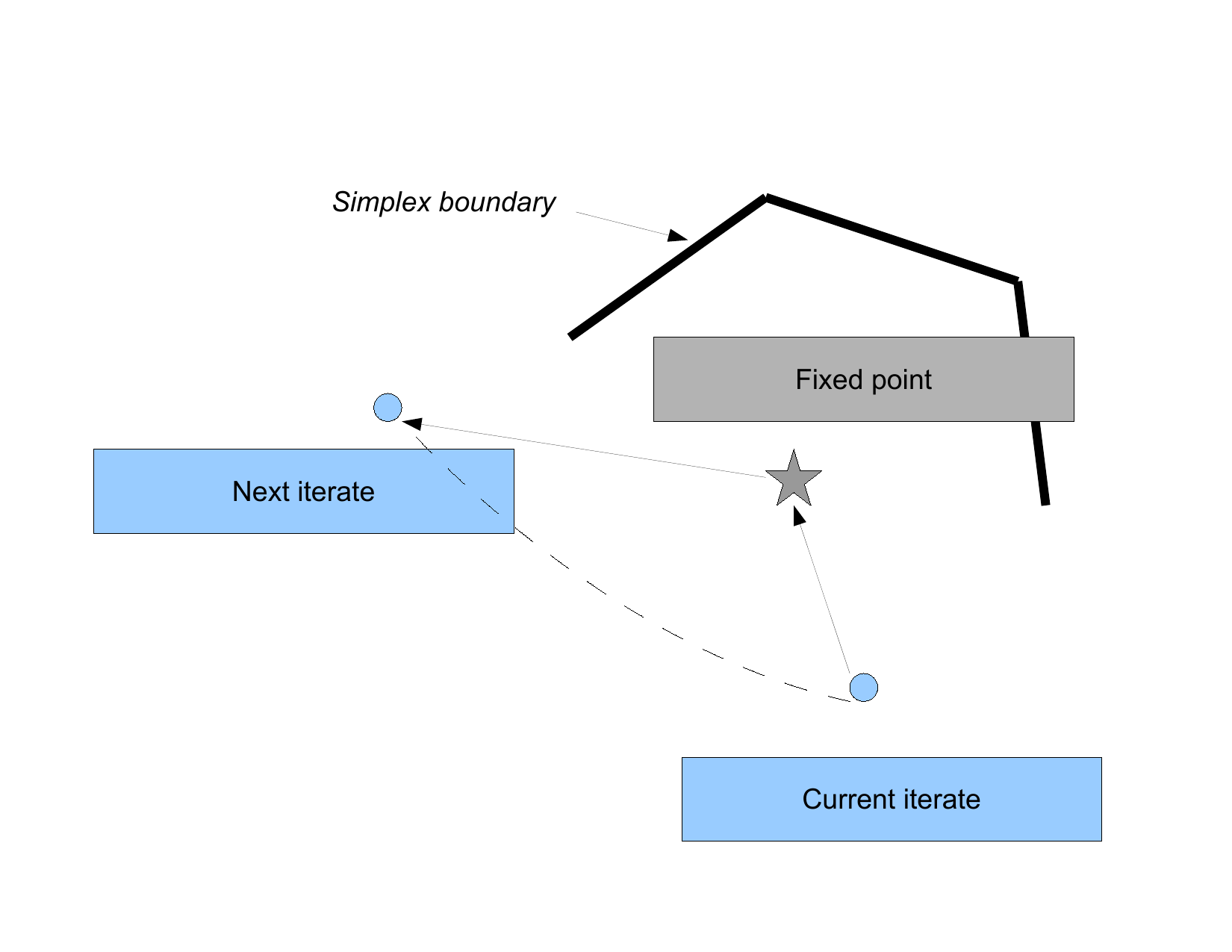}
\caption{\label{leapfrogging}
The leapfrogging mechanism in the typical case.}
\end{figure}

\item The second possibility (see Figure \ref{leapfrogging}), which is the typical case, is that the undesirable fixed point is not on the boundary of neither the upper boundary nor the lower boundary of the effective interior. On such event, we proceed as follows: If an iterate in $\mathbb{F_L}$, say $X$, enters the probability sector of a non-equilibrium fixed point (which is a uniform equalizer) in $\mathbb{F_L}$, say $X^*$, and $\mathsf{P}(X^*) > \mathsf{P}(X)$, then the next iterate, say $\hat{X}$, is selected such that $\hat{X} \cdot C\hat{X} \geq X^* \cdot CX^*$. Computing such a $\hat{X}$ is simple: Since $X^*$ is not an equilibrium, the set $\{ Z \in \mathbb{Y} | (Z-X^*) \cdot CX^* > 0 \}$ (of probability vectors $Z$ corresponding to directional derivatives of the potential function at $X^*$ in the direction from $X^*$ to $Z$) is the intersection of a half-space (cutting through the interior of $\mathbb{Y}$) with $\mathbb{Y}$. Selecting any interior $Z$ in the intersection of this set with $\mathbb{Y}$, for example, by solving the convex optimization problem (wth a self-concordant objective function)
\begin{align*}
\mbox{maximize} \quad &\sum_{i=1}^n \ln(Z(i)) + \ln \left( (Z-X^*) \cdot CX^* \right)\\
\mbox{subject to} \quad &Z \in \mathbb{Y}
\end{align*}
whose optimal solution is the {\em analytic center} of the set $\{ Z \in \mathbb{Y} | (Z-X^*) \cdot CX^* \geq 0 \}$, which can be computed in polynomial time \citep{WAC} (see also \citep{ConvexOptimization, YeBook}),
and, then, using a halving scheme starting at $Z$ and iteratively approaching $X^*$ halving the distance between $Z$ and $X^*$ until the effective interior is reached and the potential function $\mathsf{P}$ increases, will yield a desirable (interior) $\hat{X}$. 

\end{itemize}

The next lemma aims to eliminate the possibility of convergence to an undesirable equilibrium point.

\begin{lemma}
\label{iloveyoustillevenmore}
Uniform equalizers correspond to carriers whose payoff matrices are scalar multiples of doubly stochastic matrices. Uniform equalizers that are not characteristic vectors of cliques are global minima of $\mathsf{P}$ (within their carrier).
\end{lemma}

\begin{proof}
Let us first prove that the carrier (of cardinality $m$) of a uniform equalizer corresponds to a carrier whose payoff matrix is scalar multiple of a doubly stochastic matrix. Denoting the uniform equalizer by $X^*$, and the payoff matrix corresponding to the carrier of $X^*$ by $\hat{C}$, observe that
\begin{align*}
\hat{C}X^* = c \mathbf{1}
\end{align*}
where $\mathbf{1}$ is a vector of ones. The previous equation implies
\begin{align*}
\hat{C} c^* \mathbf{1} = c \mathbf{1}
\end{align*}
which further implies
\begin{align*}
\left( \frac{c^*}{c} \right) \hat{C} \mathbf{1} = \mathbf{1},
\end{align*}
which proves our claim. Since the payoff matrix is a scalar multiple of a doubly stochastic matrix, call it $S$, it implies spherical symmetry of $\mathsf{P}$ over the corresponding tangent space. Let us prove this: We would like to show that 
\begin{align*}
\min \left\{ X \cdot SX \bigg| \frac{1}{2} \sum_{i=1}^m (X(i)-X^*(i))^2 = c \quad \sum_{i=1}^m X(i) = 1 \right\} =
\end{align*}
\begin{align*}
= \max \left\{ X \cdot SX \bigg| \frac{1}{2} \sum_{i=1}^m (X(i) - X^*(i))^2 = c \quad \sum_{i=1}^m X(i) = 1 \right\}
\end{align*}
or, equivalently, that
\begin{align*}
\forall X \in \mathbb{S} \equiv \left\{ X \in \mathbb{R}^m \bigg| \frac{1}{2} \sum_{i=1}^m (X(i) - X^*(i))^2 = c \quad \sum_{i=1}^m X(i) = 1 \right\} : X \cdot SX = \mbox{ constant}.
\end{align*}
We may write KKT conditions either for the minimization or the maximization problem. These read as follows for the minimization problem
\begin{align*}
- SX + \mu (X - X^*) + \lambda \mathbf{1} = 0
\end{align*}
and for the maximization problem they are
\begin{align*}
SX + \mu' (X - X^*) + \lambda' \mathbf{1} = 0
\end{align*}
which imply
\begin{align*}
- S^{\infty} X + \mu S^{\infty} (X - X^*) + \lambda \mathbf{1} = 0
\end{align*}
and 
\begin{align*}
S^{\infty} X + \mu' S^{\infty} (X - X^*) + \lambda' \mathbf{1} = 0,
\end{align*}
where
\begin{align*}
S^{\infty} = \lim_{k \rightarrow \infty} \left\{ S^k \right\}.
\end{align*}
The limit exists since $S$ is assumed to be a doubly stochastic matrix. Taking
\begin{align*}
- \mu S^{\infty} X^* + \lambda \mathbf{1} = 0,
\end{align*}
which is always possible since $X^*$ is a scalar multiple of $\mathbf{1}$, which implies, 
\begin{align*}
- \mu c \mathbf{1} + \lambda \mathbf{1} = 0,
\end{align*}
$\mu' = - \mu$, and $\lambda' = -\lambda$, we obtain that $X$ is a KKT point (for both minimization and maximization problems). Since $X$ is an arbitrary element of the sphere, every point on the sphere is a KKT point and, therefore, spherical symmetry follows: Considering an arc $a : [0, 1] \rightarrow \mathbb{S}$ on the sphere and denoting $X_{a[0]}$ and $X_{a[1]}$ its endpoints, if $X_{a[0]} \cdot S X_{a[0]} \neq X_{a[1]} \cdot S X_{a[1]}$, then there exists a point $t$ on the arc such that $X_{a[t]}$ is not a KKT point of the restricted problem on the arc, which contradicts that $X_{a[t]}$ is a KKT point on the sphere. Therefore, the equalizer $X^*$ is either a local maximum or a local minimum, which implies by the property of $X^*$ being an equalizer, that $X^*$ is either a global maximum or a global minimum. If the underlying graph is not a clique (complete graph), then there exists a maximum clique, which is an equilibrium. If $X^*$ is a global maximum, this leads to a contradiction (as $X^*$ is a GESS, which eliminates the possibility of the presence of other equilibria). Therefore, as claimed, $X^*$ is a global minimum of $\mathsf{P}$ over the carrier of $X^*$.
\end{proof}

\subsection{The secondary dynamical system}
\label{secondary_system}

If the Nisan parameter is equal to the clique number, upon an iterate of the primary dynamical system satisfying the condition $\mathsf{P} > \mathsf{C}_\ell + \epsilon$, the secondary dynamical system is activated in lieu of the primary. The secondary system is comprised of growth transformations for the barrier functions
\begin{align*}
\mathsf{G_2}(X) = \ddfrac{X \cdot CX - \mathsf{C}}{ \left( X \cdot CX  - \mathsf{C}_\ell \right) \prod_{i=1}^n \left( \mathsf{C}_u - (CX)_i \right) } \quad X \in \mathbb{F_L}
\end{align*}
and
\begin{align*}
\mathsf{G_2}'(X) = \ddfrac{\left( X \cdot CX - \mathsf{C} \right) \left( \frac{1}{2} X \cdot X \right)}{ \left( X \cdot CX  - \mathsf{C}_\ell \right) \prod_{i=1}^n \left( \mathsf{C}_u - (CX)_i \right) } \quad X \in \mathbb{F_L}
\end{align*}
with the same parameters as above. The secondary system operates in a fashion analogous to the primary barring the change in the definition of the lower boundary such that the condition $\mathsf{P} > \mathsf{C}_\ell + \epsilon$ is maintained throughout and the upper bound $\alpha_h$ on the learning rate, which is set equal to the solution of the equation
\begin{align*}
(\exp\{\alpha\} - 1)^2 = \frac{1}{4} d^2,
\end{align*}
where $d$ is the Euclidean distance between the midpoints of any pair of adjacent edges of the corresponding probability simplex. (Larger values for $d$ may also be considered.) This facilitates convergence to a unique maximum-clique equilibrium. Analogues of Lemmas \ref{gt_lemma} and \ref{123abcde} and Theorems \ref{fundamental_theorem_on_fixed_points} and \ref{overall_fixed_points} are obtained in a straightforward fashion. Note that the benefit of switching to the secondary system is to obviate invocations of map $T_{\mathsf{G}}^{OPT}$ and, therefore, simplify operation---our dynamical system converges to a maximum-clique equilibrium even if the secondary system is not invoked.

\subsection{The process of initialization of our dynamical system}
\label{initialization}

Ariadne looks for a maximum clique starting with a large value of the Nisan parameter $k$ (possibly the largest, however, upper bounds on the clique number, e.g., \citep{Pardalos-Philips}, can reduce the search space for the appropriate value of the Nisan parameter)  iteratively subtracting one from this parameter upon failure to compute a clique of size equal to $k$. If the Nisan parameter is equal to $\omega(G)$, Ariadne is guaranteed to compute a maximum clique. Appendix \ref{pseudocode} complements our subsequent higher level discussion on the workings of Ariadne in the form of pseudocode.

Initializing our dynamical system involves configuring six parameters, namely, the equilibrium approximation error $\epsilon$, $\mathsf{C}$, $\mathsf{C}_\ell$, $\mathsf{C}_u$, and the initial condition $X^0$. The equilibrium approximation error $\epsilon$ is set equal to 
\begin{align*}
\epsilon = \frac{\epsilon_a^2}{8} \mbox{ where } \epsilon_a = \frac{1}{2} \left( 1 + 1 - \frac{1}{2k} \right) - \frac{1}{2} \left( 1 + 1 - \frac{1}{2(k-1)} \right) = \frac{1}{4} \left( \frac{1}{k-1} - \frac{1}{k} \right)
\end{align*}
$\mathsf{C}$ is set equal to $C_{00}$, $\mathsf{C}_\ell$, the lower bound on the maximum payoff, is set equal to (cf. Lemma \ref{convergence})
\begin{align*}
\mathsf{C}_\ell = \frac{1}{2} \left( \frac{1}{2} \left( 1 + 1 - \frac{1}{2(k-1)} \right) + \frac{1}{2} \left( 1 + 1 - \frac{1}{2k} \right) \right)
\end{align*}
and, $\mathsf{C}_u$, the upper bound on the maximum payoff is set equal to
\begin{align*}
\mathsf{C}_u = \frac{1}{2} \left( \frac{1}{2} \left( 1 + 1 - \frac{1}{2k} \right) + \frac{1}{2} \left( 1 + 1 - \frac{1}{2(k+1)} \right) \right).
\end{align*}
The initial condition $X^0$ is set such that it is strictly upper feasible, such that
\begin{align*}
X^0 \cdot CX^0 \geq \frac{1}{2} \left( 1 + \frac{1}{2} \right),
\end{align*} 
and such that
\begin{align*}
\min_{i=1}^n \left\{ X^0(i) \right\} \geq 2^{-n}.
\end{align*}
To that end, we select a pair of pure strategies, say, $E_i$ and $E_j$, where $i \neq j$, we set $X^0(i)$ and $X^0(j)$ equal to $(1/2) (1 - (n-2) c)$ and $X^0(k), k \neq i, j$ equal to $c$. We are looking for $E_i$, $E_j$, and $c$ that satisfy the previous conditions. We may assume without loss of generality $i=1$ and $j=2$. 

\begin{lemma}
Let $(u, v)$ be an edge of $G$ and let us renumber vertices such that $u \equiv 1$ and $v \equiv 2$. Furthermore, let $C$ be a Nisan-Bomze payoff matrix, for example,
\begin{align*}
C = \frac{1}{2} \left( \left[ \begin{array}{cccccc}
1/2 & 1 & 0 & 1 & 0 & 1 \\
1 & 1/2 & 0 & 1 & 0 & 1 \\
0 & 0 & 1/2 & 1 & 1 & 0 \\
1 & 1 & 1 & 1/2 & 0 & 0 \\
0 & 0 & 1 & 0 & 1/2 & 1 \\
1 & 1 & 0 & 0 & 1 & 1/2 \\
\end{array} \right] +  
\left[ \begin{array}{cccccc}
1 & 1 & 1 & 1 & 1 & 1 \\
1 & 1 & 1 & 1 & 1 & 1 \\
1 & 1 & 1 & 1 & 1 & 1 \\
1 & 1 & 1 & 1 & 1 & 1 \\
1 & 1 & 1 & 1 & 1 & 1 \\
1 & 1 & 1 & 1 & 1 & 1 \\
\end{array} \right] \right),
\end{align*}
and let us write $C$ in block format
\begin{align*}
C \equiv
\left[ \begin{array}{cccccc}
C_{LL} & C_{LR} \\
C_{RL} & C_{RR} \\
\end{array} \right]
\end{align*}
where $C_{LL}$ is $2 \times 2$, $C_{LR}$ is $2 \times n-2$, $C_{RL} = C_{LR}^T$, and $C_{LL}$ is $(n-2) \times (n-2)$. Moreover, let $X^0$ be such that
\begin{align*}
X^0 = \left[ \begin{array}{cccccc}
(1/2) (1 - (n-2) c) \\
(1/2) (1 - (n-2) c) \\
c \\
\vdots \\
c \\
c \\
\end{array} \right]
\equiv \left[ \begin{array}{cccccc}
X^0_u \\
X^0_l \\
\end{array} \right].
\end{align*}
Then, for all
\begin{align*}
c \leq \frac{1}{n-2} \left( 1 - \sqrt{ \ddfrac{3}{ \mathbf{1}^T \cdot C_{LL} \mathbf{1} } } \right),
\end{align*}
we have that $X^0 \cdot CX^0 > (1/2) (1 + 1/2)$.
\end{lemma}

\begin{proof}
We have
\begin{align*}
\left[ \begin{array}{cccccc}
X^0_u & X^0_l \\
\end{array} \right]
\left[ \begin{array}{cccccc}
C_{LL} & C_{LR} \\
C_{RL} & C_{RR} \\
\end{array} \right]
\left[ \begin{array}{cccccc}
X^0_u \\
X^0_l \\
\end{array} \right] = 
\end{align*}
\begin{align*}
= \left[ \begin{array}{cccccc}
X^0_u & X^0_l \\
\end{array} \right]
\left[ \begin{array}{cccccc}
C_{LL} X^0_u + C_{LR} X^0_l \\
C_{RL} X^0_u + C_{RR} X^0_l \\
\end{array} \right]
\end{align*}
\begin{align*}
= X^0_u \cdot C_{LL} X^0_u + X^0_u C_{LR} X^0_l + X^0_l \cdot C_{RL} X^0_u + X^0_l C_{RR} X^0_l
\end{align*}
\begin{align*}
= X^0_u \cdot C_{LL} X^0_u + 2 X^0_u C_{LR} X^0_l + X^0_l C_{RR} X^0_l
\end{align*}
\begin{align*}
= ((1/2) (1 - (n-2) c))^2 \mathbf{1}^T \cdot C_{LL} \mathbf{1} + 2 X^0_u C_{LR} X^0_l + X^0_l C_{RR} X^0_l
\end{align*}
\begin{align*}
> ((1/2) (1 - (n-2) c))^2 \mathbf{1}^T \cdot C_{LL} \mathbf{1} + 2 X^0_u C_{LR} X^0_l
\end{align*}
\begin{align*}
= ((1/2) (1 - (n-2) c))^2 \mathbf{1}^T \cdot C_{LL} \mathbf{1} + 2 (1/2) (1 - (n-2) c) c \mathbf{1}^T C_{LR} \mathbf{1}
\end{align*}
\begin{align*}
> \left(\frac{1}{2} (1 - (n-2) c) \right)^2 \mathbf{1}^T \cdot C_{LL} \mathbf{1}
\end{align*}
We would like to find $c$ such that
\begin{align*}
&\left(\frac{1}{2} (1 - (n-2) c) \right)^2 \mathbf{1}^T \cdot C_{LL} \mathbf{1} \geq \frac{1}{2} \left(1 + \frac{1}{2} \right)\\
\Leftrightarrow &\left(\frac{1}{2} (1 - (n-2) c) \right)^2 \geq \frac{1 + 1/2}{2 \left( \mathbf{1}^T \cdot C_{LL} \mathbf{1}\right)}\\
\Leftrightarrow &\frac{1}{2} (1 - (n-2) c) \geq \sqrt{ \frac{1 + 1/2}{2 \left( \mathbf{1}^T \cdot C_{LL} \mathbf{1} \right)} }\\
\Leftrightarrow &1 - (n-2) c \geq \sqrt{ \frac{3}{\mathbf{1}^T \cdot C_{LL} \mathbf{1}}}\\
\Leftrightarrow &c \leq \frac{1}{n-2} \left( 1 - \sqrt{ \frac{3}{ \mathbf{1}^T \cdot C_{LL} \mathbf{1} } } \right)
\end{align*}
Any $c > 0$ that satisfies the previous inequality, implies $X^0 \cdot CX^0 > (1/2)(1+1/2)$ as claimed.
\end{proof}

The previous lemma suggests a simple algorithm to find $X^0$ that meets our specification, namely, starting at $X^0$ being equal to the uniform strategy and $X^*$ the fixed point on the edge, we iteratively set
\begin{align*}
X^0 \leftarrow \frac{1}{2} (X^0 + X^*)
\end{align*}
until our specification is satisfied. That this algorithm terminates satisfying our specification is implied by the next lemma:

\begin{lemma}
\label{aaaomega}
Suppose there exists a clique of size four and consider an edge in this clique. Renumbering the vertices accordingly, let $X^0$ be such that
\begin{align*}
X^0 = \left[ \begin{array}{cccccc}
(1/2) (1 - (n-2) c) \\
(1/2) (1 - (n-2) c) \\
c \\
\vdots \\
c \\
c \\
\end{array} \right].
\end{align*}
Then, if $c = 2^{-n}$, $X^0$ is strictly upper feasible.
\end{lemma}

\begin{proof}
Let $X^*$ denote the (uniform) equalizer of our edge. We then have
\begin{align*}
C X^0 \equiv C (X^* - c N)
\end{align*}
and, therefore,
\begin{align*}
(CX^0)_{\max} = \| CX^0 \|_{\infty} = \| CX^* - c CN \|_{\infty} \geq | \|CX^*\|_{\infty} - c \| CN \|_{\infty} | = | \|CX^*\|_{\infty} - c (n-2) |.
\end{align*}
Using strategy algebra (in the multiplication of $CX^*$), we obtain $(CX^*)_{\max} = \|CX^*\|_{\infty} = 1$, which implies $(CX^0)_{\max} > C_u$ and this completes the proof.
\end{proof}

We then iterate $J$ until the sequence of iterates enters (or cuts through) the effective interior of the lower feasibility set. Let us prove that the sequence of iterates is guaranteed to do so:

\begin{lemma}
\label{fundamental_lemma}
Starting at any interior strategy of the upper feasibility set or the upper boundary of the lower feasibility set, iterating $J$ is guaranteed to either enter the effective interior of the lower feasibility set or ``cut through'' the effective interior and enter either the lower boundary of the lower feasibility set or directly the infeasibility set. 
\end{lemma}

\begin{proof}
Proposition \ref{Baum_Eagon} implies that $J$ increases the potential function $\mathsf{P}$. \cite[Convergence Theorem 2]{LosertAkin} implies that the sequence of iterates generated by $J$ converges to a fixed point. Therefore, it suffices to show that such fixed point, call $X^*$, is an equilibrium of $C$. Our argument is similar to \citep[Proposition 3]{Pelillo}. Let us assume for the sake of contradiction that $X^*$ is a non-equilibrium fixed point. Then, there exists a pure strategy $j$ such that $(CX^*)_j > (CX^*)_i$ for all $i \in \mathcal{C}(X^*)$. Therefore, by continuity,  there exists a neighborhood $O$ of $X^*$ such that, for all $X \in O$, $(CX)_j > (CX)_i$. Thus, for a sufficiently large iteration count $K \geq 0$ and $\forall k \geq K$, the probability mass of strategy $j$ increases with respect to the probability mass of all strategies $i$, which contradicts $j$ not being in the carrier of $X^*$. Therefore, $J$ converges to an equilibrium fixed point of the replicator dynamic. But all such equilibria are located in the union of the lower feasibility set and the infeasibility set. This completes the proof.
\end{proof}

If the sequence of iterates of $J$ cuts through the effective interior of the lower feasibility set, we backtrack one iteration and invoke map $J_{1/2}$, which selects an intermediate point in the secant line between the last iterate and the second-to-last iterate such that the iterate we obtain becomes a strictly lower feasible strategy (subject to the constraints previously discussed in Section \ref{barrier_lower_bound}). Subsequently, we activate the (primary) dynamical system that keeps the iterates inside the lower feasibility set (barring excursions). Since $X^0 \cdot CX^0 > (1/2) (1+1/2)$ and since $J$ monotonically increases the potential value, the potential value of the first iterate inside the lower feasibility set is $> (1/2)(1 + 1/2)$. Lemma \ref{iloveyoustillevenmore} eliminates the possibility of convergence to a non-clique equalizer.

\subsection{Asymptotic convergence to a maximum-clique equilibrium}
\label{convergence_to_fixed_point}

\begin{proposition}[\citep{LosertAkin}]
\label{LosertAkinProposition}
Suppose a discrete time dynamical system obtained by iterating a continuous map $F : \Delta \rightarrow \Delta$ admits a Lyapunov function $G : \Delta \rightarrow \mathbb{R}$, i.e., $G(F(p)) \geq F(p)$ with equality at $p$ only when $p$ is an equilibrium. The limit point set $\Omega$ of an orbit $\{ p(t) \}$ is then a compact, connected set consisting entirely of equilibria and upon which $G$ is constant.
\end{proposition}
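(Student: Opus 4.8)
The statement is the discrete-time LaSalle invariance principle of Losert--Akin, and my plan is to prove it by a standard compactness argument, reading the Lyapunov condition in the usual way as $G(F(p)) \ge G(p)$ for all $p \in \Delta$, with equality forcing $p$ to be an equilibrium (a fixed point of $F$). The argument splits into three parts: extract a limiting Lyapunov value, identify $\Omega$ with a constant level set of equilibria, and then upgrade this to compactness and connectedness.

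First I would exploit compactness of $\Delta$: the orbit $\{p(t)\}$ is precompact, so $\Omega \ne \emptyset$, and $\Omega$ is automatically closed and therefore compact. Since $G$ is continuous on the compact set $\Delta$ it is bounded, and the Lyapunov inequality makes $t \mapsto G(p(t))$ nondecreasing; a bounded monotone real sequence converges, say $G(p(t)) \to c$. By continuity of $G$, any $q \in \Omega$, being a subsequential limit $q = \lim_k p(t_k)$, satisfies $G(q) = \lim_k G(p(t_k)) = c$, so $G \equiv c$ on $\Omega$. Next I would show every $q \in \Omega$ is an equilibrium: writing $q = \lim_k p(t_k)$ and using continuity of $F$, $F(q) = \lim_k F(p(t_k)) = \lim_k p(t_k+1) \in \Omega$, hence $G(F(q)) = c = G(q)$, and the strict-increase-off-equilibria clause of the hypothesis forces $q$ to be an equilibrium; in particular $F(q) = q$, so $F(\Omega) \subseteq \Omega$ and $\Omega$ is $F$-invariant.

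The remaining point, and the one I expect to be the main obstacle, is connectedness of $\Omega$. The key lemma is that consecutive iterates come together, $d(p(t+1), p(t)) \to 0$. If not, there are $\varepsilon > 0$ and a subsequence with $d(p(t_k+1), p(t_k)) \ge \varepsilon$; passing to a further subsequence with $p(t_k) \to q \in \Omega$ and invoking $F(q) = q$ from the previous step together with continuity of $F$ gives $d(p(t_k+1), p(t_k)) = d(F(p(t_k)), p(t_k)) \to d(F(q), q) = 0$, a contradiction. Granting vanishing step sizes, I would apply the standard topological fact that the limit set of a precompact sequence whose consecutive terms converge together is connected: if $\Omega = A \sqcup B$ with $A, B$ nonempty compact and $\delta := d(A,B) > 0$, then for all large $t$ the orbit lies in the disjoint union of the $(\delta/3)$-neighborhoods of $A$ and of $B$, and once a jump is smaller than $\delta/3$ the orbit can never cross from one neighborhood to the other, contradicting that each of $A$ and $B$, containing a limit point, is approached infinitely often. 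Assembling the pieces shows $\Omega$ is compact, connected, consists entirely of equilibria, and carries $G$ at the constant value $c$. Everything before the connectedness step is routine bookkeeping with continuity and compactness; the genuine content is the vanishing-increment argument, which is exactly where the equilibrium structure of the limit set is used, together with the supporting connected-limit-set lemma.
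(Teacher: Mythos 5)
The paper cites this proposition from Losert--Akin and offers no proof of its own, so there is no internal argument to compare against; you should be judged on the merits of the argument alone. Your proof is correct and is the standard discrete-time LaSalle-type argument for this result. Two small points worth flagging, neither fatal. First, you rightly read the obvious typo $G(F(p)) \geq F(p)$ as $G(F(p)) \geq G(p)$. Second, you quietly added the hypothesis that $G$ is continuous, which the proposition as printed omits but which is indispensable: without it, $G(p(t)) \to c$ gives no control over $G$ on $\Omega$, and the whole argument collapses. In Losert--Akin's original formulation continuity of the Lyapunov function is assumed, and the paper itself acknowledges it in the discussion following Lemma \ref{convergence}, so making it explicit is the right call. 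Beyond that, the decomposition into (i) monotone convergence of $G(p(t))$ to a limit $c$, (ii) $G \equiv c$ on $\Omega$ by continuity, (iii) $F$-invariance of $\Omega$ via $F(q) = \lim_k p(t_k+1) \in \Omega$ followed by the strict-increase clause to get $F(q) = q$, (iv) vanishing increments $d(p(t+1),p(t)) \to 0$ by a subsequence argument using (iii), and (v) connectedness of $\Omega$ from the vanishing-increment lemma, is exactly the canonical proof and each step is carried out correctly; the contradiction in (v), that a jump smaller than $d(A,B)/3$ cannot cross from a neighborhood of $A$ to a neighborhood of $B$ while both components must be visited infinitely often, is the right closing move.
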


\begin{lemma}
\label{convergence}
If the Nisan parameter is equal to the clique number $\omega(G)$, the sequence $\{ X^k \}$ of iterates Ariadne generates converges to a maximum-clique equilibrium.
\end{lemma}

\begin{proof}
\cite{LosertAkin} in their Proposition 1 show that under the assumption the Lyapunov function $G$ is continuous and strictly (monotonically) increasing every limit point of an orbit under $F$ is a fixed point of $F$ upon which $G$ is constant (even if $F$ is discontinuous as is the case for our primary dynamical system). Our Lyapunov function, $\mathsf{P}$, is strictly monotonically increasing and in virtue of Lemma \ref{iloveyoustillevenmore} and the assumption that the initial condition $X^0$ is such that $X^0 \cdot CX^0 > (1/2)(1+1/2)$ (since $(1/2)(1+1/2)$ is the value of the potential function at a pure strategy and Lemma \ref{iloveyoustillevenmore} implies that the potential value of a uniform equalizer is lower than $(1/2)(1+1/2)$), the set of maximum clique equilibria are the unique attractive fixed points of our (primary) dynamical system upon which $\mathsf{P}$ assumes the value $C_{00}$. Therefore, had Ariadne been such that the secondary system were not activated (upon $\mathsf{P} > \mathsf{C}_\ell + \epsilon$), every limit point of the sequence of iterates would have been a maximum-clique equilibrium. The goal of activating the secondary system, in lieu of the primary, is to ensure convergence to a maximum clique equilibrium. Once $\mathsf{P}$ is sufficiently close to $C_{00}$, the secondary dynamical system is activated, giving rise to a dynamical system that retains the property that every limit point of the sequence of iterates is a maximum-clique equilibrium. However, by the upper bound we impose on the learning rate, since maximum-clique equilibria are the only fixed points such that $\mathsf{P} > \mathsf{C}_\ell + \epsilon$ and they are also isolated fixed points, the Euclidean distance between any pair of maximum-clique equilibria is greater than $d$ and inequality \eqref{vgood_inequality} implies that the sequence of iterates converges to a single limit point and, therefore, has a limit, which is a maximum-clique equilibrium. (We note that such restriction can also be imposed on the primary system obviating the need to invoke the secondary.) This completes the proof.
\end{proof}

\section{Ariadne: The secondary sequence of iterates}
\label{Algorithm2}

For any given value of the Nisan parameter, Ariadne either computes an equilibrium of $C$ or detects that the equilibrium approximation bound obtained in the next section has been violated. This process does not apply to the iterates of our dynamical system directly but rather to the empirical average of a sequence of {\em approximate multipliers}, that is, mixed strategies that are obtained by transforming the iterates according to the following process: Let us denote by $X$ the current iterate and by $\hat{X}$ the next iterate. An {\em exact multiplier} $Y$ of $X$ is a strategy Y in $\mathbb{Y}$ such that
\begin{align*}
\hat{X}(i) = X(i) \ddfrac{\exp\{ \alpha' (C Y)_i  \}}{\sum_{j=1}^n X(j) \exp\{ \alpha' (C Y)_j \}} \quad i =1, \ldots, n.
\end{align*}
An {\em approximate multiplier} $Y$ of $X$ is a strategy Y in $\mathbb{Y}$ such that
\begin{align*}
\hat{X}(i) \approx X(i) \ddfrac{\exp\{ \alpha' (C Y)_i  \}}{\sum_{j=1}^n X(j) \exp\{ \alpha' (C Y)_j \}} \quad i =1, \ldots, n.
\end{align*}
An exact multiplier can be obtained in some occasions using the {\em operative matrix} matrix at $X$ (cf. Lemma \ref{gt_lemma}). Using the operative matrix (of the growth transformation for either the first or the second primary barrier function), the next iterate our dynamical system generates can be obtained as
\begin{align*}
\hat{X}(i) = X(i) \ddfrac{\exp\{ \alpha (\bar{C}_X X)_i  \}}{\sum_{j=1}^n X(j) \exp\{ \alpha (\bar{C}_X X)_j \}} \quad i =1, \ldots, n.
\end{align*}
A multiplier strategy can, for example, be obtained by either inverting $C$ and one way to ensure that $Y$ is a probability vector is to configure $\mathsf{C}$ small enough (but greater than $X \cdot CX$).

In general, an exact multiplier is not always possible to obtain. Ariadne, thus, generates a sequence of approximate multipliers. Approximate multipliers are obtained by solving a pair of convex quadratic programs (see Lemma \ref{forever} and the succeeding discussion and derivation). Ariadne carries out this process in every iteration and computes the empirical average of the sequence of multipliers as it is for this sequence that our fixed-point (and equilibrium) approximation bounds apply.  To obtain an approximate multiplier we rely on the inverse function theorem, as used in:

\begin{lemma}
\label{forever}
$\forall$ $X \in \mathbb{Y}$, $\forall \mathcal{Y} \in \mathcal{C}(X)$, and $\forall \alpha > 0$, there exists a locally unique $Y \in \mathbb{R}^n$ such that $\mathcal{Y} = T_Y(X)$, where
\begin{align}
T_Y(X)_i = X(i) \ddfrac{\exp\{ \alpha (C Y)_i  \}}{\sum_{j=1}^n X(j) \exp\{ \alpha (C Y)_j \}} \quad i =1, \ldots, n,\label{equation_system}
\end{align}
unless $(X, Y)$ is a fixed point of \eqref{equation_system}, that is, unless $X$ is a pure strategy or otherwise
\begin{align*}
\forall i, j \in \mathcal{C}(X) : (CY)_{i} = (CY)_{j}. 
\end{align*}
\end{lemma}

\begin{proof}
We are looking to solve the system of equations
\begin{align*}
\mathcal{Y}(i) &\equiv T_Y(X)_i = X(i) \ddfrac{\exp\{ \alpha (C Y)_i  \}}{\sum_{j=1}^n X(j) \exp\{ \alpha (C Y)_j \}} \quad i =1, \ldots, n,
\end{align*}
that is to find $Y$ and $\alpha$ that satisfies this system assuming $X$ and $\mathcal{Y}$ are given subject to the constraints in the statement of the lemma. We would like to apply the inverse function theorem to show that this system always has a solution. To that end, we have
\begin{align*}
\frac{\partial T_Y(X)_i}{\partial Y(j)} = X(i) \ddfrac{ \alpha C_{ij} \exp\{ \alpha (C Y)_i \} \left( \sum_{k=1}^n X(k) \exp\{ \alpha (C Y)_k \} \right)}{\left( \sum_{k=1}^n X(k) \exp\{ \alpha (C Y)_k \} \right)^2 }-
\end{align*}
\begin{align*}
- X(i) \ddfrac{ \alpha \exp\{ \alpha (C Y)_i \} \left( \sum_{k=1}^n X(k) C_{kj} \exp\{\alpha (C Y)_k \} \right)  }{\left( \sum_{k=1}^n X(k) \exp\{ \alpha (C Y)_k \} \right)^2 }
\end{align*}
which implies
\begin{align*}
\frac{\partial T_Y(X)_i}{\partial Y(j)} \sim X(i) C_{ij} \exp\{ \alpha (C Y)_i \} \left( \sum_{k=1}^n X(k) \exp\{ \alpha (C Y)_k \} \right)-
\end{align*}
\begin{align*}
- X(i) \exp\{ \alpha (C Y)_i \} \left( \sum_{k=1}^n X(k) C_{kj} \exp\{\alpha (C Y)_k \} \right)
\end{align*}
and by rearranging
\begin{align*}
= X(i) \exp\{ \alpha (C Y)_i \} \left( C_{ij} \left( \sum_{k=1}^n X(k) \exp\{ \alpha (C Y)_k \} \right) - \left( \sum_{k=1}^n X(k) C_{kj} \exp\{\alpha (C Y)_k \} \right) \right).
\end{align*}
The induced matrix is invertible if and only if the matrix
\begin{align*}
\mathcal{C}_{ij} = C_{ij} \left( \sum_{k=1}^n X(k) \exp\{ \alpha (C Y)_k \} \right) - \left( \sum_{k=1}^n X(k) C_{kj} \exp\{\alpha (C Y)_k \} \right)
\end{align*}
is invertible. We may write $\mathcal{C}$ as
\begin{align*}
\mathcal{C} = \left( \left( \sum_{k=1}^n X(k) \exp\{ \alpha (C Y)_k \} \right) C - D \right)
\end{align*}
where $D$ is a rank one matrix (its rows are identical). $C$ is invertible and $D$ can be written as the outer product of two vectors (since its rows are identical), in particular, as 
\begin{align*}
D = \mathbf{1} v^T
\end{align*}
where $\mathbf{1}$ is a vector of ones and
\begin{align*}
v^T = \left[\sum_{k=1}^n X(k) C_{k1} \exp\{\alpha (C Y)_k \} \quad \cdots \quad \sum_{k=1}^n X(k) C_{kn} \exp\{\alpha (C Y)_k \} \right]
\end{align*}
Therefore, the Sherman-Morrison formula implies that the matrix $\mathcal{C}$ is invertible provided
\begin{align*}
1 - \frac{1}{\sum_{k=1}^n X(k) \exp\{ \alpha (C Y)_k \}} v^T C^{-1} \mathbf{1} \neq 0.
\end{align*}
However,
\begin{align*}
1 - \frac{1}{\sum_{k=1}^n X(k) \exp\{ \alpha (C Y)_k \}} v^T C^{-1} \mathbf{1} > 1 - \mathbf{1}^T C^{-1} \mathbf{1}.
\end{align*}
Furthermore, by the Hartman-Stampacchia theorem, there exists $c < 1$ such that 
\begin{align*}
C^{-1} \mathbf{1} = c Z
\end{align*}
where
\begin{align*}
\sum_{i=1}^n Z(i) = 1.
\end{align*}
Therefore,
\begin{align*}
1 - \mathbf{1}^T C^{-1} \mathbf{1} > 0,
\end{align*}
which implies $\mathcal{C}$ is invertible and, thus, the Jacobian of $T_Y(X)$ is invertible. Hence the lemma.
\end{proof}

To obtain an approximate (in general) multiplier, we may first solve \eqref{equation_system} using a variant of the Levenberg-Marquardt algorithm, in particular, the variant by \cite{Fan}, which has a favorable complexity bound (squared inverse of the norm of the gradient of the merit function) to approximate a solution. However, it is feasible to replace the previous step with an exact polynomial-time algorithm as follows: The system of equations 
\begin{align*}
\hat{X}(i) = X(i) \ddfrac{\exp\{ \alpha (C Y)_i  \}}{\sum_{j=1}^n X(j) \exp\{ \alpha (C Y)_j \}} \quad i =1, \ldots, n
\end{align*}
is equivalent to
\begin{align*}
\left( \sum_{j=1}^n X(j) \exp\{ \alpha (C Y)_j \} \right) \frac{\hat{X}(i)}{X(i)} = \exp\{ \alpha (C Y)_i  \} \quad i =1, \ldots, n
\end{align*}
which is, in turn, equivalent to 
\begin{align*}
\ddfrac{\hat{X}(i)/X(i)}{\hat{X}(j)/X(j)} = \exp\{ \alpha \left( (C Y)_i - (CY)_j \right)  \} \quad i =1, \ldots, n
\end{align*}
which is, in turn, equivalent to 
\begin{align*}
\ln \left( \frac{\hat{X}(i)}{X(i)} \right) - \ln \left( \frac{\hat{X}(j)}{X(j)} \right) = \alpha \left( (C Y)_i - (C Y)_j \right) \quad i, j = 1, \ldots, n \quad i \neq j.
\end{align*}
Therefore, $Y$ and $\alpha$ can be computed as the solution of the linear feasibility program
\begin{align}
(C Y)_i - (C Y)_j = \frac{1}{\alpha} \left( \ln \left( \frac{\hat{X}(i)}{X(i)} \right) - \ln \left( \frac{\hat{X}(j)}{X(j)} \right) \right) \quad i, j = 1, \ldots, n \quad i \neq j.\label{LFP}
\end{align}
Observe that the number of constraints in this program can be reduced to $n-1$. To ensure that the solution of \eqref{LFP} is unique even at fixed points, we solve the following convex quadratic program:
\begin{align}
\mbox{minimize} \quad &\|Y - X\|\notag\\
\mbox{subject to} \quad &(C Y)_i - (C Y)_j = \frac{1}{\alpha} \left( \ln \left( \frac{\hat{X}(i)}{X(i)} \right) - \ln \left( \frac{\hat{X}(j)}{X(j)} \right) \right) \quad i, j = 1, \ldots, n \quad i \neq j.\label{LFP_CQP}
\end{align}
Note that Ariadne does not need to a priori specify a value of $\alpha$ as that can be determined in an optimal fashion by the solution of the previous mathematical program. Having obtained $Y$, an approximate multiplier, call it $\tilde{Y}$ can be obtained by solving the convex quadratic program:
\begin{align}
\min \left\{ \| \tilde{Y} - Y \| \big| \tilde{Y} \in \mathbb{Y} \right\},\label{CQP}
\end{align}
where $\| \cdot \|$ is the Euclidean norm. Such convex quadratic programs admit polynomial-time algorithms to find a solution. Ariadne computes an approximate multiplier in every iteration of the dynamical system and in this way generates a secondary sequence, call it $\{ \tilde{X}^k \}_{k=0}^{\infty}$, of iterates. Given $X^K$ and $X^{K+1}$, Ariadne computes the corresponding approximate multiplier $\tilde{Y}^K$ and generates the next iterate $\tilde{X}^{K+1}$ of the secondary sequence using the equation:
\begin{align*}
\tilde{X}^{K+1}(i) = \tilde{X}^K(i) \ddfrac{\exp\{ \alpha (C \tilde{Y}^K)_i  \}}{\sum_{j=1}^n X(j) \exp\{ \alpha (C \tilde{Y}^K)_j \}} \quad i =1, \ldots, n.
\end{align*}
Note that the learning rate used in the iteration that generates the secondary sequence of iterates does not have to be equal to the learning rate that is obtained as a solution of the aforementioned mathematical program that is used to determine the optimal exact multiplier (out of which the approximate multiplier is obtained). The combination of optimization problems that gives $\tilde{Y}^K$ is the map of a dynamical system that receives as input an iterate of the principal dynamical system, denoted as $X^{K+1}$, and generates as output $\tilde{Y}^K$. By Berge's maximum theorem and the strict convexity of the objective function, $\tilde{Y}^K$ is continuous as a function of $Y^K$, an observation we rely upon in the sequel (noting in passing that it is also continuous as a function of $X^{K+1}$). Ariadne is in fact more complicated as discussed in the sequel. Note that the learning rate Ariadne uses in generating the secondary sequence of iterates is constant (to facilitate detecting when the Nisan parameter is greater than the clique number and, therefore, that this parameter should decrease). A range of suitable values for the learning rate is derived in Appendix \ref{repelling} (on repelling fixed points).

\subsection{Convergence of multipliers implies convergence of iterates}

Considering the map
\begin{align}
T_Y(X)_i = X(i) \cdot \frac{\exp\left\{ \alpha (CY)_i \right\}}{ \sum_{j=1}^n X(j) \exp \left\{ \alpha (CY)_j \right\} } \quad i = 1, \ldots, n,\label{eqxxx}
\end{align}
we have the following lemma:

\begin{lemma}
\label{nonuniformlemma}
Suppose $X^0$ is an arbitrary interior strategy. Then
\begin{align*}
\forall i,j \in \mathcal{K}(C) : (E_i - E_j) \cdot C\bar{Y}^K = \frac{1}{A_K} \ln \left( \frac{ X^{K+1}(i) }{ X^0(i) } \right) - \frac{1}{A_K} \ln \left( \frac{X^{K+1}(j)}{X^0(j)}\right).
\end{align*}
\end{lemma}

\begin{proof}
Let $T_Y(X) \equiv \hat{X}$. Then straight algebra gives
\begin{align*}
\frac{\hat{X}(i)}{\hat{X}(j)} = \frac{X(i)}{X(j)} \exp\{\alpha ((CY)_i - (CY)_j)\}
\end{align*}
and taking logarithms on both sides we obtain
\begin{align*}
\ln \left( \frac{\hat{X}(i)}{\hat{X}(j)} \right) = \ln \left( \frac{X(i)}{X(j)} \right) + \alpha ((CY)_i - (CY)_j).
\end{align*}
We may write the previous equation as
\begin{align*}
\ln \left( \frac{X^{k+1}(i)}{X^{k+1}(j)} \right) = \ln \left( \frac{X^k(i)}{X^k(j)} \right) + \alpha_k ((CY^k)_i - (CY^k)_j)
\end{align*}
Summing over $k = 0, \ldots K$, we obtain
\begin{align*}
\ln \left( \frac{X^{K+1}(i)}{X^{K+1}(j)} \right) = \ln \left( \frac{X^0(i)}{X^0(j)} \right) + \sum_{k=0}^K \alpha_k ((CY^k)_i - (CY^k)_j)
\end{align*}
and dividing by $A_K$ and rearranging, we further obtain
\begin{align*}
\frac{1}{A_K} \ln \left( \frac{X^{K+1}(i)}{X^{K+1}(j)} \right) = \frac{1}{A_K} \ln \left( \frac{X^0(i)}{X^0(j)} \right) + (E_i - E_j) \cdot C\bar{Y}^K
\end{align*}
which implies
\begin{align*}
(E_i - E_j) \cdot C\bar{Y}^K = \frac{1}{A_K} \ln \left( \frac{ X^{K+1}(i) }{ X^0(i) } \right) - \frac{1}{A_K} \ln \left( \frac{X^{K+1}(j)}{X^0(j)}\right)
\end{align*}
as claimed. 
\end{proof}

\begin{lemma}
\label{elementary_convergence_lemma}
If the sequence of multipliers $\left\{ Y^k \right\}$ converges, then the sequence of empirical averages $\left\{ \bar{Y}^K \right\}$ also converges to the same limit.
\end{lemma}

\begin{proof}
Assume the sequence $\left\{ Y^k \right\}$ converges and let $X^*$ be its limit. Then
\begin{align*}
\lim_{k \rightarrow \infty} \| X^* - Y^k \| = 0,
\end{align*}
where $\| \cdot \|$ is the Euclidean norm. Then the Stolz-Ces\'aro theorem implies that
\begin{align}
\lim_{K \rightarrow \infty} \left\{ \frac{1}{A_K} \sum_{k=0}^K \alpha_k \| X^* - Y^k \| \right\} = 0.\label{apricot1}
\end{align}
The convexity of the Euclidean distance function gives that
\begin{align}
\| X^* - \bar{Y}^K \| \leq \frac{1}{A_K} \sum_{k=0}^K \alpha_k \| X^* - Y^k \|.\label{apricot2}
\end{align}
\eqref{apricot1} and \eqref{apricot2} together imply
\begin{align*}
\lim_{K \rightarrow \infty} \| X^* - \bar{Y}^K \| = 0.
\end{align*}
Thus $\left\{ \bar{Y}^K \right\}$ also converges to $X^*$ as claimed.
\end{proof}

\begin{lemma}
\label{elementary_liminf_lemma_nonuniform}
If the sequence of multipliers converges to a maximal-clique equilibrium, say $X^*$, and $A_K \rightarrow \infty$, then the probability mass of every pure strategy outside the carrier of $X^*$ vanishes.
\end{lemma}

\begin{proof}
Since a maximal-clique equilibrium is a regular ESS (a property that is simple to verify from the structure of the Nisan-Bomze payoff matrix), if $i$ is a pure strategy in the carrier of $X^*$ and $j$ a pure strategy outside the carrier of $X^*$, then
\begin{align*}
(CX^*)_i > (CX^*)_j.
\end{align*}
Lemma \ref{elementary_convergence_lemma} implies that the empirical average of the sequence of multipliers also converges to $X^*$. Since the sequence $\{ C\bar{Y}^{K} \}$ converges to $CX^*$, the limit
\begin{align*}
\lim_{K \rightarrow \infty} \left\{ \frac{1}{A_K} \ln \left( \frac{ X^{K+1}(i) }{ X^0(i) } \right) - \frac{1}{A_K} \ln \left( \frac{X^{K+1}(j)}{X^0(j)}\right) \right\}
\end{align*}
exists by Lemma \ref{nonuniformlemma}. The same lemma further implies that
\begin{align*}
(CX^*)_i - (CX^*)_j = \lim_{K \rightarrow \infty} \left\{ \frac{1}{A_K} \ln \left( \frac{ X^{K+1}(i) }{ X^0(i) } \right) - \frac{1}{A_K} \ln \left( \frac{X^{K+1}(j)}{X^0(j)}\right) \right\} > 0.
\end{align*}
Let $\{X^{K_{\ell}}\}$ be a convergent subsequence such that
\begin{align*}
\lim_{\ell \rightarrow \infty} \left\{ \frac{1}{A_{K_{\ell}}} \ln \left( \frac{X^{K_{\ell}+1}(j)}{X^0(j)} \right) \right\} = \limsup\limits_{K \rightarrow \infty} \left\{ \frac{1}{A_{K}} \ln \left( \frac{X^{K+1}(j)}{X^0(j)} \right) \right\}.
\end{align*}
Then,
\begin{align*}
\lim_{K \rightarrow \infty} \left\{ \frac{1}{A_{K}} \ln \left( \frac{X^{K_{\ell}+1}(i)}{X^0(i)} \right) \right\} > \limsup\limits_{\ell \rightarrow \infty} \left\{ \frac{1}{A_{K_\ell}} \ln \left( \frac{X^{K_\ell+1}(j)}{X^0(j)} \right) \right\},
\end{align*}
which implies
\begin{align*}
\limsup\limits_{\ell \rightarrow \infty} \left\{ \frac{1}{A_{K_\ell}} \ln \left( \frac{X^{K_\ell+1}(j)}{X^0(j)} \right) \right\} < 0,
\end{align*}
which further implies, by the assumption $A_K \rightarrow \infty$,
\begin{align*}
\limsup\limits_{\ell \rightarrow \infty} \left\{ X^{K_\ell+1}(j) \right\} = 0,
\end{align*}
which even further implies
\begin{align*}
\lim_{K \rightarrow \infty} \left\{ X^{K+1}(j) \right\} = 0
\end{align*}
as claimed.
\end{proof}

\begin{lemma}
\label{multipliersconvergenceimpliesiteratesconvergence}
If the sequence of approximate multipliers converges to a maximal-clique equilibrium, the sequence of iterates converges to a fixed point in the maximal-clique equilibrium's carrier.
\end{lemma}

\begin{proof}
Let us denote the maximal-clique equilibrium the sequence of multipliers converges to by $X^*$. Lemma \ref{elementary_liminf_lemma_nonuniform} implies that the probability mass of every pure strategy outside the carrier of $X^*$ vanishes. This implies from equation \eqref{eqxxx} and straight algebra that at infinity the sequence of iterates takes the value of a fixed point in the maximal-clique equilibrium's carrier. Since the sequence of iterates is a continuous function of the sequence of multipliers, the sequence of iterates converges to that fixed point and, therefore, the lemma follows.
\end{proof}

\subsection{Ensuring approximate multipliers converge to a maximum clique}

If the Nisan parameter is equal to the clique number, the sequence of iterates converges to a maximum-clique equilibrium and as an implication of the method that generates the sequences of exact and approximate multipliers, the latter sequences also converge to a maximum-clique equilibrium---furthermore, as an implication of Lemma \ref{elementary_convergence_lemma} their empirical average converges likewise. That is, at infinity, the corresponding approximate multiplier is the same equilibrium that the sequence of (exact) multipliers converges to. Since the map that generates the sequence $\{ \tilde{Y}^k \}_{k=0}^{\infty}$ of approximate multipliers is continuous in the input (in that $\tilde{Y}^k$ is a continuous function of $Y^k$) and the sequence $\{Y^k\}_{k=0}^{\infty}$ converges (to a maximum-clique equilibrium), we obtain that $\{ \tilde{Y}^k \}_{k=0}^{\infty}$ also converges to the same equilibrium. (To summarize the argument more abstractly, we have two sequences that both assume the same value at infinity, one sequence converges to that value, and the second sequence is obtained by a continuous map from the first. We then conclude that the second sequence also converges to the value that it assumes at infinity). However, the secondary sequence of iterates may converge to a pure strategy and in this event we cannot analytically guarantee a polynomial upper bound on the algorithm's execution. To ensure that the secondary sequence of iterates converges to a maximum-clique equilibrium, we make sure that the maximum payoff of the iterates of the secondary sequence of multipliers remains bounded away from the value the maximum payoff assumes at a pure strategy, which is equal to one. In this way, the iterates generated by the sequence of approximate multipliers will be sure not to converge to a pure strategy and as we will see, this implies that we analytically prove a polynomial upper bound on the algorithm's execution. The mechanism by which we prevent the maximum payoff from assuming values close to the maximum is by interleaving iterations of map $T$ in the sequence of approximate multipliers---the technique is similar to the technique we previously employed to guarantee an upper bound on the maximum payoff in the primary sequence of iterates albeit that mechanism is based on the discrete-time replicator dynamic, whereas in the secondary sequence of multipliers we use $T$: Upon detecting an iterate of the secondary sequence whose maximum payoff exceeds $C_{00} + \epsilon$, where $0 < \epsilon < 1 - C_{00}$, our algorithm interleaves rounds of $T$ until the maximum payoff drops below $C_{00} + \epsilon$ at which point iterations using the sequence of approximate multipliers resume. We call the sequence of multipliers that ensues from the interleaving of approximate multipliers and $T$ the {\em extended sequence of approximate multipliers}. In the next lemma, we show that the extended sequence of approximate multipliers drives the secondary sequence of iterates to a maximum clique.

\begin{lemma}
\label{intermed_lemma}
If
\begin{align*}
T_Y(X)_i = X(i) \cdot \frac{\exp\left\{ \alpha (CY)_i \right\}}{ \sum_{j=1}^n X(j) \exp \left\{ \alpha (CY)_j \right\} } \quad i = 1, \ldots, n,
\end{align*}
converges to a clique, then that clique is a maximal clique.
\end{lemma}

\begin{proof}
If $T_Y$ converges to a clique, there exist at least a pair of pure strategies $i$ such that
\begin{align*}
\liminf\limits_{K \rightarrow \infty} \left\{ \ln \left( \ddfrac{X^{K+1}(i)}{X^0(i)} \right) \right\} > - \infty.
\end{align*}
Lemma \ref{nonuniform_lemma} gives
\begin{align*}
\max^{1\mbox{st}}_{p \in \{1, \ldots, n\}} \left\{ (C\bar{Y}^K)_p \right\} - \max^{2\mbox{nd}}_{p \in \{1, \ldots, n\}} \left\{ (C\bar{Y}^K)_p \right\} =
\end{align*}
\begin{align}
= \left(\ln \left( \max^{1\mbox{st}}_{p \in \{1, \ldots, n\}} \left\{ \ddfrac{X^{K+1}(p)}{X^0(p)} \right\} \right) - \ln \left( \max^{2\mbox{nd}}_{p \in \{1, \ldots, n\}} \left\{ \ddfrac{X^{K+1}(p)}{X^0(p)} \right\} \right) \right) \ddfrac{1}{A_K}\label{11111}
\end{align}
Theorem \ref{equilibrium_error_nonuniform} gives, provided $K$ is large enough such that \eqref{cond} is satisfied,
\begin{align*}
\ln \left( \max^{2\mbox{nd}}_{p \in \{1, \ldots, n\}} \left\{ \ddfrac{X^{K+1}(p)}{X^0(p)} \right\} \right) > 0 \Rightarrow
\end{align*}
\begin{align}
\max^{2\mbox{nd}}_{p \in \{1, \ldots, n\}} \left\{ (C\bar{Y}^K)_p \right\} - \bar{Y}^K \cdot C \bar{Y}^K \leq \ln \left( \max^{2\mbox{nd}}_{p \in \{1, \ldots, n\}} \left\{ \ddfrac{X^{K+1}(p)}{X^0(p)} \right\} \right) \ddfrac{2c}{A_K},\label{22222}
\end{align}
or
\begin{align*}
\ln \left( \max^{2\mbox{nd}}_{p \in \{1, \ldots, n\}} \left\{ \ddfrac{X^{K+1}(p)}{X^0(p)} \right\} \right) < 0 \Rightarrow
\end{align*}
\begin{align}
\max^{2\mbox{nd}}_{p \in \{1, \ldots, n\}} \left\{ (C\bar{Y}^K)_p \right\} - \bar{Y}^K \cdot C \bar{Y}^K \leq \ln \left( \max^{2\mbox{nd}}_{p \in \{1, \ldots, n\}} \left\{ \ddfrac{X^{K+1}(p)}{X^0(p)} \right\} \right) \ddfrac{2/c}{A_K}.\label{33333}
\end{align}
Summing \eqref{11111} and \eqref{22222}, we obtain
\begin{align*}
\max^{1\mbox{st}}_{p \in \{1, \ldots, n\}} \left\{ (C\bar{Y}^K)_p \right\} - \bar{Y}^K \cdot C \bar{Y}^K \leq \left(\ln \left( \max^{1\mbox{st}}_{p \in \{1, \ldots, n\}} \left\{ \ddfrac{X^{K+1}(p)}{X^0(p)} \right\} \right) + (2c-1) \ln \left( \max^{2\mbox{nd}}_{p \in \{1, \ldots, n\}} \left\{ \ddfrac{X^{K+1}(p)}{X^0(p)} \right\} \right) \right) \ddfrac{1}{A_K}
\end{align*}
Summing \eqref{11111} and \eqref{33333}, we obtain 
\begin{align*}
\max^{1\mbox{st}}_{p \in \{1, \ldots, n\}} \left\{ (C\bar{Y}^K)_p \right\} - \bar{Y}^K \cdot C \bar{Y}^K \leq \left(\ln \left( \max^{1\mbox{st}}_{p \in \{1, \ldots, n\}} \left\{ \ddfrac{X^{K+1}(p)}{X^0(p)} \right\} \right) + \left(\frac{2}{c}-1\right) \ln \left( \max^{2\mbox{nd}}_{p \in \{1, \ldots, n\}} \left\{ \ddfrac{X^{K+1}(p)}{X^0(p)} \right\} \right) \right) \ddfrac{1}{A_K}
\end{align*}
In our case, in the fashion we have transformed the Nisan-Bomze payoff matrix so that all payoff entries are positive, we have $c=2$. Taking the limit as $K \rightarrow \infty$, we obtain that the clique $T_Y$ converges to is an equilibrium and, therefore, it is a maximal clique.
\end{proof}

\begin{lemma}
\label{kflklfklf}
If the Nisan parameter is equal to the clique number and the learning rate used in invocations of $T$ as the extended sequence of approximate multipliers is generated is small enough, then the extended sequence of approximate multipliers converges to a maximum-clique equilibrium and the secondary sequence of iterates likewise converges to the same maximum-clique equilibrium.
\end{lemma}

\begin{proof}
If the learning rate is small enough, then $T$ does not converge to a fixed point whose maximum payoff is equal to or greater than $C_{00} + \epsilon$. This is the subject of Appendix \ref{repelling} where we also compute an appropriate value for the learning rate to prevent the possibility of convergence to such a fixed point. Under the assumption that the learning rate is small enough, the extended sequence of approximate multipliers converges to a maximum-clique equilibrium: $T$ is invoked a finite number of times and the limit of the sequence is equal to the limit of the sequence of approximate multipliers, which is a maximum-clique equilibrium. Lemma \ref{multipliersconvergenceimpliesiteratesconvergence} continues to hold for the extended sequence of approximate multipliers. If in the secondary sequence of iterates, the probability mass of a pure strategy in the carrier of this maximum-clique equilibrium vanishes, then Lemma \ref{intermed_lemma} implies the existence of an equilibrium inside the carrier of the maximum-clique equilibrium, which is an impossibility. Therefore, the secondary sequence of iterates likewise converges to the same maximum-clique equilibrium as the extended sequence of approximate multipliers as claimed.
\end{proof}

\section{Computation of a maximum clique requires polynomial time}
\label{Complexity}

In this section, we complete the proof that {\bf P = NP} by discussing how to configure the approximation error of the dynamical system such that a maximum-clique equilibrium can be computed in a polynomial number of iterations and then analyzing the complexity of the system's execution.

\subsection{On the ``minimum positive gap'' of a symmetric bimatrix game}

Our goal herein is to define a concept that is able to transform the equilibrium approximation algorithm based on Hedge to a polynomial computation algorithm in the Nisan game. But let us start more generally with the setting of symmetric bimatrix games: Let $C$ be a symmetric bimatrix game and $X \in \mathbb{X}(C)$. We may give a preliminary definition of the {\em gap} $\Gamma_C(X)$ of $X \in \mathbb{X}(C)$ as
\begin{align*}
\Gamma_C(X) \equiv (CX)_{\max} - (CX)_{\min}.
\end{align*}
Our motivation for introducing this definition has as follows: Every pure or mixed strategy of a symmetric bimatrix game $C$ has a gap (except for equalizers). One way to define a ``minimum gap'' is as the minimum over all strategies of $C$. But $C$ has sub-games. The sub-games that are carriers of fixed points have minimum gap of zero. Sub-games that do not carry fixed points also have a positive minimum gap (as sub-games). It is meaningful that in the definition of the minimum gap we take the sub-games into account and here is why: Let us extend the previous preliminary definition and define the {\em extended gap} $\Gamma_{CC'}(X)$ of $X \in \mathbb{X}(C)$ as
\begin{align*}
\Gamma_{CC'}(X) \equiv (CX)_{\max} - (C'X)_{\min},
\end{align*}
where $C'$ is a subgame of $C$ (padded with zeros so that the dimensions of $C$ and $C'$ agree. Furthermore, define the the {\em minimum positive gap} of $C$, call is $\gamma_{\min}(C)$ as
\begin{align*}
\gamma_{\min}(C) \equiv \min_{C'} \left\{ \min_{X \in \mathbb{X}(C)} \left\{ \Gamma_{CC'}(X) | \Gamma_{CC'}(X) > 0 \right\} \right\}.
\end{align*}
where the first minimization is taken over all subgames of $C$. We claim that a $\gamma_{\min}/2$-well-supported equilibrium, call it $\hat{X}$, lies inside the carrier of an equilibrium (which we can readily compute knowing the carrier). Let us assume for the sake of contradiction that the carrier of $\hat{X}$ does not carry an equilibrium (which is an equalizer of the carrier). Then with Proposition \ref{my_10_cents} in mind there is a gap equal to or greater than $\gamma_{\min}$ (inside the carrier), which is an impossibility given that $\gamma_{\min}/2$-well-supported equilibrium exists. Hence the claim. In the sequel, we are concerned with the the Nisan game. In this game, a related to the above but more appropriate, in that it simplifies the analysis, definition of gap is as follows:
\begin{align}
\hat{\Gamma}(X) \equiv \max \{ X \cdot CX | X \in \mathbb{X}(C) \} - X \cdot CX\label{gdefin1}
\end{align}
where $C$ is the Nisan-Bomze payoff matrix barring strategy $0$. We may then define the minimum positive gap as
\begin{align}
\hat{\gamma}_{\min} \equiv \min_{C'} \left\{ \min_{X \in \mathbb{X}(C')} \left\{ \hat{\Gamma}(X) | \hat{\Gamma}(X) > 0 \right\} \right\}\label{gdefin2}
\end{align}
where the first minimization is taken over all subgames of $C$. As above, we claim that a $\gamma_{\min}/2$-well-supported equilibrium, call it $\hat{X}$, lies inside the carrier of an equilibrium (which we can readily compute knowing the carrier) provided $E_0$ is a GNSS of the Nisan game but not a GESS so that a maximum-clique equilibrium exists. Let us assume for the sake of contradiction that the carrier, say $\hat{C}'$ of $\hat{X}$ does not carry an equilibrium (which is an equalizer of the carrier, where the term equalizer is to be understood with the latest definition of gap). Then, keeping again Proposition \ref{my_10_cents} in mind, there is a gap equal to or greater than $\gamma_{\min}$ (inside the carrier), that is, 
\begin{align*}
\max \{ X \cdot \hat{C}'X | X \in \mathbb{X}(\hat{C}') \} - X \cdot \hat{C}'X = (\hat{C}'X)_{\max} - X \cdot \hat{C}'X \geq\gamma_{\min}
\end{align*}
which is an impossibility given that $\gamma_{\min}/2$-well-supported equilibrium exists implying
\begin{align*}
(\hat{C}'X)_{\max} - (\hat{C}'X)_{\min} \leq \gamma_{\min}/2.
\end{align*} 
Hence the claim. We have the following theorem:

\begin{theorem}
\label{iloveyoucat}
The minimum positive gap $\hat{\gamma}_{\min}$ of $C$ is at least
\begin{align*}
\frac{1}{2} \left( 1 + 1 - \frac{1}{2k} \right) - \frac{1}{2} \left( 1 + 1 - \frac{1}{2(k-1)} \right) = \frac{1}{4} \left( \frac{1}{(k-1)} - \frac{1}{k} \right)
\end{align*}
where $k$ is the size of the maximum clique.
\end{theorem}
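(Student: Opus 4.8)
The plan is to recognize $\hat\gamma_{\min}$ as the difference between the maximum of the potential $X\cdot CX$ and the \emph{second-highest} potential value attained at a fixed point of the replicator dynamic, and then to bound that second value by Lemma~\ref{spaceout}. First I would unwind \eqref{gdefin1}--\eqref{gdefin2}: for any subgame $C'$ and any $X\in\mathbb{X}(C')$ one has $X\cdot CX=X\cdot C'X$, so $\hat\Gamma(X)=\bigl(1-\tfrac{1}{2k}\bigr)-X\cdot C'X$, where $1-\tfrac{1}{2k}=1-\tfrac{1}{2\omega(G)}$ is the global maximum of the potential on $\mathbb{X}(C)$ (the Motzkin--Straus value formula as sharpened by Bomze, recalled in the introduction). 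The positive values of $\hat\Gamma$ that the minimization in \eqref{gdefin2} actually selects are those taken at carriers of fixed points of the replicator dynamic: by Lemma~\ref{fixed_points_Hedge} these fixed points are the pure strategies and the fully supported equalizers of a subgame, and an argument as in Lemma~\ref{lman} shows that all equalizers sharing a given carrier form a flat on which $X\cdot CX$ is constant, so that $\hat\Gamma$ ranges over a finite set of values. Moreover every fixed point of the replicator dynamic with carrier inside $C'$ is already a fixed point of the full game $C$, so the outer minimization over subgames amounts to minimizing over all fixed points of $C$. Hence $\hat\gamma_{\min}=\bigl(1-\tfrac1{2k}\bigr)-v^{\ast}$, where $v^{\ast}$ is the largest potential value attained at a fixed point of the replicator dynamic that is strictly below $1-\tfrac1{2k}$, and it suffices to prove $v^{\ast}\le 1-\tfrac1{2(k-1)}$.

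For the latter I would argue by cases on a fixed point $X$ with $X\cdot CX<1-\tfrac1{2k}$. If $X$ is the characteristic vector of a clique of size $c$, then $X\cdot CX=1-\tfrac1{2c}$, and $1-\tfrac1{2c}<1-\tfrac1{2k}$ forces $c\le k-1$, whence $X\cdot CX\le 1-\tfrac1{2(k-1)}$. If $X$ is a fixed point that is not a clique characteristic vector, then Lemma~\ref{spaceout} gives $X\cdot CX\le 1-\tfrac{1}{2(\omega(G)-1)}=1-\tfrac1{2(k-1)}$ directly (in a disconnected graph this is applied to the component containing the carrier of $X$, whose clique number $\omega'$ satisfies $2\le\omega'\le k$, so that $1-\tfrac1{2(\omega'-1)}\le 1-\tfrac1{2(k-1)}$; the case $k=1$ is vacuous since then no fixed point has positive gap). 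In both cases $X\cdot CX\le 1-\tfrac1{2(k-1)}$, so $v^{\ast}\le 1-\tfrac1{2(k-1)}$ and therefore
\[
\hat\gamma_{\min}\;\ge\;\Bigl(1-\tfrac1{2k}\Bigr)-\Bigl(1-\tfrac1{2(k-1)}\Bigr)\;=\;\tfrac1{2(k-1)}-\tfrac1{2k},
\]
which is the claimed bound. If one prefers instead to measure $\hat\Gamma$ inside each subgame $H$ relative to that subgame's own maximum $1-\tfrac1{2\omega(H)}$, the same two-case argument within $H$ (with $\omega(H)=c\le k$) yields a positive gap of at least $\tfrac1{2(c-1)}-\tfrac1{2c}=\tfrac1{2c(c-1)}$, a quantity decreasing in $c$ and hence minimized over $c\le k$ at $c=k$, giving the same bound.

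The entire substantive content is Lemma~\ref{spaceout}; the two-case inequality and the arithmetic are routine. The step I expect to demand the most care is therefore the reduction in the first paragraph: making precise that the minimum in \eqref{gdefin2} is effectively over carriers of fixed points rather than over all of $\mathbb{X}(C')$ --- over which the infimum of the positive values of $\hat\Gamma$ would degenerate to $0$ --- and verifying that the potential value attached to a carrier is well defined and depends only on the induced subgraph on that carrier, so that Lemma~\ref{spaceout} together with the bound $\omega(\text{component})\le\omega(G)=k$ can be invoked without circularity. Once that reduction is in place the proof closes immediately.
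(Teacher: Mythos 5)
Your proposal is correct and follows essentially the same route as the paper's own proof: both reduce $\hat\gamma_{\min}$ to the Motzkin--Straus/Bomze value $1-\tfrac{1}{2k'}$ of the best subgame not carrying a maximum clique, observe that $k'\le k-1$, and invoke Lemma~\ref{spaceout} to control non-clique fixed points. Your version is more explicit about the case split (clique versus non-clique fixed points) and about the fact that the constrained infimum in \eqref{gdefin2} must be read over carriers of fixed points to avoid degenerating to $0$, a point the paper's proof leaves implicit, but the substance is the same.
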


\begin{proof}
Let $\hat{C}'$ be a subgame of $C$. Then
\begin{align*}
\min_{X \in \mathbb{X}(\hat{C}')} \left\{ \hat{\Gamma}(X) \right\} = \max \{ X \cdot \hat{C}X | X \in \mathbb{X}(\hat{C}) \} - \max \{ X \cdot \hat{C}X | X \in \mathbb{X}(\hat{C}') \}
\end{align*}
where
\begin{align*}
\max \{ X \cdot \hat{C}' X | X \in \mathbb{X}(\hat{C}') \} = \frac{1}{2} \left( 1 + 1 - \frac{1}{2k'} \right)
\end{align*}
where $k'$ is the maximum clique in the subgraph corresponding to $\hat{C}'$. The maximum possible clique smaller than the maximum clique has one vertex less. Therefore,
\begin{align*}
\hat{\gamma}_{\min} \equiv \min_{\hat{C}'} \left\{ \min_{X \in \mathbb{X}(\hat{C}')} \left\{ \hat{\Gamma}(X) | \hat{\Gamma}(X) > 0 \right\} \right\} \geq \frac{1}{2} \left( 1 + 1 - \frac{1}{2k} \right) - \frac{1}{2} \left( 1 + 1 - \frac{1}{2(k-1)} \right)
\end{align*}
as claimed.
\end{proof}

\subsection{Computational complexity of the dynamical system's execution}

\begin{lemma}
\label{theorem1}
Suppose the Nisan parameter $k$ is equal to or greater than the clique number and that the multiplier learning rate in each iteration is lower bounded by a constant.\footnote{It, for example, satisfies \begin{align*}
\alpha (C_{00} + \epsilon) - (\exp\{\alpha\}-1) C_{00} > 0.
\end{align*}
where $C_{00} + \epsilon < 1$ (see Appendix \ref{repelling}).} Suppose further that the equilibrium approximation error is set equal to
\begin{align*}
\epsilon = \frac{\epsilon_a^2}{8} \mbox{ where } \epsilon_a = \frac{1}{4} \left( \frac{1}{k-1} - \frac{1}{k} \right).
\end{align*}
(i) If the Nisan parameter $k$ is equal to the clique number $\omega(G)$, the empirical average of the extended sequence of approximate multipliers converges to a maximum-clique equilibrium and attains the aforementioned equilibrium approximation error in a polynomial number of iterations in the inverse of the approximation error.\\ 
(ii) If the (primary) sequence of iterates does not converge to a fixed point, either the empirical average of the extended sequence of approximate multipliers converges to an equilibrium and attains the equilibrium approximation error in a polynomial number of iterations in the inverse of the approximation error or the equilibrium approximation error bound is violated in a polynomial number of iterations in the inverse of the approximation error.
\end{lemma}

\begin{proof}
{\em (i)} If the Nisan parameter is equal to the clique number, that the empirical average of the extended sequence of approximate multipliers converges to a maximum-clique equilibrium is an implication of Lemma \ref{kflklfklf}. Upon attainment of an equilibrium approximation error of $\epsilon$ conditions \eqref{cond2} / \eqref{cond3} in Theorem \ref{equilibrium_error_nonuniform} are satisfied and therefore condition \eqref{cond} in the same theorem is also satisfied. That condition \eqref{cond3} is satisfied is a simple implication of the following calculation: Assuming pure strategy $p$ carries the maximum-clique the empirical average has approximated, then
\begin{align*}
\max_{i=1}^n \left\{ (C\bar{\tilde{Y}}^K)_i \right\} \leq (C\bar{\tilde{Y}}^K)_p + \frac{1}{4} \left( \left(1 + 1 - \frac{1}{2k} \right) - \left(1 + 1 - \frac{1}{2(k-1)} \right) \right)
\end{align*}
which implies
\begin{align*}
(C\bar{\tilde{Y}}^K)_p \geq \frac{1}{4} \max_{i=1}^n \left\{ (C\bar{\tilde{Y}}^K)_i \right\} + \frac{1}{4} \left(1 + 1 - \frac{1}{2(k-1)} \right) 
\end{align*}
and, assuming the underlying graph $G$ is not complete, further implies
\begin{align*}
\frac{1}{2} \left( 1 + 1 - \frac{1}{2(k-1)} \right) \geq \min_{i=1}^n \left\{ (C\bar{\tilde{Y}}^K)_i \right\},
\end{align*}
which even further implies
\begin{align*}
(C\bar{\tilde{Y}}^K)_p \geq \frac{1}{2} \left( \max_{i=1}^n \left\{ (C\bar{\tilde{Y}}^K)_i \right\} + \min_{i=1}^n \left\{ (C\bar{\tilde{Y}}^K)_i \right\} \right).
\end{align*}
Lemma \ref{kflklfklf} implies that all pure strategies $p$ in the carrier of the limit of $\bar{\tilde{Y}}^K$ satisfy 
\begin{align*}
\liminf\limits_{K \rightarrow \infty} \left\{ \ln \left( \ddfrac{\tilde{X}^{K+1}(p)}{\tilde{X}^0(p)} \right) \right\} > - \infty.
\end{align*}
There are now two possibilities, namely, either
\begin{align*}
\ln \left( \max^{2\mbox{nd}}_{p \in \{1, \ldots, n\}} \left\{ \ddfrac{X^{K+1}(p)}{X^0(p)} \right\} \right) > 0 \Rightarrow
\end{align*}
\begin{align}
\max^{2\mbox{nd}}_{p \in \{1, \ldots, n\}} \left\{ (C\bar{Y}^K)_p \right\} - \bar{Y}^K \cdot C \bar{Y}^K \leq \ln \left( \max^{2\mbox{nd}}_{p \in \{1, \ldots, n\}} \left\{ \ddfrac{X^{K+1}(p)}{X^0(p)} \right\} \right) \ddfrac{2c}{A_K},\label{22222a}
\end{align}
or
\begin{align*}
\ln \left( \max^{2\mbox{nd}}_{p \in \{1, \ldots, n\}} \left\{ \ddfrac{X^{K+1}(p)}{X^0(p)} \right\} \right) < 0 \Rightarrow
\end{align*}
\begin{align}
\max^{2\mbox{nd}}_{p \in \{1, \ldots, n\}} \left\{ (C\bar{Y}^K)_p \right\} - \bar{Y}^K \cdot C \bar{Y}^K \leq \ln \left( \max^{2\mbox{nd}}_{p \in \{1, \ldots, n\}} \left\{ \ddfrac{X^{K+1}(p)}{X^0(p)} \right\} \right) \ddfrac{2/c}{A_K}\label{33333a}
\end{align}
where
\begin{align*}
c = \ddfrac{\max_{ij} C_{ij}}{\min_{ij} C_{ij}} = 2.
\end{align*}
Furthermore, Lemma \ref{nonuniform_lemma} gives
\begin{align*}
\max^{1\mbox{st}}_{p \in \{1, \ldots, n\}} \left\{ (C\bar{\tilde{Y}}^K)_p \right\} - \max^{2\mbox{nd}}_{p \in \{1, \ldots, n\}} \left\{ (C\bar{\tilde{Y}}^K)_p \right\} =
\end{align*}
\begin{align*}
= \left(\ln \left( \max^{1\mbox{st}}_{p \in \{1, \ldots, n\}} \left\{ \ddfrac{\tilde{X}^{K+1}(p)}{X^0(p)} \right\} \right) - \ln \left( \max^{2\mbox{nd}}_{p \in \{1, \ldots, n\}} \left\{ \ddfrac{\tilde{X}^{K+1}(p)}{X^0(p)} \right\} \right) \right) \ddfrac{1}{A_K}.
\end{align*}
Summing the previous inequalities, we obtain in the first case that
\begin{align*}
\max^{1\mbox{st}}_{p \in \{1, \ldots, n\}} \left\{ (C\bar{\tilde{Y}}^K)_p \right\} - \bar{\tilde{Y}}^K \cdot C \bar{\tilde{Y}}^K \leq \left(\ln \left( \max^{1\mbox{st}}_{p \in \{1, \ldots, n\}} \left\{ \ddfrac{\tilde{X}^{K+1}(p)}{X^0(p)} \right\} \right) + (2c-1) \ln \left( \max^{2\mbox{nd}}_{p \in \{1, \ldots, n\}} \left\{ \ddfrac{\tilde{X}^{K+1}(p)}{X^0(p)} \right\} \right) \right) \ddfrac{1}{A_K},
\end{align*}
which implies
\begin{align}
\max^{1\mbox{st}}_{p \in \{1, \ldots, n\}} \left\{ (C\bar{\tilde{Y}}^K)_p \right\} - \bar{\tilde{Y}}^K \cdot C \bar{\tilde{Y}}^K \leq \left(\ln \left( \max^{1\mbox{st}}_{p \in \{1, \ldots, n\}} \left\{ \ddfrac{\tilde{X}^{K+1}(p)}{X^0(p)} \right\} \right) \right) \ddfrac{2c}{A_K}\label{ccc}
\end{align}
and in the second case that
\begin{align*}
\max^{1\mbox{st}}_{p \in \{1, \ldots, n\}} \left\{ (C\bar{\tilde{Y}}^K)_p \right\} - \bar{\tilde{Y}}^K \cdot C \bar{\tilde{Y}}^K \leq \left(\ln \left( \max^{1\mbox{st}}_{p \in \{1, \ldots, n\}} \left\{ \ddfrac{\tilde{X}^{K+1}(p)}{X^0(p)} \right\} \right) + (2/c-1) \ln \left( \max^{2\mbox{nd}}_{p \in \{1, \ldots, n\}} \left\{ \ddfrac{\tilde{X}^{K+1}(p)}{X^0(p)} \right\} \right) \right) \ddfrac{1}{A_K},
\end{align*}
which implies (since as mentioned earlier $c=2$)
\begin{align}
\max^{1\mbox{st}}_{p \in \{1, \ldots, n\}} \left\{ (C\bar{\tilde{Y}}^K)_p \right\} - \bar{\tilde{Y}}^K \cdot C \bar{\tilde{Y}}^K \leq \left(\ln \left( \max^{1\mbox{st}}_{p \in \{1, \ldots, n\}} \left\{ \ddfrac{\tilde{X}^{K+1}(p)}{X^0(p)} \right\} \right) \right) \ddfrac{1}{A_K}.\label{ccc2}
\end{align}
If the learning rate is lower bounded by $\alpha$ we obtain that $A_K \geq \alpha (K+1)$ and the lemma follows.\\

{\em (ii)} If the sequence of iterates does not converge to a fixed point either the equilibrium approximation error $\epsilon$ is attained, which implies either
\begin{align*}
\max^{2\mbox{nd}}_{p \in \{1, \ldots, n\}} \left\{ (C\bar{\tilde{Y}}^K)_p \right\} - \bar{\tilde{Y}}^K \cdot C \bar{\tilde{Y}}^K \leq \ln \left( \max^{2\mbox{nd}}_{p \in \{1, \ldots, n\}} \left\{ \ddfrac{\tilde{X}^{K+1}(p)}{X^0(p)} \right\} \right) \ddfrac{2c}{A_K}
\end{align*}
or
\begin{align*}
\max^{2\mbox{nd}}_{p \in \{1, \ldots, n\}} \left\{ (C\bar{\tilde{Y}}^K)_p \right\} - \bar{\tilde{Y}}^K \cdot C \bar{\tilde{Y}}^K \leq \ln \left( \max^{2\mbox{nd}}_{p \in \{1, \ldots, n\}} \left\{ \ddfrac{\tilde{X}^{K+1}(p)}{X^0(p)} \right\} \right) \ddfrac{1}{A_K}
\end{align*}
which further implies by the previous derivation that
\begin{align*}
\max^{1\mbox{st}}_{p \in \{1, \ldots, n\}} \left\{ (C\bar{\tilde{Y}}^K)_p \right\} - \bar{\tilde{Y}}^K \cdot C \bar{\tilde{Y}}^K \leq \left(\ln \left( \max^{1\mbox{st}}_{p \in \{1, \ldots, n\}} \left\{ \ddfrac{\tilde{X}^{K+1}(p)}{X^0(p)} \right\} \right) \right) \ddfrac{2c}{A_K}
\end{align*}
or that
\begin{align*}
\max^{1\mbox{st}}_{p \in \{1, \ldots, n\}} \left\{ (C\bar{\tilde{Y}}^K)_p \right\} - \bar{\tilde{Y}}^K \cdot C \bar{\tilde{Y}}^K \leq \left(\ln \left( \max^{1\mbox{st}}_{p \in \{1, \ldots, n\}} \left\{ \ddfrac{\tilde{X}^{K+1}(p)}{X^0(p)} \right\} \right) \right) \ddfrac{1}{A_K}
\end{align*}
and, therefore, that the empirical average of the sequence of multipliers converges to an equilibrium in a polynomial number of iterations in the inverse of the approximation error or otherwise this bound is necessarily violated once the (polynomial) upper bound on the number of iterations to attain an equilibrium approximation of $\epsilon$ is attained. Noting that such an upper bound on the number of iterations can be readily computed once the learning rate used to generate the secondary sequence of iterates is, for example, fixed (cf. Section \ref{repelling}), this completes the proof of the lemma.
\end{proof}

\if(0)

\begin{lemma}
\label{elementary_convergence_lemma}
If the sequence of multipliers $\left\{ Y^k \right\}$ converges, then the sequence of empirical averages $\left\{ \bar{Y}^K \right\}$ also converges to the same limit.
\end{lemma}

\begin{proof}
Assume the sequence $\left\{ Y^k \right\}$ converges and let $X^*$ be its limit. Then
\begin{align*}
\lim_{k \rightarrow \infty} \| X^* - Y^k \| = 0,
\end{align*}
where $\| \cdot \|$ is the Euclidean norm. Then the Stolz-Ces\'aro theorem implies that
\begin{align}
\lim_{K \rightarrow \infty} \left\{ \frac{1}{A_K} \sum_{k=0}^K \alpha_k \| X^* - Y^k \| \right\} = 0.\label{apricot1}
\end{align}
The convexity of the Euclidean distance function gives that
\begin{align}
\| X^* - \bar{Y}^K \| \leq \frac{1}{A_K} \sum_{k=0}^K \alpha_k \| X^* - Y^k \|.\label{apricot2}
\end{align}
\eqref{apricot1} and \eqref{apricot2} together imply
\begin{align*}
\lim_{K \rightarrow \infty} \| X^* - \bar{Y}^K \| = 0.
\end{align*}
Thus $\left\{ \bar{Y}^K \right\}$ also converges to $X^*$ as claimed.
\end{proof}

\begin{theorem}
\label{theorem2}
Suppose the Nisan parameter is equal to the clique number and that the multiplier learning rate in each iteration is lower bounded by a constant. Then, the sequence of iterates converges to a maximum-clique equilibrium and the empirical average of the sequence of multipliers also converges to the same maximum-clique equilibrium in a polynomial number of iterations in the inverse of the approximation error.
\end{theorem}

\begin{proof}
If the Nisan parameter is equal to $\omega(G)$, since by Lemma \ref{convergence} the sequence of iterates converges, the sequence of multipliers (as well as their average) also converges to the same limit. This is an implication of method that generates the sequence of multipliers from the sequence of iterates (as discussed earlier---cf. Section \ref{sequence_of_multipliers}). To show the convergence of the (weighted) empirical average of the multipliers it we may either invoke Lemma \ref{elementary_convergence_lemma} or Lemma \ref{theorem1}. Lemma \ref{theorem1} further implies the polynomial bound on the number of iterations in the inverse of the approximation error.
\end{proof}

\fi

If the fixed-point approximation error is set such that
\begin{align*}
\epsilon = \frac{\epsilon_a^2}{8} \mbox{ where } \epsilon_a = \frac{1}{4} \left( \frac{1}{\omega(G)-1} - \frac{1}{\omega(G)} \right),
\end{align*}
then, provided the Nisan parameter is equal to the clique number, on attainment of this approximation error, the corresponding approximate well-supported equilibrium (cf. Section \ref{approx_equilibria}) is in the carrier of a maximum-clique equilibrium, which implies a maximum-clique can be readily computed. Therefore, starting at any value of the Nisan parameter greater than the clique number and following the algorithmic steps of Ariadne, a maximum clique is guaranteed to be computed:
\begin{itemize}

\item If the Nisan parameter is equal to $\omega(G)$ our dynamical system is ensured to generate a sequence of iterates that are able to compute a maximum clique in polynomial time. 

\item If the Nisan parameter is greater than $\omega(G)$, the number of iterations required to detect so so as to decrease the Nisan parameter by one is also polynomial, implying that the Nisan parameter eventually becomes equal to the clique number and a maximum clique is computed.

\end{itemize}

\begin{theorem}
If in every iteration our algorithm solves the convex quadratic programs \eqref{LFP_CQP} and \eqref{CQP} to compute an approximate multiplier, in the worst case, the number of iterations that our algorithm requires to compute a maximum-clique equilibrium is $O(n^6)$.
\end{theorem}

\begin{proof}
Referring back to $\eqref{ccc}$ since, by Lemma \ref{aaaomega},
\begin{align*}
\ln \left( \max^{1\mbox{st}}_{p \in \{1, \ldots, n\}} \left\{ \ddfrac{\tilde{X}^{K+1}(p)}{X^0(p)} \right\} \right) = O\left( \frac{1}{n} \right),
\end{align*}
and the desired fixed-point approximation error is
\begin{align*}
\epsilon = O \left( \frac{1}{n^4} \right)
\end{align*}
and, furthermore, since the Nisan parameter is decreased up to $O(n)$ times, the bound follows.
\end{proof}

* Due to Lemma \ref{aaaomega}, the previous bounds require that $G$ is equipped with a clique of size four.

\section{Closing remarks and future work}
\label{conclusion}

In the course of numerically testing Ariadne, we observed a phenomenon whereby the discrete-time replicator dynamic returned an iterate wherein the probability masses of various pure strategies where close to zero and that to recover from such iterate and continue the execution of our dynamical system we found it beneficial to use the leapfrogging mechanism. Further work is required to understand this phenomenon. 
In the course of testing a related algorithm, we observed a phenomenon whereby Hedge did not increase the potential function. The iterate was close to a fixed point and the value of the learning rate was large. We believe this phenomenon can be attributed to a numerical roundoff error (possibly to the implementation of the exponential function). 

Further work is required to understand how Ariadne interacts with commodity software and hardware systems and to devise designs that are backward compatible with these systems.

In closing, we would like to point out that a feature that is not unique to Ariadne (for example, see \citep{Ciaran}) is that it admits a parallel implementation. We leave the details of such an implementation as future work. We would also like to raise the possibility that the research presented in this paper would benefit from the efficient computation of {\em centroids or barycenters} \citep{SW, Shvartsman}. Although there exist efficient randomized algorithms for computing barycenters of convex bodies (for example, see \citep{Bertimas}) it is an interesting question whether such algorithms can be derandomized (cf. \citep{Rademacher}). We leave this as a question for future work. We would finally like to point out that our result can shed further light on the exact relationship among complexity classes using {\em inapproximability results} that have been obtained from the maximum clique problem (see \citep{QWu} for a summary of these results). This is also an exciting question for future work.

\section*{Acknowledgments}

This paper has benefitted from my interaction with my YouTube account and I thank those that are responsible for the configuration of the content in that account.

\bibliographystyle{abbrvnat}
\bibliography{real}

\appendix

\section{Hedge as a growth transformation}

Our main result in this section is that Hedge is a growth transformation for all $\alpha > 0$. In our proof of this result, we follow \cite{Baum-Eagon}. We will need two auxiliary results:

\begin{proposition}[H\"older's inequality]
For all $(x_1, \ldots, x_n), (y_1, \ldots, y_n) \in \mathbb{R}^n$, we have
\begin{align*}
\sum_{k=1}^n |x_k| |y_k| \leq \left( \sum_{k=1}^n |x_k|^p \right)^{\frac{1}{p}} \left( \sum_{k=1}^n |y_k|^q \right)^{\frac{1}{q}}
\end{align*}
where
\begin{align*}
\frac{1}{p} + \frac{1}{q} = 1, \quad p, q > 1.
\end{align*}
\end{proposition}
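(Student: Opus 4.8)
The plan is to derive H\"older's inequality from Young's inequality by a normalization argument, the standard route. First I would establish Young's inequality: for nonnegative reals $a, b$ and conjugate exponents $p, q > 1$ with $1/p + 1/q = 1$, one has $ab \le a^p/p + b^q/q$. For $a, b > 0$ this is immediate from the concavity of the natural logarithm applied to the points $a^p$ and $b^q$ with weights $1/p$ and $1/q$: concavity gives $\ln(a^p/p + b^q/q) \ge (1/p)\ln(a^p) + (1/q)\ln(b^q) = \ln a + \ln b$, and exponentiating yields the claim; the cases $a = 0$ or $b = 0$ are trivial. (Equivalently, one can invoke convexity of $\exp$ directly, or differentiate $t \mapsto t^p/p + 1/q - t$.)

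Next I would dispose of the degenerate cases of the proposition itself. If $\sum_{k=1}^n |x_k|^p = 0$, then every $x_k = 0$, so the left-hand side is $0$ and the inequality holds; likewise if $\sum_{k=1}^n |y_k|^q = 0$. So I may assume both sums are strictly positive and set $A = \left( \sum_{k=1}^n |x_k|^p \right)^{1/p}$ and $B = \left( \sum_{k=1}^n |y_k|^q \right)^{1/q}$, both positive.

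Then, for each index $k$, I apply Young's inequality with $a = |x_k|/A$ and $b = |y_k|/B$, obtaining $|x_k|\,|y_k| / (AB) \le |x_k|^p / (p A^p) + |y_k|^q / (q B^q)$. Summing over $k = 1, \ldots, n$ and using $\sum_k |x_k|^p = A^p$ and $\sum_k |y_k|^q = B^q$, the right-hand side telescopes to $1/p + 1/q = 1$, so $\sum_k |x_k|\,|y_k| \le AB$, which is precisely the asserted bound.

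There is no substantial obstacle here: the result is classical and the argument is purely algebraic once Young's inequality is in hand. The only points demanding any care are the justification of Young's inequality via concavity of $\ln$ (or convexity of $\exp$) and the clean handling of the degenerate cases where one of the normalizing constants vanishes; everything else is bookkeeping.
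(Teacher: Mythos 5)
Your proof is correct: the route through Young's inequality (established via concavity of $\ln$), followed by the normalization $a = |x_k|/A$, $b = |y_k|/B$ and summation, is the standard textbook derivation, and you handle the degenerate cases $A=0$ or $B=0$ cleanly. For context, the paper does not actually prove this proposition — it is stated without proof as a classical auxiliary fact (alongside the weighted AM--GM inequality) to be used in the growth-transformation lemmas of the appendix — so there is no in-paper argument to compare against; your supplied proof simply fills in a standard result the authors took for granted.
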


\begin{proposition}[Weighted AM-GM inequality]
\label{AMGM_inequality}
Given nonnegative numbers $x_1, \ldots, x_n$ and nonnegative weights $w_1, \ldots, w_n$, set $w = w_1 + \cdots w_n$. If $w > 0$, then the inequality
\begin{align*}
\frac{w_1 x_1 + \cdots + w_n x_n}{w} \geq \left( x_1^{w_1} \cdots x_n^{w_n} \right)^{\frac{1}{w}}
\end{align*}
holds with equality if and only if all the $x_k$ with $w_k > 0$ are equal.
\end{proposition}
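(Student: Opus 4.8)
The plan is to reduce to the case of weights summing to one and then invoke the strict concavity of the logarithm, in the convenient pointwise form $\ln t \le t - 1$ for $t > 0$ (with equality only at $t = 1$). Since $w > 0$, set $\bar{w}_k = w_k/w$, so that $\bar{w}_k \ge 0$ and $\sum_{k=1}^n \bar{w}_k = 1$; raising both sides of the claimed inequality to the power $1/w$ shows it is equivalent to $\sum_k \bar{w}_k x_k \ge \prod_k x_k^{\bar{w}_k}$, with an unchanged equality condition. Next I would dispose of the degenerate cases: an index with $w_k = 0$ contributes the factor $1$ to the product and $0$ to the sum and hence may be discarded, and if some $x_k = 0$ with $\bar{w}_k > 0$ then the right-hand side is $0$ and the inequality is immediate, holding with equality precisely when every $x_j$ with $\bar{w}_j > 0$ equals $0$ --- which matches the stated characterization in the case of common value $0$.

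For the main case, assume $x_k > 0$ whenever $\bar{w}_k > 0$ and set $A = \sum_k \bar{w}_k x_k > 0$. Applying $\ln t \le t-1$ at $t = x_k/A$ for each such $k$ and forming the $\bar{w}_k$-weighted sum, the right-hand side collapses (using $\sum_k \bar{w}_k = 1$) to $A^{-1}\sum_k \bar{w}_k x_k - \sum_k \bar{w}_k = 0$, so
\[
\sum_{k} \bar{w}_k \ln x_k \;\le\; \ln A,
\]
and exponentiating yields $\prod_k x_k^{\bar{w}_k} \le A$, the desired inequality. Equality throughout forces $x_k/A = 1$, hence $x_k = A$, for every $k$ with $\bar{w}_k > 0$; conversely, if all the $x_k$ with $w_k > 0$ share a common value $c$, then (discarding the zero-weight indices as above) both sides equal $c$.

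The result is classical, so I do not anticipate a genuine obstacle; the only part that demands attention is the bookkeeping of the boundary cases ($w_k = 0$ and $x_k = 0$) so that the equality clause ``all the $x_k$ with $w_k > 0$ are equal'' is recovered exactly, including when that common value is zero. An equivalent route applies Jensen's inequality directly to the concave function $\ln$, but the tangent-line formulation above keeps the equality analysis transparent; Hölder's inequality, stated just before, is not needed for this particular proposition.
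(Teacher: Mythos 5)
Your proof is correct. The paper states Proposition~\ref{AMGM_inequality} without proof, treating it---like H\"older's inequality immediately before it---as a classical auxiliary fact, so there is no internal argument to compare against; your tangent-line route via $\ln t \le t-1$ is a standard, self-contained derivation, and you handle the zero-weight and zero-value boundary cases as well as the equality clause cleanly. One small expository slip worth noting: the passage to the normalized form $\sum_k \bar{w}_k x_k \ge \prod_k x_k^{\bar{w}_k}$ does not involve ``raising both sides to the power $1/w$''---under the substitution $\bar{w}_k = w_k/w$ the two displays are literally the same (the left side is already divided by $w$, and the right side already carries the exponent $1/w$). This is a harmless rewording issue, not a gap in the argument.
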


\cite{Baum-Eagon} show that:

\begin{proposition}
\label{Baum_Eagon}
Let $P(X) = P(\{x_{ij}\})$ be a polynomial with nonnegative coefficients homogeneous in its variables $\{x_{ij}\}$. Let $x = \{x_{ij}\}$ be any point in the domain $D$, where
\begin{align*}
D = \left\{ x \bigg| x_{ij} \geq 0, \sum_{j = 1}^{q_i} x_{ij} = 1, i = 1, \ldots, p, j = 1, \ldots, q_i \right\}.
\end{align*}
For $x = \{ x_{ij} \} \in D$, let $J(X) = J(\{x_{ij} \})$ denote the point of $D$ whose $i, j$ coordinate is
\begin{align*}
J(x)_{ij} = x_{ij} \frac{\left.\frac{\partial P}{\partial x_{ij}} \right|_{(x)}}{\sum_{j=1}^{q_i} x_{ij} \left. \frac{\partial P}{\partial x_{ij}} \right|_{(x)}}.
\end{align*}
Then $J$ is a growth transformation for $P$, that is, $P(J(x)) > P(x)$ unless $J(x) = x$.
\end{proposition}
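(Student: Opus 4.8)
The plan is to run the classical Baum--Eagon argument, whose only real ingredient is the weighted AM--GM inequality of Proposition~\ref{AMGM_inequality} (H\"older's inequality can stand in for one sub-step if preferred). Fix a point $x \in D$ at which $J(x)$ is defined and write $y = J(x)$. First I would expand $P$ into monomials, $P(z) = \sum_{\mathbf{d}} c_{\mathbf{d}} \prod_{ij} z_{ij}^{d_{ij}}$ with $c_{\mathbf{d}} \geq 0$, discard the (finitely many) monomials that vanish at $x$, and factor $z_{ij}^{d_{ij}} = x_{ij}^{d_{ij}} (z_{ij}/x_{ij})^{d_{ij}}$ in the survivors; since $J(x)$ is defined the block sums $S_i \equiv \sum_j x_{ij}\,\partial P/\partial x_{ij}|_x$ are positive, hence $P(x) > 0$. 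This rewrites the ratio as a convex combination,
\begin{align*}
\frac{P(y)}{P(x)} \;=\; \sum_{\mathbf{d}} w_{\mathbf{d}} \prod_{ij}\Big(\frac{y_{ij}}{x_{ij}}\Big)^{d_{ij}}, \qquad w_{\mathbf{d}} \equiv \frac{c_{\mathbf{d}}\prod_{ij} x_{ij}^{d_{ij}}}{P(x)} \geq 0, \qquad \sum_{\mathbf{d}} w_{\mathbf{d}} = 1.
\end{align*}

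Next I would apply the weighted AM--GM inequality to this convex combination to get
\begin{align*}
\frac{P(y)}{P(x)} \;\geq\; \prod_{ij}\Big(\frac{y_{ij}}{x_{ij}}\Big)^{\bar{d}_{ij}}, \qquad \bar{d}_{ij} \equiv \sum_{\mathbf{d}} w_{\mathbf{d}}\, d_{ij},
\end{align*}
and then observe that the averaged exponent collapses to the normalized gradient, $\bar{d}_{ij} = x_{ij}\,\partial P/\partial x_{ij}|_x \big/ P(x)$, which by the definition of $J$ equals $y_{ij} S_i / P(x)$; in particular $\sum_j \bar{d}_{ij} = S_i/P(x)$ within each block $i$. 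So the whole problem reduces to showing $\prod_{ij}(y_{ij}/x_{ij})^{\bar d_{ij}} \geq 1$, strictly unless $y = x$.

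The hard part (really the only non-bookkeeping step) is this last inequality, and I would handle it block by block. For fixed $i$, raising the block factor to the power $P(x)/S_i > 0$ turns its exponents into $\bar d_{ij}\, P(x)/S_i = y_{ij}$, which sum to $1$ over $j$; a second application of weighted AM--GM, now with weights $y_{ij}$ applied to the numbers $x_{ij}/y_{ij}$ (this is just Gibbs' inequality for the probability vectors $y_{i\cdot}$ and $x_{i\cdot}$), gives $\prod_j (x_{ij}/y_{ij})^{y_{ij}} \leq \sum_j y_{ij}\cdot(x_{ij}/y_{ij}) = \sum_j x_{ij} = 1$, i.e. $\prod_j (y_{ij}/x_{ij})^{y_{ij}} \geq 1$. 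Raising back and multiplying over $i$ yields $P(J(x)) \geq P(x)$. For strictness: if $y \neq x$ then $y_{i\cdot} \neq x_{i\cdot}$ for some $i$, and since $y_{i\cdot}, x_{i\cdot}$ both sum to $1$ the equality condition of AM--GM fails in that block, making that factor $> 1$; hence $P(J(x)) > P(x)$.

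I expect the friction points to be purely organizational: checking that a monomial with $d_{ij} > 0$ but $x_{ij} = 0$ contributes $0$ to both $P(x)$ and $P(y)$ (because $y_{ij} = x_{ij}\cdot(\cdots) = 0$ there) so that all surviving ratios $y_{ij}/x_{ij}$ are legitimate; deducing $S_i > 0$ and $P(x) > 0$ from the mere fact that $J(x)$ is defined (using nonnegativity of the coefficients together with Euler's relation $\sum_{ij} x_{ij}\,\partial P/\partial x_{ij} = (\deg P)\,P$); and carrying the AM--GM equality condition faithfully through both applications to pin down that $J(x) = x$ is the only equality case.
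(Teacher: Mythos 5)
Your proof is correct, but note that the paper does not actually give a proof of Proposition~\ref{Baum_Eagon}---it is stated as a result cited from \cite{Baum-Eagon}. So the closest in-paper comparison is its proof of the related Lemma~\ref{Baum_Eagon_extension_powers_4}, which extends the statement to the $k$-indexed family of maps $J^k$ and from which the Hedge map is recovered in the limit in Lemma~\ref{Baum_Eagon_extension_exponential_function}. Your argument is a clean rendition of the classical two-step Baum--Eagon proof: expand $P$ into monomials, divide by $P(x)$ to obtain a convex combination, apply weighted AM--GM to replace that combination by the geometric mean $\prod_{ij}(y_{ij}/x_{ij})^{\bar{d}_{ij}}$ with $\bar{d}_{ij} = y_{ij}S_i/P(x)$, and then finish block by block with Gibbs' inequality. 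The edge cases you flag are all handled correctly: a monomial vanishing at $x$ also vanishes at $y=J(x)$ because $y_{ij}$ is a nonnegative multiple of $x_{ij}$; $P(x)>0$ follows from $S_i>0$ and Euler's relation $\sum_i S_i=(\deg P)P(x)$; and when $y_{ij}=0$ with $x_{ij}>0$ the Gibbs sum $\sum_{j:y_{ij}>0}x_{ij}$ can drop below $1$, which only strengthens the block inequality. The paper's proof of Lemma~\ref{Baum_Eagon_extension_powers_4} is organized quite differently: it applies H\"older's inequality first to split $P(x)$ into two factors, then bounds one of them by $P(x)^{dk/(dk+1)}$ via a chain of estimates using weighted AM--GM, the binomial theorem, and the normalization~\eqref{basic_assumption}, and only then rearranges. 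That extra scaffolding is what lets the paper track the $k$-dependence and pass to the exponential limit; for Proposition~\ref{Baum_Eagon} alone, your direct route is shorter and does not require the side constraint~\eqref{basic_assumption}, while the paper's H\"older decomposition pays for itself only in the extended setting.
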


Let us summarize their notation (which we will follow in our own results): $\mu$ denotes a doubly indexed array of nonnegative integers: $\mu = \{ \mu_{ij} \}$. $x^\mu$ is an abbreviation for
\begin{align*}
x^{\mu} \equiv \prod_{i=1}^p \prod_{j=1}^{q_i} x_{ij}^{\mu_{ij}}.
\end{align*}
$c_{\mu}$ is an abbreviation for $c_{\mu_{ij}}$. Using the previous conventions,
\begin{align*}
P(X) \equiv \sum_{\mu} c_{\mu} x^{\mu}
\end{align*}
and
\begin{align*}
J(x)_{ij} =  \left( \sum_{\mu} c_{\mu} \mu_{ij} x^{\mu} \right) \bigg/ \left( \sum_{j=1}^{q_i} \sum_{\mu} c_{\mu} \mu_{ij} x^{\mu} \right).
\end{align*}
The goal is to prove that
\begin{align*}
P(x) = \sum_{\mu} c_{\mu} x^{\mu} \leq \sum_{\mu} c_{\mu} \prod_{i=1}^p \prod_{j=1}^{q_i} J(x)_{ij}^{\mu_{ij}}.
\end{align*}

\begin{lemma}
\label{Baum_Eagon_extension_powers_4}
Let $P(X) = P(\{x_{ij}\})$ be a polynomial homogeneous in its variables $\{x_{ij}\}$ of degree $d$. Let $x = \{x_{ij}\}$ be any point in the domain $D$, where
\begin{align*}
D = \left\{ x \bigg| x_{ij} \geq 0, \sum_{j = 1}^{q_i} x_{ij} = 1, i = 1, \ldots, p, j = 1, \ldots, q_i \right\}.
\end{align*}
For $x = \{ x_{ij} \} \in D$, let $J^k(X) = J^k(\{x_{ij} \})$ denote the point of $D$ whose $i, j$ coordinate is
\begin{align*}
J^k(x)_{ij} = x_{ij} \frac{\left( 1 + \frac{1}{k} \alpha \left.\frac{\partial P}{\partial x_{ij}} \right|_{(x)} \right)^{k}}{\sum_{j=1}^{q_i} x_{ij} \left( 1 + \frac{1}{k} \alpha \left.\frac{\partial P}{\partial x_{ij}} \right|_{(x)} \right)^{k}}
\end{align*}
where $k$ is a positive integer. Then provided that
\begin{align}
\forall i = 1, \ldots, p \mbox{ } \forall j = 1, \ldots, q_i : \left. \frac{\partial P}{\partial x_{ij}} \right|_{(x)} \leq 1,\label{basic_assumption}
\end{align}
$J^k$ is a growth transformation for $P$ for all $\alpha > 0$. Furthermore, for all positive integers $k$, the fixed points of $J^k$ coincide with the fixed points of $J$.
\end{lemma}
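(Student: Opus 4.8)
The plan is to follow the proof of Proposition~\ref{Baum_Eagon} (Baum--Eagon), viewing $J^{k}$ as a Baum--Eagon-type update in which the linear weight $\left.\partial P/\partial x_{ij}\right|_{(x)}$ is replaced by the weight $g_{ij}(x):=\bigl(1+\tfrac{1}{k}\alpha\left.\partial P/\partial x_{ij}\right|_{(x)}\bigr)^{k}$, and checking that the two auxiliary facts already recorded in this section --- H\"older's inequality and the weighted AM--GM inequality (Proposition~\ref{AMGM_inequality}) --- still carry the argument through, now for \emph{every} $\alpha>0$ rather than only for small steps. I take $P$ to have nonnegative coefficients, exactly as in Proposition~\ref{Baum_Eagon}; this makes every partial nonnegative on $D$, and since $k$ is a positive integer it also makes $g_{ij}(x)$ a genuine polynomial in $x$ with nonnegative coefficients via the finite binomial expansion $(1+s)^{k}=\sum_{m=0}^{k}\binom{k}{m}s^{m}$. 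I would first dispose of the (growth-independent) claim that $\mathrm{Fix}(J^{k})=\mathrm{Fix}(J)$. Writing $\phi(t)=(1+\tfrac{\alpha}{k}t)^{k}$, the point $x$ is fixed by $J^{k}$ iff, within each block $i$, the numbers $\phi\!\bigl(\left.\partial P/\partial x_{ij}\right|_{(x)}\bigr)$ agree over all $j$ with $x_{ij}>0$ --- because a coordinate is fixed iff its multiplicative factor equals the block average of those factors, and a convex combination of numbers equals each of its constituents only if they are all equal. Since $\phi'(t)=\alpha(1+\tfrac{\alpha}{k}t)^{k-1}>0$ on $[0,\infty)$ for $\alpha>0$, $k\ge 1$, the map $\phi$ is strictly increasing there, so this is equivalent to $\left.\partial P/\partial x_{ij}\right|_{(x)}$ being constant over the support of block $i$, which is precisely the fixed-point condition for $J$ in Proposition~\ref{Baum_Eagon}.

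For the growth statement I would argue as in Proposition~\ref{Baum_Eagon}. First multiply $P$ by appropriate powers of the block sums $\sum_{j}x_{ij}$ (each identically $1$ on $D$) to reduce to the case that $P$ is homogeneous of the same degree $d$ in each block, which changes neither $P$ nor $J^{k}$ on $D$. Then write $P(x)=\sum_{\mu}c_{\mu}x^{\mu}$ with $c_{\mu}\ge 0$, set $w_{ij}=x_{ij}g_{ij}(x)$ and $W_{i}=\sum_{j}w_{ij}$, so that $J^{k}(x)_{ij}=w_{ij}/W_{i}$ and
\[
P\bigl(J^{k}(x)\bigr)\;=\;\frac{1}{\prod_{i}W_{i}^{\,d}}\;\sum_{\mu}c_{\mu}\prod_{i,j}w_{ij}^{\,\mu_{ij}}.
\]
Finally run the Baum--Eagon inequality chain --- the weighted AM--GM inequality applied monomial-by-monomial (with weights $c_{\mu}x^{\mu}/P(x)$ and the exponent identity $x_{ij}\left.\partial P/\partial x_{ij}\right|_{(x)}=\sum_{\mu}\mu_{ij}c_{\mu}x^{\mu}$), with H\"older's inequality handling the per-block normalizers $W_{i}$, which, unlike in the classical case, are no longer constant multiples of $P(x)$ --- to conclude $P(J^{k}(x))\ge P(x)$, with equality exactly under the fixed-point condition isolated above (this is where the equality clause of Proposition~\ref{AMGM_inequality} is used). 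The hypothesis \eqref{basic_assumption} enters at the single point where $g_{ij}(x)$ must be compared back to the quantity it perturbs: for $0\le t:=\left.\partial P/\partial x_{ij}\right|_{(x)}\le 1$ one has $t^{m}\le t$ for every $m\ge 1$, hence Bernoulli's inequality and the binomial expansion give the affine sandwich
\[
1+\alpha t\;\le\;g_{ij}(x)\;\le\;1+(\beta_{k}-1)\,t,\qquad \beta_{k}:=(1+\alpha/k)^{k},
\]
so the normalizers $W_{i}$ and the products $\prod_{j}g_{ij}^{\mu_{ij}}$ stay controlled by affine functions of the partials and therefore, via Euler's relation $\sum_{j}x_{ij}\left.\partial P/\partial x_{ij}\right|_{(x)}=d\,P(x)$, by $P(x)$ itself; this is what keeps the resulting bound $\ge P(x)$ for \emph{every} $\alpha>0$, effectively reducing the estimate to the $k=1$ case.

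I expect the main obstacle to be exactly this last reduction: threading the exponentiated weight $g_{ij}(x)$ through the Baum--Eagon chain so that the binomial/affine estimates and Euler's relation combine into a clean $P(J^{k}(x))\ge P(x)$ with no smallness assumption on $\alpha$. The boundedness supplied by \eqref{basic_assumption} is essential here, because without it the higher binomial terms $t^{m}$ are not dominated by $t$, the affine sandwich fails, and the telescoping of $\prod_{i}W_{i}^{d}$ back to $P(x)$ breaks down. Once this is in place, the homogenization reduction, the H\"older step, and the equality analysis are routine adaptations of Proposition~\ref{Baum_Eagon}; and letting $k\to\infty$ --- so that $g_{ij}(x)\to\exp\{\alpha\left.\partial P/\partial x_{ij}\right|_{(x)}\}$ and $J^{k}\to$ Hedge --- is deferred to the subsequent limiting argument.
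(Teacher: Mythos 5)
Your fixed‑point argument (monotonicity of $\phi(t)=(1+\tfrac{\alpha}{k}t)^k$ on $[0,\infty)$, then the convex‑combination‑equals‑constituent observation) is correct and is essentially the same as the paper's second part, so that half is fine.

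The growth half of your plan, however, departs from the paper and, as sketched, has a real gap. The paper does not reduce to the $k=1$ case or use any ``affine sandwich.'' Instead it mimics Baum--Eagon's argument but with a $k$‑dependent H\"older exponent pair $p=dk+1$, $q=(dk+1)/(dk)$, then splits off a factor $\bigl(1+\tfrac{\alpha}{k}\cdot\bigr)^{k/k}$ using the $\tfrac{1}{k}$ exponent that H\"older leaves behind, applies (basic\_assumption) to replace $1+\tfrac{\alpha}{k}t_{ij}$ by $(1+\tfrac{\alpha}{k})t_{ij}$, and finishes with two uses of the binomial theorem and Euler's identity. That is a genuinely different mechanism from what you propose, and the specific exponent $dk+1$ is what makes the $k$‑th power fall out; a generic ``H\"older handles $W_i$'' step will not reproduce it.

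Concretely, your sandwich $1+\alpha t\le g_{ij}(x)\le 1+(\beta_k-1)t$ is numerically valid, but it cannot be threaded through the Baum--Eagon chain in the way you suggest, because the two bounds must go in opposite directions in the numerator and the denominator of $P(J^k(x))=\tfrac{1}{\prod_i W_i^{d}}\sum_\mu c_\mu\prod_{ij}w_{ij}^{\mu_{ij}}$: to lower‑bound the whole expression you would want to lower‑bound the $w_{ij}$'s with the left side of the sandwich and simultaneously upper‑bound the $W_i$'s with the right side, which uses \emph{different} affine surrogates ($1+\alpha t$ versus $1+(\beta_k-1)t$, and $\beta_k-1>\alpha$ as soon as $k\ge2$). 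The two surrogates therefore no longer telescope through Euler's relation to $P(x)$, and no $k=1$ Baum--Eagon instance emerges. Relatedly, the ``homogenize by multiplying by block sums'' step you invoke is \emph{not} harmless here: it shifts each $t_{ij}$ by a block‑dependent constant, and unlike the linear replicator case this changes the map $J^k$ because $\phi$ is nonlinear, so it cannot be dismissed as a benign normalization. To match the lemma you would either have to reproduce the paper's $k$‑dependent H\"older/binomial computation, or actually prove that the specific weight function $\phi(t)=(1+\tfrac{\alpha}{k}t)^k$ is admissible for Baum--Eagon‑type growth under (basic\_assumption); neither is supplied by the sketch as written.
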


\begin{proof}
We have
\begin{align*}
P(x) = \sum_{\mu} c_{\mu} x^{\mu}
\end{align*}
which we may equivalently write as
\begin{align*}
= \sum_{\mu} \left( c_{\mu} \right)^{\frac{1}{dk+1}} \left( c_{\mu} \right)^{\frac{dk}{dk+1}} x^{\mu}
\end{align*}
which we may equivalently write as
\begin{align*}
= \sum_{\mu} \left( c_{\mu} \right)^{\frac{1}{dk+1}} \left( c_{\mu} \right)^{\frac{dk}{dk+1}} x^{\mu} \left( \prod_{i=1}^p \prod_{j=1}^{q_i} J^k(x)_{ij}^{\mu_{ij}} \right)^{\frac{1}{dk+1}} \left( \prod_{i=1}^p \prod_{j=1}^{q_i} \left( \frac{1}{J^k(x)_{ij}} \right)^{\mu_{ij}} \right)^{\frac{1}{dk+1}}
\end{align*}
and, rearranging terms, we obtain
\begin{align*}
= \sum_{\mu} \left( c_{\mu} \prod_{i=1}^p \prod_{j=1}^{q_i} J^k(x)_{ij}^{\mu_{ij}} \right)^{\frac{1}{dk+1}} \times \left\{ \left( c_{\mu} \right)^{\frac{dk}{dk+1}} x^{\mu} \left( \prod_{i=1}^p \prod_{j=1}^{q_i} \frac{1}{J^k(x)_{ij}} \right)^{\frac{\mu_{ij}}{dk+1}} \right\}.
\end{align*}
We next apply H\"older's inequality with parameters $p = dk+1$ and $q = \frac{dk+1}{dk}$ to obtain 
\begin{align*}
P(x) \leq \left( \sum_{\mu} \left( c_{\mu} \prod_{i=1}^p \prod_{j=1}^{q_i} J^k(x)_{ij}^{\mu_{ij}} \right) \right)^{\frac{1}{dk+1}} \left( \sum_{\mu} c_{\mu} x^{\mu \frac{dk+1}{dk}} \left( \prod_{i=1}^p \prod_{j=1}^{q_i} \frac{1}{J^k(x)_{ij}} \right)^{\frac{\mu_{ij}}{dk}} \right)^{\frac{dk}{dk+1}}
\end{align*}
and using that
\begin{align*}
x^{\mu \frac{dk+1}{dk}} = x^{\mu \left( 1 + \frac{1}{dk} \right)} = x^{\mu} x^{\frac{\mu}{dk}} = x^\mu \prod_{i=1}^p \prod_{j=1}^{q_i} x_{ij}^{\frac{\mu_{ij}}{dk}}
\end{align*}
we further obtain
\begin{align*}
\leq \left( \sum_{\mu} \left( c_{\mu} \prod_{i=1}^p \prod_{j=1}^{q_i} J^k(x)_{ij}^{\mu_{ij}} \right) \right)^{\frac{1}{dk+1}} \left( \sum_{\mu} c_{\mu} x^{\mu} \prod_{i=1}^p \prod_{j=1}^{q_i} \left( \frac{x_{ij}}{J^k(x)_{ij}} \right)^{\frac{\mu_{ij}}{dk}} \right)^{\frac{dk}{dk+1}}
\end{align*}
which, using the weighted AM-GM inequality, yields
\begin{align*}
\leq \left( \sum_{\mu} \left( c_{\mu} \prod_{i=1}^p \prod_{j=1}^{q_i} J^k(x)_{ij}^{\mu_{ij}} \right) \right)^{\frac{1}{dk+1}} \left( \sum_{\mu} c_{\mu} x^{\mu} \left( \frac{1}{\sum_{i=1}^p \sum_{j=1}^{q_i} \frac{\mu_{ij}}{d}} \right) \sum_{i=1}^p \sum_{j=1}^{q_i} \frac{\mu_{ij}}{d} \left( \frac{x_{ij}}{J^k(x)_{ij}} \right)^{\frac{1}{k}} \right)^{\frac{dk}{dk+1}}
\end{align*}
which, by the homogeneity of $P$, implies
\begin{align}
= \left( \sum_{\mu} \left( c_{\mu} \prod_{i=1}^p \prod_{j=1}^{q_i} J^k(x)_{ij}^{\mu_{ij}} \right) \right)^{\frac{1}{dk+1}} \left( \sum_{\mu} c_{\mu} x^{\mu} \sum_{i=1}^p \sum_{j=1}^{q_i} \frac{\mu_{ij}}{d} \left( \frac{x_{ij}}{J^k(x)_{ij}} \right)^{\frac{1}{k}} \right)^{\frac{dk}{dk+1}}\label{askdjfasjdfhdskjfhd}
\end{align}
Let us work with the expression inside the parenthesis in the second product term. We, thus, substituting the expression for $J^k$, have
\begin{align*}
\sum_{\mu} c_{\mu} x^{\mu} \sum_{i=1}^p \sum_{j=1}^{q_i} \frac{\mu_{ij}}{d} \left( \frac{x_{ij}}{J^k(x)_{ij}} \right)^{\frac{1}{k}} =
\end{align*}
\begin{align}
= \sum_{\mu} c_{\mu} x^{\mu} \sum_{i=1}^p \sum_{j=1}^{q_i} \frac{\mu_{ij}}{d} \left( \frac{\sum_{\ell=1}^{q_i} x_{i \ell} \left( 1 + \frac{1}{k} \alpha \left.\frac{\partial P}{\partial x_{i \ell}} \right|_{(x)} \right)^{k}}{\left( 1 + \frac{1}{k} \alpha \left.\frac{\partial P}{\partial x_{i j}} \right|_{(x)} \right)^{k}} \right)^{\frac{1}{k}}\label{aksvbkjsdfshjfjd}
\end{align}
which, cancelling powers in the denominator, implies
\begin{align*}
= \frac{1}{d} \sum_{\mu} c_{\mu} x^{\mu} \sum_{i=1}^p \sum_{j=1}^{q_i} \mu_{ij} \frac{\left( \sum_{\ell=1}^{q_i} x_{i \ell} \left( 1 + \frac{1}{k} \alpha \left.\frac{\partial P}{\partial x_{i \ell}} \right|_{(x)} \right)^{k} \right)^{\frac{1}{k}}}{1 + \frac{1}{k} \alpha \left.\frac{\partial P}{\partial x_{i j}} \right|_{(x)}}
\end{align*}
which, since
\begin{align*}
\sum_{\ell=1}^{q_i} x_{i \ell} \left( 1 + \frac{1}{k} \alpha \left.\frac{\partial P}{\partial x_{i \ell}} \right|_{(x)} \right)^{k} > 1,
\end{align*}
implies that
\begin{align*}
\leq \frac{1}{d} \sum_{\mu} c_{\mu} x^{\mu} \sum_{i=1}^p \sum_{j=1}^{q_i} \mu_{ij} \frac{\sum_{\ell=1}^{q_i} x_{i \ell} \left( 1 + \frac{1}{k} \alpha \left.\frac{\partial P}{\partial x_{i \ell}} \right|_{(x)} \right)^{k}}{1 + \frac{1}{k} \alpha \left.\frac{\partial P}{\partial x_{i j}} \right|_{(x)}}
\end{align*}
which further implies by assumption \eqref{basic_assumption} that
\begin{align*}
\leq \frac{1}{d} \sum_{\mu} c_{\mu} x^{\mu} \sum_{i=1}^p \sum_{j=1}^{q_i} \mu_{ij} \frac{\sum_{\ell=1}^{q_i} x_{i \ell} \left( 1 + \frac{1}{k} \alpha \left.\frac{\partial P}{\partial x_{i \ell}} \right|_{(x)} \right)^{k}}{\left( 1 + \frac{1}{k} \alpha \right) \left.\frac{\partial P}{\partial x_{i j}} \right|_{(x)}}
\end{align*}
which, by rearranging terms, further implies
\begin{align*}
= \left( \frac{1}{1 + \frac{1}{k} \alpha} \right) \frac{1}{d} \sum_{\mu} c_{\mu} x^{\mu} \sum_{i=1}^p \sum_{j=1}^{q_i} \mu_{ij} \frac{\sum_{\ell=1}^{q_i} x_{i \ell} \left( 1 + \frac{1}{k} \alpha \left.\frac{\partial P}{\partial x_{i \ell}} \right|_{(x)} \right)^{k}}{\left.\frac{\partial P}{\partial x_{i j}} \right|_{(x)}}
\end{align*}
which even further implies
\begin{align*}
= \left( \frac{1}{1 + \frac{1}{k} \alpha} \right) \frac{1}{d} \sum_{\mu} c_{\mu} x^{\mu} \sum_{i=1}^p \sum_{j=1}^{q_i} \mu_{ij} x_{ij} \frac{\sum_{\ell=1}^{q_i} x_{i \ell} \left( 1 + \frac{1}{k} \alpha \left.\frac{\partial P}{\partial x_{i \ell}} \right|_{(x)} \right)^{k}}{x_{ij} \left.\frac{\partial P}{\partial x_{i j}} \right|_{(x)}}
\end{align*}
which even further implies
\begin{align*}
= \left( \frac{1}{1 + \frac{1}{k} \alpha} \right) \frac{1}{d} \sum_{\mu} c_{\mu} x^{\mu} \sum_{i=1}^p \sum_{j=1}^{q_i} \mu_{ij} x_{ij} \frac{\sum_{\ell=1}^{q_i} x_{i \ell} \left( 1 + \frac{1}{k} \alpha \left.\frac{\partial P}{\partial x_{i \ell}} \right|_{(x)} \right)^{k}}{\sum_{\nu} c_{\nu} x^{\nu} \nu_{ij}}
\end{align*}
which by rearranging the order of summation even further implies
\begin{align*}
= \left( \frac{1}{1 + \frac{1}{k} \alpha} \right) \frac{1}{d} \sum_{i=1}^p \sum_{j=1}^{q_i} x_{ij} \sum_{\mu} c_{\mu} x^{\mu} \mu_{ij}  \frac{\sum_{\ell=1}^{q_i} x_{i \ell} \left( 1 + \frac{1}{k} \alpha \left.\frac{\partial P}{\partial x_{i \ell}} \right|_{(x)} \right)^{k}}{\sum_{\nu} c_{\nu} x^{\nu} \nu_{ij}}
\end{align*}
which, cancelling terms and summing probability masses, even further implies
\begin{align*}
= \left( \frac{1}{1 + \frac{1}{k} \alpha} \right) \frac{1}{d} \sum_{i=1}^p \sum_{\ell=1}^{q_i} x_{i \ell} \left( 1 + \frac{1}{k} \alpha \left.\frac{\partial P}{\partial x_{i \ell}} \right|_{(x)} \right)^{k}
\end{align*}
which, using the binomial theorem, implies
\begin{align*}
= \left( \frac{1}{1 + \frac{1}{k} \alpha} \right) \frac{1}{d} \sum_{i=1}^p \sum_{\ell=1}^{q_i} x_{i \ell}  \sum_{m=0}^k {k \choose m} \left( \frac{1}{k} \alpha \left.\frac{\partial P}{\partial x_{i \ell}} \right|_{(x)} \right)^m
\end{align*}
which, using assumption \eqref{basic_assumption}, implies
\begin{align*}
\leq \left( \frac{1}{1 + \frac{1}{k} \alpha} \right) \frac{1}{d} \sum_{i=1}^p \sum_{\ell=1}^{q_i} x_{i \ell} \left.\frac{\partial P}{\partial x_{i \ell}} \right|_{(x)}  \sum_{m=0}^k {k \choose m} \left( \frac{1}{k} \alpha  \right)^m
\end{align*}
and using the binomial theorem for a second time, we further obtain
\begin{align*}
\leq \frac{1}{d} \sum_{i=1}^p \sum_{\ell=1}^{q_i} x_{i \ell} \left.\frac{\partial P}{\partial x_{i \ell}} \right|_{(x)} = P(x)
\end{align*}
where the equality follows by the Euler theorem for homogeneous functions. Substituting in \eqref{askdjfasjdfhdskjfhd}, we obtain
\begin{align*}
P(x) \leq \left( \sum_{\mu} \left( c_{\mu} \prod_{i=1}^p \prod_{j=1}^{q_i} J^k(x)_{ij}^{\mu_{ij}} \right) \right)^{\frac{1}{dk+1}} \left( P(x) \right)^{\frac{dk}{dk+1}}
\end{align*}
which implies
\begin{align*}
(P(x))^{1 - \frac{dk}{dk+1}} \leq \sum_{\mu} \left( c_{\mu} \prod_{i=1}^p \prod_{j=1}^{q_i} J^k(x)_{ij}^{\mu_{ij}} \right)^{\frac{1}{dk+1}}
\end{align*}
which finally implies
\begin{align*}
P(x) \leq P(J(x))
\end{align*}
The strictness of the inequality if $J^k(x) \neq x$ follows from \eqref{askdjfasjdfhdskjfhd} and the strictness of the weighted arithmetic-geometric inequality in Proposition \ref{AMGM_inequality} if all summands are not equal.\\

Let us proceed with the proof of the second part of the lemma: We will show that $x$ is a fixed point if and only if for all $i = 1, \ldots, p$ and for all $j, \ell$ such that $x_{ij}, x_{i\ell} > 0$, we have that
\begin{align*}
\left.\frac{\partial P}{\partial x_{ij}} \right|_{(x)} = \left.\frac{\partial P}{\partial x_{i\ell}} \right|_{(x)}.
\end{align*}
First we show sufficiency: Some of the coordinates of $x$
are zero and some are positive. Clearly, 
the zero coordinates will not become positive after applying the map. 
Now, notice that, given $i$, for all $j$ such that $x_{ij} > 0$, 
\begin{align*}
\left( 1 + \frac{1}{k} \alpha \left.\frac{\partial P}{\partial x_{ij}} \right|_{(x)} \right)^{k} = \sum_{j=1}^{q_i} x_{ij} \left( 1 + \frac{1}{k} \alpha \left.\frac{\partial P}{\partial x_{ij}} \right|_{(x)} \right)^{k}
\end{align*}
and, therefore, $J^k(x)_{ij} = x_{ij}$, and this is true for all $i$. Now we show necessity: 
If $x$ is a fixed point, then for all $i = 1, \ldots, p$ and for all $j, \ell$ such that $x_{ij}, x_{i\ell} > 0$, we have that
\begin{align*}
\left.\frac{\partial P}{\partial x_{ij}} \right|_{(x)} = \left.\frac{\partial P}{\partial x_{i\ell}} \right|_{(x)}.
\end{align*}
Because $x$ is a fixed point, $J^k(x)_{ij} = x_{ij}$. Therefore,
{\allowdisplaybreaks
\begin{align}
J^k(x)_{ij} &= x_{ij}\notag\\
x_{ij} \frac{\left( 1 + \frac{1}{k} \alpha \left.\frac{\partial P}{\partial x_{ij}} \right|_{(x)} \right)^{k}}{\sum_{j=1}^{q_i} x_{ij} \left( 1 + \frac{1}{k} \alpha \left.\frac{\partial P}{\partial x_{ij}} \right|_{(x)} \right)^{k}} &= x_{ij}\notag\\
\left( 1 + \frac{1}{k} \alpha \left.\frac{\partial P}{\partial x_{ij}} \right|_{(x)} \right)^{k} &= \sum_{j=1}^{q_i} x_{ij} \left( 1 + \frac{1}{k} \alpha \left.\frac{\partial P}{\partial x_{ij}} \right|_{(x)} \right)^{k}.\label{eqcondition-finalstep_1}
\end{align}
}
Equation \eqref{eqcondition-finalstep_1} implies that, for all $i=1,\ldots, p$ and for all $j$ such that $x_{ij} > 0$,
\begin{align*}
\left( 1 + \frac{1}{k} \alpha \left.\frac{\partial P}{\partial x_{ij}} \right|_{(x)} \right)^{k} = c
\end{align*}
where $c$ is a constant. Cancelling the power, the constant $1$, and the factor $(1/k) \alpha$ yields the claim. Notice that the fixed points are independent of $k$. That these are also the fixed points of the replicator dynamic follows the same pattern (see also \citep{LosertAkin}).
\end{proof}

\begin{lemma}
\label{Baum_Eagon_extension_exponential_function}
Let $P(X) = P(\{x_{ij}\})$ be a polynomial homogeneous in its variables $\{x_{ij}\}$ of degree $d$. Let $x = \{x_{ij}\}$ be any point in the domain $D$, where
\begin{align*}
D = \left\{ x \bigg| x_{ij} \geq 0, \sum_{j = 1}^{q_i} x_{ij} = 1, i = 1, \ldots, p, j = 1, \ldots, q_i \right\}.
\end{align*}
For $x = \{ x_{ij} \} \in D$, let $J^{\infty} (X) = J^{\infty}(\{x_{ij} \})$ denote the point of $D$ whose $i, j$ coordinate is
\begin{align*}
J^{\infty} (x)_{ij} = x_{ij} \frac{ \exp \left\{ \alpha \left.\frac{\partial P}{\partial x_{ij}} \right|_{(x)} \right\}}{\sum_{j=1}^{q_i} x_{ij} \exp \left\{ \alpha \left.\frac{\partial P}{\partial x_{ij}} \right|_{(x)} \right\}},
\end{align*}
where $\alpha > 0$. Then, provided that \eqref{basic_assumption} holds, $J^{\infty}$ is a growth transformation for $P$.
\end{lemma}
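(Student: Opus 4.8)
The idea is to realize Hedge, $J^{\infty}$, as the limit of the elementary maps $J^{k}$ of Lemma~\ref{Baum_Eagon_extension_powers_4} and to push the growth property through the limit. The analytic input is the elementary fact that $(1+t/k)^{k}\to e^{t}$ as $k\to\infty$, \emph{uniformly} for $t$ ranging over any bounded interval. By \eqref{basic_assumption}, together with the fact that a homogeneous polynomial with nonnegative coefficients has nonnegative partial derivatives at every point of $D$, the quantities $\alpha\,\partial P/\partial x_{ij}\big|_{(x)}$ lie in the fixed bounded interval $[0,\alpha]$, uniformly over $x\in D$ and over $(i,j)$. Hence the numerators $x_{ij}\bigl(1+\tfrac1k\alpha\,\partial_{ij}P\bigr)^{k}$ converge uniformly on $D$ to $x_{ij}\exp\{\alpha\,\partial_{ij}P\}$; the denominators converge uniformly as well and are bounded below by $1$ (each is a convex combination of numbers $\geq 1$). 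Therefore $J^{k}\to J^{\infty}$ uniformly on $D$, and in particular $J^{\infty}$ is continuous on $D$.

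Next, since $P$ is a polynomial, hence continuous, $P(J^{\infty}(x))=\lim_{k\to\infty}P(J^{k}(x))$ for every fixed $x\in D$. Lemma~\ref{Baum_Eagon_extension_powers_4} gives $P(J^{k}(x))\geq P(x)$ for every $k$, so passing to the limit yields the weak growth inequality $P(J^{\infty}(x))\geq P(x)$ for all $x\in D$. To confirm that the exceptional set in the definition of a growth transformation is the correct one, I would invoke the second part of Lemma~\ref{Baum_Eagon_extension_powers_4} (and Lemma~\ref{fixed_points_Hedge}): every $J^{k}$, and likewise $J^{\infty}$, fixes exactly those $x$ for which $\partial_{ij}P|_{(x)}$ is constant in $j$ over the support of block $i$, for each $i$, and these are precisely the fixed points of the replicator dynamic; in particular $J^{\infty}(x)=x$ holds exactly at those points.

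The main obstacle is the \emph{strict} inequality $P(J^{\infty}(x))>P(x)$ when $x$ is not a fixed point: the limit passage by itself delivers only ``$\geq$'', because the strict slack in the weighted AM--GM step of Lemma~\ref{Baum_Eagon_extension_powers_4} --- which is exactly what made that inequality strict --- degenerates as $k\to\infty$ (the relevant powers $(x_{ij}/J^{k}(x)_{ij})^{1/k}$ all tend to $1$). I plan to recover strictness by a direct estimate at a non-fixed point $x$: writing $g_{ij}=\alpha\,\partial_{ij}P|_{(x)}$, $Z_{i}=\sum_{j}x_{ij}e^{g_{ij}}$, and $y=J^{\infty}(x)$ with $y_{ij}=x_{ij}e^{g_{ij}}/Z_{i}$, non-fixedness forces $g_{i\cdot}$ to be non-constant over the support of some block $i$, so that strict Jensen gives $Z_{i}>\exp\{\overline g_{i}\}$ there (equivalently, the relative entropy of $y_{i\cdot}$ against $x_{i\cdot}$ is strictly positive). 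One then bounds $P(y)-P(x)$ below by a strictly positive quantity built from these block variances/entropies; an alternative route is to show $P(J^{k}(x))\geq P(J^{1}(x))$ for every $k\geq1$ --- using that the weights $(1+\tfrac\alpha k g_{ij})^{k}$ dominate $1+\alpha g_{ij}$ coordinatewise by Bernoulli's inequality --- so that the limit inherits the strictness already available from $J^{1}$, which is a bona fide growth transformation by Lemma~\ref{Baum_Eagon_extension_powers_4}. Either way, this strictness step is where the genuine work lies; the uniform-convergence and continuity steps above are routine.
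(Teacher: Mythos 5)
Your limiting strategy coincides with the paper's: realize $J^{\infty}$ as the (in your version, uniform) limit of the maps $J^{k}$ of Lemma~\ref{Baum_Eagon_extension_powers_4} and pass the growth property through. You correctly identify that this passage delivers only the weak inequality $P(J^{\infty}(x))\geq P(x)$, and that establishing the \emph{strict} inequality for non-fixed $x$ is where the genuine work lies. This is precisely the step the paper's own proof glosses over: after noting pointwise convergence of $J^{k}$ to $J^{\infty}$ and the coincidence of fixed-point sets, the paper simply asserts that Lemma~\ref{Baum_Eagon_extension_powers_4} ``implies'' that $J^{\infty}$ strictly increases $P$ off fixed points, with no further argument. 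As you observe, the AM--GM slack in the proof of Lemma~\ref{Baum_Eagon_extension_powers_4} that produces the strictness there degenerates as $k\to\infty$ (the exponents $1/k$ flatten the relevant ratios to $1$), and the fact that $J^{\infty}(x)\neq x$ for non-fixed $x$ does not by itself preclude $P(J^{\infty}(x))=P(x)$.

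Of your two proposed routes to recover strictness, route (b) as sketched does not work: Bernoulli gives coordinatewise domination of the \emph{unnormalized} weights, $(1+\tfrac{\alpha}{k}g_{ij})^{k}\geq 1+\alpha g_{ij}$, but $J^{k}(x)$ is a normalized quantity, and multiplying every weight in a block by a constant leaves the normalized map unchanged; so this estimate by itself gives no comparison between $J^{k}(x)$ and $J^{1}(x)$, let alone monotonicity of $P(J^{k}(x))$ in $k$, which is what you would actually need for the limit to ``inherit'' strictness from $J^{1}$. Route (a) --- a direct block-wise positive lower bound on $P(J^{\infty}(x))-P(x)$ built from strict Jensen on the exponential weights (equivalently a relative-entropy argument) --- is the right direction and is the argument that would have to be carried out to make either the paper's proof or yours complete. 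As written, your proposal is a more honest reconstruction of the paper's proof than the paper's proof itself: it isolates a real gap that the paper does not address and suggests a plausible repair, but leaves the strictness step open.
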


\begin{proof}
We may equivalently write $J^{\infty}$ as
\begin{align*}
J^{\infty}(x)_{ij} = \frac{ x_{ij} \lim_{k \rightarrow \infty} \left\{ \left( 1 + \frac{1}{k} \alpha \left.\frac{\partial P}{\partial x_{ij}} \right|_{(x)} \right)^{k} \right\}}{\sum_{j=1}^{q_i} x_{ij} \left( \lim_{k \rightarrow \infty} \left\{ \left( 1 + \frac{1}{k} \alpha \left.\frac{\partial P}{\partial x_{ij}} \right|_{(x)} \right)^{k} \right\} \right)} = \lim_{k \rightarrow \infty} \left\{ \frac{ x_{ij} \left( 1 + \frac{1}{k} \alpha \left.\frac{\partial P}{\partial x_{ij}} \right|_{(x)} \right)^{k}}{\sum_{j=1}^{q_i} x_{ij} \left( 1 + \frac{1}{k} \alpha \left.\frac{\partial P}{\partial x_{ij}} \right|_{(x)} \right)^{k}}  \right\},
\end{align*}
which implies that the sequence of maps
\begin{align*}
\left\{ J^k(x)_{ij} = \frac{ x_{ij} \left( 1 + \frac{1}{k} \alpha \left.\frac{\partial P}{\partial x_{ij}} \right|_{(x)} \right)^{k}}{\sum_{j=1}^{q_i} x_{ij} \left( 1 + \frac{1}{k} \alpha \left.\frac{\partial P}{\partial x_{ij}} \right|_{(x)} \right)^{k}} \right\}_{k = 1}^{\infty}
\end{align*}
converges pointwise to $J^{\infty}$. Therefore, by the definition of pointwise convergence, for each $x$ and an arbitrarily small $\epsilon > 0$, we can find $\hat{k} > 0$ such that for all $k \geq \hat{k}$, the maps $J^k$ map $x$ to within an $\epsilon$-ball of $J^{\infty}(x)$. It is easy to show that the fixed points of $J^{\infty}$ (cf. Lemma \ref{fixed_points_Hedge}) coincide with the fixed points of $J$. Lemma \ref{Baum_Eagon_extension_powers_4} then implies that, for all $x$ and $\alpha > 0$, unless $x$ is a fixed point, $J^{\infty}$ strictly increases the value of the polynomial $P$. This completes the proof.
\end{proof}

\section{An inequality on the approximation error of multiplicative weights}
\label{Fixed_point_bound}

In this section, we analyze the fixed-point approximation error of the general map
\begin{align*}
T_i(X) = X(i) \cdot \frac{\exp\left\{ \alpha (CY)_i \right\}}{ \sum_{j=1}^n X(j) \exp \left\{ \alpha (CY)_j \right\} } \quad i = 1, \ldots, n,
\end{align*}
in particular, the fixed point approximation error of the sequence $\left\{ \bar{Y}^K \right\}_{K = 0}^{\infty}$ of empirical averages of $\left\{ Y^k \right\}$. The empirical average $\bar{Y}^K$ at iteration $K = 0, 1, 2, \ldots$  is a weighted arithmetic mean
\begin{align*}
\bar{Y}^K = \frac{1}{A_K} \sum_{k=0}^K \alpha_k Y^k, \mbox{ where } A_K = \sum_{k = 0}^K \alpha_k
\end{align*}
and $\alpha_k > 0$ is the learning rate parameter used in step $k$. In the case when the learning rate is held constant from round to round, the weighted arithmetic means reduces to a simple arithmetic mean
\begin{align*}
\bar{Y}^K = \frac{1}{K+1} \sum_{k=0}^K Y^k.
\end{align*}

\begin{lemma}
\label{nonuniform_lemma}
Suppose $X^0$ is an arbitrary interior strategy. Then
\begin{align*}
\forall i,j \in \mathcal{K}(C) : (E_i - E_j) \cdot C\bar{Y}^K = \frac{1}{A_K} \ln \left( \frac{ X^{K+1}(i) }{ X^0(i) } \right) - \frac{1}{A_K} \ln \left( \frac{X^{K+1}(j)}{X^0(j)}\right).
\end{align*}
Let $Y \in \mathbb{X}(C)$ be arbitrary. Then, for all $p \in \mathcal{C}(Y)$ the approximation error of the weighted empirical average is
\begin{align*}
(C\bar{Y}^K)_p - \bar{Y}^K \cdot C\bar{Y}^K = \frac{1}{A_K} \ln \left( \frac{ X^{K+1}(p) }{ X^0(p) } \right) - \frac{1}{A_K} \sum_{j=1}^n \bar{Y}^K(j) \ln \left( \frac{X^{K+1}(j)}{X^0(j)}\right)
\end{align*}
an expression which we may equivalently write as follows:
\begin{align*}
(C\bar{Y}^K)_p - \bar{Y}^K \cdot C\bar{Y}^K = \frac{1}{A_K} \sum_{j=1}^n \bar{Y}^K(j) \ln \left( \ddfrac{\frac{ X^{K+1}(p) }{ X^0(p) }}{\frac{X^{K+1}(j)}{X^0(j)}} \right).
\end{align*}
\end{lemma}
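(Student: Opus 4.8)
The plan is to take logarithms of the Hedge update and telescope. Writing $Z_k \equiv \sum_{j=1}^n X^k(j)\exp\{\alpha_k (CY^k)_j\}$ for the normalizing constant at step $k$, the recursion $X^{k+1} = T(X^k)$ gives $\ln X^{k+1}(i) = \ln X^k(i) + \alpha_k (CY^k)_i - \ln Z_k$ for every $i \in \mathcal{K}(C)$; since $X^0$ is an interior strategy and Hedge keeps every coordinate of an interior point strictly positive, all these logarithms are well-defined. Summing over $k = 0, \dots, K$ and setting $L_K \equiv \sum_{k=0}^K \ln Z_k$ (which does not depend on $i$) yields
\begin{align*}
\ln \frac{X^{K+1}(i)}{X^0(i)} = \sum_{k=0}^K \alpha_k (CY^k)_i - L_K .
\end{align*}

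Next I would use linearity of the map $X \mapsto CX$: from $\bar{Y}^K = \frac{1}{A_K}\sum_{k=0}^K \alpha_k Y^k$ we get $\sum_{k=0}^K \alpha_k (CY^k)_i = A_K (C\bar{Y}^K)_i$, so the identity above becomes the single-coordinate formula
\begin{align*}
(C\bar{Y}^K)_i = \frac{1}{A_K}\left( \ln\frac{X^{K+1}(i)}{X^0(i)} + L_K \right), \qquad i \in \mathcal{K}(C).
\end{align*}
Subtracting this for two indices $i$ and $j$ cancels the common term $L_K$ and delivers the first displayed equality of the lemma verbatim.

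For the two approximation-error identities I would expand $\bar{Y}^K \cdot C\bar{Y}^K = \sum_{j=1}^n \bar{Y}^K(j)(C\bar{Y}^K)_j$ and substitute the single-coordinate formula above for each $(C\bar{Y}^K)_j$. Because $\bar{Y}^K$ is a probability vector, $\sum_{j=1}^n \bar{Y}^K(j) = 1$, so once more the $L_K$ contributions cancel, leaving
\begin{align*}
(C\bar{Y}^K)_p - \bar{Y}^K \cdot C\bar{Y}^K = \frac{1}{A_K}\ln\frac{X^{K+1}(p)}{X^0(p)} - \frac{1}{A_K}\sum_{j=1}^n \bar{Y}^K(j)\ln\frac{X^{K+1}(j)}{X^0(j)}
\end{align*}
for every $p$ with $X^0(p) > 0$, in particular for every $p \in \mathcal{C}(Y)$. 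Rewriting the first term as $\frac{1}{A_K}\sum_{j=1}^n \bar{Y}^K(j)\ln\frac{X^{K+1}(p)}{X^0(p)}$ (again using that the weights sum to one) and merging the two sums gives the equivalent logarithmic-ratio form stated last.

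The computation is essentially mechanical; the only point that needs care is the bookkeeping that the accumulated normalization $L_K$ is the same for every coordinate, which is exactly why it drops out of any difference taken against a probability vector. I therefore do not expect a genuine analytic obstacle, only the need to arrange the telescoping sum and the linearity substitution cleanly.
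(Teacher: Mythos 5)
Your proof is correct and uses essentially the same argument as the paper: take logarithms of the Hedge update, telescope over $k = 0, \ldots, K$, and use linearity of $C$ to collapse $\sum_k \alpha_k (CY^k)_i$ into $A_K(C\bar{Y}^K)_i$. The only cosmetic difference is that the paper's proof starts from the ratio $X^{k+1}(i)/X^{k+1}(j)$ so that the normalizing constant never appears, whereas you keep the accumulated normalization $L_K$ explicit and let it cancel upon taking differences or pairing against the probability vector $\bar{Y}^K$; these are the same bookkeeping.
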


\begin{proof}
Let $T(X) \equiv \hat{X}$. Then straight algebra gives
\begin{align*}
\frac{\hat{X}(i)}{\hat{X}(j)} = \frac{X(i)}{X(j)} \exp\{\alpha ((CY)_i - (CY)_j)\}
\end{align*}
and taking logarithms on both sides we obtain
\begin{align*}
\ln \left( \frac{\hat{X}(i)}{\hat{X}(j)} \right) = \ln \left( \frac{X(i)}{X(j)} \right) + \alpha ((CY)_i - (CY)_j).
\end{align*}
We may write the previous equation as
\begin{align*}
\ln \left( \frac{X^{k+1}(i)}{X^{k+1}(j)} \right) = \ln \left( \frac{X^k(i)}{X^k(j)} \right) + \alpha_k ((CY^k)_i - (CY^k)_j)
\end{align*}
Summing over $k = 0, \ldots K$, we obtain
\begin{align*}
\ln \left( \frac{X^{K+1}(i)}{X^{K+1}(j)} \right) = \ln \left( \frac{X^0(i)}{X^0(j)} \right) + \sum_{k=0}^K \alpha_k ((CY^k)_i - (CY^k)_j)
\end{align*}
and dividing by $A_K$ and rearranging, we further obtain
\begin{align*}
\frac{1}{A_K} \ln \left( \frac{X^{K+1}(i)}{X^{K+1}(j)} \right) = \frac{1}{A_K} \ln \left( \frac{X^0(i)}{X^0(j)} \right) + (E_i - E_j) \cdot C\bar{Y}^K
\end{align*}
which implies
\begin{align*}
(E_i - E_j) \cdot C\bar{Y}^K = \frac{1}{A_K} \ln \left( \frac{ X^{K+1}(i) }{ X^0(i) } \right) - \frac{1}{A_K} \ln \left( \frac{X^{K+1}(j)}{X^0(j)}\right)
\end{align*}
as claimed in the first equation of the lemma. The previous equation further implies
\begin{align*}
(C\bar{Y}^K)_p - E_j \cdot C\bar{Y}^K = \frac{1}{A_K} \ln \left( \frac{ X^{K+1}(p) }{ X^0(p) } \right) - \frac{1}{A_K} \ln \left( \frac{X^{K+1}(j)}{X^0(j)}\right)
\end{align*}
which even further implies
\begin{align*}
(C\bar{Y}^K)_p - \bar{Y}^K \cdot C\bar{Y}^K = \frac{1}{A_K} \ln \left( \frac{ X^{K+1}(p) }{ X^0(p) } \right) - \frac{1}{A_K} \sum_{j=1}^n \bar{Y}^K(j) \ln \left( \frac{X^{K+1}(j)}{X^0(j)}\right)
\end{align*}
as claimed in the second and third equations of the lemma.
\end{proof}

\begin{lemma}
\label{ranking_lemma_nonuniform}
Suppose $X^0$ is an arbitrary interior strategy. Then, for all $K \geq 0$, the vectors
\begin{align*}
\left[ \begin{array}{c}
(C \bar{Y}^K)_1 \\
\vdots \\
(C \bar{Y}^K)_n \\
\end{array} \right] 
\quad \mbox{and} \quad
\left[ \begin{array}{c}
X^{K+1}(1)/X^0(1) \\
\vdots \\
X^{K+1}(n)/X^0(n) \\
\end{array} \right] 
\end{align*}
have the same ranking in the following sense: For all $K \geq 0$, if 
\begin{align*}
\sigma_K(1), \sigma_K(2), \ldots, \sigma_K(n)
\end{align*} 
is a permutation of the set of pure strategies such that
\begin{align*}
(C \bar{Y}^K)_{\sigma_K(1)} \geq_1 \cdots \geq_{n-1} (C\bar{Y}^K)_{\sigma_K(n)},
\end{align*}
then
\begin{align*}
\frac{X^{K+1}(\sigma_K(1))}{X^{0}(\sigma_K(1))} \geqslant_1 \cdots \geqslant_{n-1} \frac{X^{K+1}(\sigma_K(n))}{X^{0}(\sigma_K(n))}
\end{align*}
and, for all $i = 1, \ldots, n-1$, we have that $\geq_i$ is an equality if and only if $\geqslant_i$ is an equality.
\end{lemma}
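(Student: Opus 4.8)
The plan is to derive everything directly from the identity established in Lemma~\ref{nonuniform_lemma}. Rearranging the first displayed equation of that lemma, for any two pure strategies $i,j \in \mathcal{K}(C)$ we have
\begin{align*}
(C\bar{Y}^K)_i - (C\bar{Y}^K)_j = \frac{1}{A_K} \left( \ln \left( \frac{X^{K+1}(i)}{X^0(i)} \right) - \ln \left( \frac{X^{K+1}(j)}{X^0(j)} \right) \right).
\end{align*}
First I would check that the right-hand side is well defined: since $X^0$ is an interior strategy, every $X^0(i) > 0$, and because $T$ multiplies each coordinate by a strictly positive factor (the exponential is always positive and $X^0(i)>0$), an easy induction gives $X^{K+1}(i) > 0$ for all $i$, so each logarithm above makes sense. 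Also $A_K = \sum_{k=0}^K \alpha_k > 0$ because each learning rate $\alpha_k > 0$.

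Next I would invoke strict monotonicity of $t \mapsto \ln t$ on $(0,\infty)$ together with the fact that division by the positive constant $A_K$ preserves sign. The displayed identity then yields, for every pair $i,j$,
\begin{align*}
(C\bar{Y}^K)_i \geq (C\bar{Y}^K)_j \iff \frac{X^{K+1}(i)}{X^0(i)} \geq \frac{X^{K+1}(j)}{X^0(j)},
\end{align*}
and the same equivalence holds with $\geq$ replaced by $=$: equality on one side forces the bracketed difference of logarithms to vanish, hence equality of the ratios, and conversely. Thus the two vectors induce the same weak order on $\{1,\dots,n\}$, strict comparisons matching strict comparisons and ties matching ties.

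Finally I would apply this equivalence to the consecutive pairs $\sigma_K(\ell),\sigma_K(\ell+1)$ in the sorted order $(C\bar{Y}^K)_{\sigma_K(1)} \geq_1 \cdots \geq_{n-1} (C\bar{Y}^K)_{\sigma_K(n)}$, which immediately gives
\begin{align*}
\frac{X^{K+1}(\sigma_K(1))}{X^0(\sigma_K(1))} \geqslant_1 \cdots \geqslant_{n-1} \frac{X^{K+1}(\sigma_K(n))}{X^0(\sigma_K(n))},
\end{align*}
with $\geqslant_\ell$ an equality exactly when $\geq_\ell$ is, as claimed. There is essentially no analytic obstacle here; the only points that need care are confirming positivity of all iterates so the logarithms are legitimate, and carrying the ``if and only if'' for the tie cases rather than merely the weak inequalities — both are handled by the strict monotonicity of $\ln$.
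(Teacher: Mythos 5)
Your proof is correct and follows essentially the same route as the paper, which records this lemma as a ``straightforward implication'' of Lemma~\ref{nonuniform_lemma}; you have simply supplied the details the paper leaves implicit, namely the positivity of the iterates (so the logarithms are defined), the positivity of $A_K$, and the strict monotonicity of $\ln$ used to transfer both the weak inequalities and the tie cases.
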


\begin{proof}
Straightforward implication of Lemma \ref{nonuniform_lemma}.
\end{proof}

\begin{theorem}
\label{equilibrium_error_nonuniform}
Let $C$ be a positive payoff matrix such that
\begin{align*}
\ddfrac{\max_{ij} C_{ij}}{\min_{ij} C_{ij}} = c
\end{align*}
and denote $X^k \equiv T^k(X^0)$, where $X^0$ is an interior strategy.\\

(i) Consider a sequence of positive learning rates such that $p \in \{1, \ldots, n\}$ satisfies
\begin{align*}
\liminf\limits_{K \rightarrow \infty} \left\{ \ln \left( \ddfrac{X^{K+1}(p)}{X^0(p)} \right) \right\} > - \infty.
\end{align*}
Then for all $K \geq 0$ such that no probability mass of the sequence of iterates has vanished due to roundoff in finite precision arithmetic, such that
\begin{align}
\max_{\ell=1}^n \left\{  \left( \ln \left( \ddfrac{\ddfrac{X^{K+1}(p)}{X^0(p)}}{\ddfrac{X^{K+1}(\ell)}{X^0(\ell)}} \right) \right)^2 \right\} =  \left( \max_{\ell=1}^n \left\{ \ln \left( \ddfrac{\ddfrac{X^{K+1}(p)}{X^0(p)}}{\ddfrac{X^{K+1}(\ell)}{X^0(\ell)}} \right) \right\} \right)^2,\label{cond}
\end{align}
such that
\begin{align*}
(C\bar{Y}^K)_p < \max_{i=1}^n \left\{ (C\bar{Y}^K)_i \right\},
\end{align*}
and such that
\begin{align*}
\ln \left( \ddfrac{X^{K+1}(p)}{X^0(p)} \right) > 0,
\end{align*}
the weighted empirical average
\begin{align*}
\bar{Y}^K = \frac{1}{A_K} \sum_{k=0}^K \alpha_k Y^k
\end{align*} 
satisfies the following inequality
\begin{align*}
(C\bar{Y}^K)_p - \bar{Y}^K \cdot C \bar{Y}^K \leq \ln \left( \ddfrac{X^{K+1}(p)}{X^0(p)} \right) \ddfrac{2c}{A_K}.
\end{align*}
If 
\begin{align*}
\ln \left( \ddfrac{X^{K+1}(p)}{X^0(p)} \right) < 0,
\end{align*}
we obtain 
\begin{align*}
(C\bar{Y}^K)_p - \bar{Y}^K \cdot C \bar{Y}^K \leq \ln \left( \ddfrac{X^{K+1}(p)}{X^0(p)} \right) \ddfrac{2/c}{A_K}.
\end{align*}

(ii) The previous bound continues to hold even if 
\begin{align*}
(C\bar{Y}^K)_p = \max_{i=1}^n \left\{ (C\bar{Y}^K)_i \right\},
\end{align*}
and the maximum is attained by two or more pure strategies.\\

(iii) The condition
\begin{align}
\max_{\ell=1}^n \left\{ \ln \left( \ddfrac{\ddfrac{X^{K+1}(\ell)}{X^0(\ell)}}{\ddfrac{X^{K+1}(p)}{X^0(p)}} \right) \right\} \leq \max_{\ell=1}^n \left\{ \ln \left( \ddfrac{\ddfrac{X^{K+1}(p)}{X^0(p)}}{\ddfrac{X^{K+1}(\ell)}{X^0(\ell)}} \right) \right\}\label{cond2}
\end{align}
implies \eqref{cond}. Condition \eqref{cond2} is equivalent to
\begin{align}
(C\bar{Y})_p \geq \frac{1}{2} \left( \max_{i=1}^n \left\{ (C\bar{Y})_i \right\} + \min_{i=1}^n \left\{ (C\bar{Y})_i \right\} \right)\label{cond3}
\end{align} 
provided no probability mass of the sequence of iterates has vanished due to roundoff in finite precision arithmetic.
\if(0) If there exists $q \in \{1, \ldots, n\}$ such that
\begin{align*}
\lim_{K \rightarrow \infty} \left\{ \ln \left( \ddfrac{X^{K+1}(q)}{X^0(q)} \right) \right\} = - \infty.
\end{align*}
then \eqref{cond2} / \eqref{cond3} are eventually satisfied. \fi
\end{theorem}

\begin{proof}
{\em (i)} Lemma \ref{nonuniform_lemma} gives that, $\forall p \in \{1, \ldots, n\}$,
\begin{align}
(C\bar{Y}^K)_p - \bar{Y}^K \cdot C\bar{Y}^K = \frac{1}{A_K} \sum_{j=1}^n \bar{Y}^K(j) \ln \left( \ddfrac{\frac{ X^{K+1}(p) }{ X^0(p) }}{\frac{X^{K+1}(j)}{X^0(j)}} \right).\label{enaaa}
\end{align}
Furthermore, from Chebyshev's order inequality and Lemma \ref{ranking_lemma_nonuniform}, we have
\begin{align*}
\sum_{i=1}^n (C\bar{Y}^K)_i X^{K+1}(i) \left(\sigma_K + \rho_K \ln \left( \ddfrac{\ddfrac{X^{K+1}(p)}{X^0(p)}}{\ddfrac{X^{K+1}(i)}{X^0(i)}} \right) \right) \leq
\end{align*}
\begin{align*}
\leq \left( \sum_{i=1}^n (C\bar{Y}^K)_i X^{K+1}(i) \right)  \left( \sum_{i=1}^n X^{K+1}(i) \left(\sigma_K + \rho_K \ln \left( \ddfrac{\ddfrac{X^{K+1}(p)}{X^0(p)}}{\ddfrac{X^{K+1}(i)}{X^0(i)}} \right) \right) \right),
\end{align*}
where $\sigma_K > 0$ and $\rho_K > 0$. We may rewrite the previous inequality as the following more concise expression:
\begin{align*}
\sum_{i=1}^n (C\bar{Y}^K)_i X^{K+1}(i) \left(\sigma_K + \rho_K \ln \left( \ddfrac{\ddfrac{X^{K+1}(p)}{X^0(p)}}{\ddfrac{X^{K+1}(i)}{X^0(i)}} \right) \right) \leq
\end{align*}
\begin{align}
\leq \left( X^{K+1} \cdot C \bar{Y}^K \right)  \left( \sum_{i=1}^n X^{K+1}(i) \left(\sigma_K + \rho_K \ln \left( \ddfrac{\ddfrac{X^{K+1}(p)}{X^0(p)}}{\ddfrac{X^{K+1}(i)}{X^0(i)}} \right) \right) \right).\label{askdlsfdbvmzncvbbba}
\end{align}
Let us work first with the expression on the left-hand-side of the previous inequality. To that end, we have
\begin{align*}
\sum_{i=1}^n (C\bar{Y}^K)_i X^{K+1}(i) \left( \sigma_K + \rho_K \ln \left( \ddfrac{\ddfrac{X^{K+1}(p)}{X^0(p)}}{\ddfrac{X^{K+1}(i)}{X^0(i)}} \right) \right) =
\end{align*}
\begin{align*}
= \sum_{i=1}^n \sum_{j=1}^n C_{ij} \bar{Y}^K(j) X^{K+1}(i) \left(\sigma_K + \rho_K \ln \left( \ddfrac{\ddfrac{X^{K+1}(p)}{X^0(p)}}{\ddfrac{X^{K+1}(i)}{X^0(i)}} \right) \right).
\end{align*}
Assuming $p$ is not a best response to $X^{K+1}$ and letting
\begin{align*}
\ell^* \in \arg\min_{\ell =1}^n \left\{ \left( \sigma_K + \rho_K \ln \left( \ddfrac{\ddfrac{X^{K+1}(p)}{X^0(p)}}{\ddfrac{X^{K+1}(\ell)}{X^0(\ell)}} \right) \right) \right\},
\end{align*}
and further assuming that $\sigma_K > 0, \rho_K > 0$ are such that
\begin{align*}
\sigma_K + \rho_K \ln \left( \ddfrac{\ddfrac{X^{K+1}(p)}{X^0(p)}}{\ddfrac{X^{K+1}(\ell^*)}{X^0(\ell^*)}} \right) \geq 0,
\end{align*}
such that
\begin{align*}
\min_{\ell = 1}^n \left\{ \sigma_K - \rho_K \ln \left( \ddfrac{\ddfrac{X^{K+1}(p)}{X^0(p)}}{\ddfrac{X^{K+1}(\ell)}{X^0(\ell)}} \right) \right\} = 1,
\end{align*}
and such that
\begin{align*}
\sigma_K - \rho_K \ln \left( \ddfrac{\ddfrac{X^{K+1}(p)}{X^0(p)}}{\ddfrac{X^{K+1}(\ell^*)}{X^0(\ell^*)}} \right) = \max_{\ell = 1}^n \left\{ \sigma_K - \rho_K \ln \left( \ddfrac{\ddfrac{X^{K+1}(p)}{X^0(p)}}{\ddfrac{X^{K+1}(\ell)}{X^0(\ell)}} \right) \right\} \leq 2,
\end{align*}
and continuing from above, we obtain
\begin{align*}
\geq \frac{1}{2} \sum_{i=1}^n \sum_{j=1}^n C_{ij} \bar{Y}^K(j) X^{K+1}(i) \left( \sigma_K + \rho_K \ln \left( \ddfrac{\ddfrac{X^{K+1}(p)}{X^0(p)}}{\ddfrac{X^{K+1}(\ell^*)}{X^0(\ell^*)}} \right) \right) \left( \sigma_K - \rho_K \ln \left( \ddfrac{\ddfrac{X^{K+1}(p)}{X^0(p)}}{\ddfrac{X^{K+1}(\ell^*)}{X^0(\ell^*)}} \right) \right)
\end{align*}
which implies
\begin{align*}
= \frac{1}{2} \sum_{i=1}^n \sum_{j=1}^n C_{ij} \bar{Y}^K(j) X^{K+1}(i) \min_{\ell=1}^n \left\{ \left( \sigma_K + \rho_K \ln \left( \ddfrac{\ddfrac{X^{K+1}(p)}{X^0(p)}}{\ddfrac{X^{K+1}(\ell)}{X^0(\ell)}} \right) \right) \left( \sigma_K - \rho_K \ln \left( \ddfrac{\ddfrac{X^{K+1}(p)}{X^0(p)}}{\ddfrac{X^{K+1}(\ell^*)}{X^0(\ell^*)}} \right) \right) \right\}
\end{align*}
which, since 
\begin{align*}
\sigma_K - \rho_K \ln \left( \ddfrac{\ddfrac{X^{K+1}(p)}{X^0(p)}}{\ddfrac{X^{K+1}(\ell^*)}{X^0(\ell^*)}} \right) \geq \sigma_K - \rho_K \ln \left( \ddfrac{\ddfrac{X^{K+1}(p)}{X^0(p)}}{\ddfrac{X^{K+1}(\ell)}{X^0(\ell)}} \right), 
\end{align*}
further implies
\begin{align*}
\geq \frac{1}{2} \sum_{i=1}^n \sum_{j=1}^n C_{ij} \bar{Y}^K(j) X^{K+1}(i) \min_{\ell=1}^n \left\{ \left( \sigma_K + \rho_K \ln \left( \ddfrac{\ddfrac{X^{K+1}(p)}{X^0(p)}}{\ddfrac{X^{K+1}(\ell)}{X^0(\ell)}} \right) \right) \left( \sigma_K - \rho_K \ln \left( \ddfrac{\ddfrac{X^{K+1}(p)}{X^0(p)}}{\ddfrac{X^{K+1}(\ell)}{X^0(\ell)}} \right) \right) \right\}
\end{align*}
which even further implies that
\begin{align*}
= \frac{1}{2} \sum_{i=1}^n \sum_{j=1}^n C_{ij} \bar{Y}^K(j) X^{K+1}(i) \min_{\ell=1}^n \left\{ \left( \sigma^2_K - \rho_K^2 \left( \ln \left( \ddfrac{\ddfrac{X^{K+1}(p)}{X^0(p)}}{\ddfrac{X^{K+1}(\ell)}{X^0(\ell)}} \right) \right)^2 \right) \right\}
\end{align*}
which even further implies
\begin{align*}
= \frac{1}{2} \sum_{i=1}^n \sum_{j=1}^n C_{ij} \bar{Y}^K(j) X^{K+1}(i) \left( \min_{\ell=1}^n \left\{ \sigma^2_K - \rho_K^2 \left( \ln \left( \ddfrac{\ddfrac{X^{K+1}(p)}{X^0(p)}}{\ddfrac{X^{K+1}(\ell)}{X^0(\ell)}} \right) \right)^2 \right\} \right) 
\end{align*}
which even further implies
\begin{align*}
= \frac{1}{2} \sum_{i=1}^n \sum_{j=1}^n C_{ij} \bar{Y}^K(j) X^{K+1}(i) \left( \sigma^2_K - \rho_K^2 \max_{\ell=1}^n \left\{  \left( \ln \left( \ddfrac{\ddfrac{X^{K+1}(p)}{X^0(p)}}{\ddfrac{X^{K+1}(\ell)}{X^0(\ell)}} \right) \right)^2 \right\} \right) 
\end{align*}
which even further implies by \eqref{cond}
\begin{align*}
= \frac{1}{2} \sum_{i=1}^n \sum_{j=1}^n C_{ij} \bar{Y}^K(j) X^{K+1}(i) \left( \sigma^2_K - \rho_K^2  \left( \max_{\ell=1}^n \left\{ \ln \left( \ddfrac{\ddfrac{X^{K+1}(p)}{X^0(p)}}{\ddfrac{X^{K+1}(\ell)}{X^0(\ell)}} \right) \right\} \right)^2  \right) 
\end{align*}
which even further implies 
\begin{align*}
= \frac{1}{2} \sum_{i=1}^n \sum_{j=1}^n C_{ij} \bar{Y}^K(j) X^{K+1}(i) \left( \sigma^2_K -  \left( \rho_K \max_{\ell=1}^n \left\{ \ln \left( \ddfrac{\ddfrac{X^{K+1}(p)}{X^0(p)}}{\ddfrac{X^{K+1}(\ell)}{X^0(\ell)}} \right) \right\} \right)^2  \right) 
\end{align*}
which even further implies
\begin{align*}
= \frac{1}{2} \sum_{i=1}^n \sum_{j=1}^n C_{ij} \bar{Y}^K(j) X^{K+1}(i) \left( \left( \sigma_K - \rho_K \max_{\ell=1}^n \left\{ \ln \left( \ddfrac{\ddfrac{X^{K+1}(p)}{X^0(p)}}{\ddfrac{X^{K+1}(\ell)}{X^0(\ell)}} \right) \right\} \right) \left( \sigma_K + \rho_K \max_{\ell=1}^n \left\{ \ln \left( \ddfrac{\ddfrac{X^{K+1}(p)}{X^0(p)}}{\ddfrac{X^{K+1}(\ell)}{X^0(\ell)}} \right) \right\} \right)  \right) 
\end{align*}
which even further implies
\begin{align*}
= \frac{1}{2} \sum_{i=1}^n \sum_{j=1}^n C_{ij} \bar{Y}^K(j) X^{K+1}(i) \left( \min_{\ell = 1}^n \left\{ \sigma_K - \rho_K \ln \left( \ddfrac{\ddfrac{X^{K+1}(p)}{X^0(p)}}{\ddfrac{X^{K+1}(\ell)}{X^0(\ell)}} \right) \right\} \left( \sigma_K + \rho_K \max_{\ell=1}^n \left\{ \ln \left( \ddfrac{\ddfrac{X^{K+1}(p)}{X^0(p)}}{\ddfrac{X^{K+1}(\ell)}{X^0(\ell)}} \right) \right\} \right)  \right) 
\end{align*}
which, under the previous assumption that
\begin{align*}
\min_{\ell = 1}^n \left\{ \sigma_K - \rho_K \ln \left( \ddfrac{\ddfrac{X^{K+1}(p)}{X^0(p)}}{\ddfrac{X^{K+1}(\ell)}{X^0(\ell)}} \right) \right\} = 1,
\end{align*}
even further implies that
\begin{align*}
= \frac{1}{2} \sum_{i=1}^n \sum_{j=1}^n C_{ij} \bar{Y}^K(j) X^{K+1}(i) \left( \sigma_K + \rho_K \max_{\ell=1}^n \left\{ \ln \left( \ddfrac{\ddfrac{X^{K+1}(p)}{X^0(p)}}{\ddfrac{X^{K+1}(\ell)}{X^0(\ell)}} \right) \right\} \right) 
\end{align*}
which even further implies
\begin{align*}
\geq \frac{1}{2} \sum_{i=1}^n \sum_{j=1}^n C_{ij} \bar{Y}^K(j) X^{K+1}(i) \left( \sigma_K + \rho_K \ln \left( \ddfrac{\ddfrac{X^{K+1}(p)}{X^0(p)}}{\ddfrac{X^{K+1}(j)}{X^0(j)}} \right) \right).
\end{align*}
Combining the previous inequality with \eqref{askdlsfdbvmzncvbbba}, we obtain
\begin{align*}
\sum_{i =1}^n \sum_{j=1}^n C_{ij} \bar{Y}^K(j) X^{K+1}(i) \left( \sigma_K + \rho_K \ln \left( \ddfrac{\ddfrac{X^{K+1}(p)}{X^0(p)}}{\ddfrac{X^{K+1}(j)}{X^0(j)}} \right) \right)  \leq
\end{align*}
\begin{align*}
\leq 2 \left( X^{K+1} \cdot C \bar{Y}^K \right)  \left( \sum_{i=1}^n X^{K+1}(i) \left(\sigma_K + \rho_K \ln \left( \ddfrac{\ddfrac{X^{K+1}(p)}{X^0(p)}}{\ddfrac{X^{K+1}(i)}{X^0(i)}} \right) \right)  \right)
\end{align*}
which further implies
\begin{align*}
(CX^{K+1})_{\min} \sum_{j=1}^n \bar{Y}^K(j) \left( \sigma_K + \rho_K \ln \left( \ddfrac{\ddfrac{X^{K+1}(p)}{X^0(p)}}{\ddfrac{X^{K+1}(j)}{X^0(j)}} \right) \right)  \leq
\end{align*}
\begin{align*}
\leq 2 \left( X^{K+1} \cdot C \bar{Y}^K \right)  \left( \sum_{i=1}^n X^{K+1}(i) \left(\sigma_K + \rho_K \ln \left( \ddfrac{\ddfrac{X^{K+1}(p)}{X^0(p)}}{\ddfrac{X^{K+1}(i)}{X^0(i)}} \right) \right)  \right)
\end{align*}
which even further implies
\begin{align*}
\sum_{j=1}^n \bar{Y}^K(j) \left( \sigma_K + \rho_K \ln \left( \ddfrac{\ddfrac{X^{K+1}(p)}{X^0(p)}}{\ddfrac{X^{K+1}(j)}{X^0(j)}} \right) \right)  \leq
\end{align*}
\begin{align*}
\leq 2 \left( \frac{X^{K+1} \cdot C \bar{Y}^K}{(CX^{K+1})_{\min}} \right)  \left( \sum_{i=1}^n X^{K+1}(i) \left(\sigma_K + \rho_K \ln \left( \ddfrac{\ddfrac{X^{K+1}(p)}{X^0(p)}}{\ddfrac{X^{K+1}(i)}{X^0(i)}} \right) \right)  \right)
\end{align*}
which even further implies
\begin{align*}
\sum_{j=1}^n \bar{Y}^K(j) \ln \left( \ddfrac{\ddfrac{X^{K+1}(p)}{X^0(p)}}{\ddfrac{X^{K+1}(j)}{X^0(j)}} \right) \leq
\end{align*}
\begin{align*}
\leq \frac{2}{\rho_K} \left( \left( \frac{X^{K+1} \cdot C \bar{Y}^K}{(CX^{K+1})_{\min}} \right)  \left( \sum_{i=1}^n X^{K+1}(i) \left(\sigma_K + \rho_K \ln \left( \ddfrac{\ddfrac{X^{K+1}(p)}{X^0(p)}}{\ddfrac{X^{K+1}(i)}{X^0(i)}} \right) \right) \right) - \sigma_K \right)
\end{align*}
which even further implies
\begin{align*}
\sum_{j=1}^n \bar{Y}^K(j) \ln \left( \ddfrac{\ddfrac{X^{K+1}(p)}{X^0(p)}}{\ddfrac{X^{K+1}(j)}{X^0(j)}} \right) \leq 2 \left( \frac{X^{K+1} \cdot C \bar{Y}^K}{(CX^{K+1})_{\min}} \right) \ln \left( \ddfrac{X^{K+1}(p)}{X^0(p)} \right).
\end{align*}
Combining the previous inequality with \eqref{enaaa} we obtain
\begin{align*}
(C\bar{Y}^K)_p - \bar{Y}^K \cdot C \bar{Y}^K \leq \left( \frac{X^{K+1} \cdot C \bar{Y}^K}{(CX^{K+1})_{\min}} \right) \ln \left( \ddfrac{X^{K+1}(p)}{X^0(p)} \right) \ddfrac{2}{A_K}
\end{align*}
which implies by the assumption $C$ satisfies
\begin{align*}
\ddfrac{\max_{ij} C_{ij}}{\min_{ij} C_{ij}} = c
\end{align*}
that
\begin{align*}
\ln \left( \ddfrac{X^{K+1}(p)}{X^0(p)} \right) > 0 \Rightarrow (C\bar{Y}^K)_p - \bar{Y}^K \cdot C \bar{Y}^K \leq \ln \left( \ddfrac{X^{K+1}(p)}{X^0(p)} \right) \ddfrac{2c}{A_K}
\end{align*}
\begin{align*}
\ln \left( \ddfrac{X^{K+1}(p)}{X^0(p)} \right) < 0 \Rightarrow (C\bar{Y}^K)_p - \bar{Y}^K \cdot C \bar{Y}^K \leq \ln \left( \ddfrac{X^{K+1}(p)}{X^0(p)} \right) \ddfrac{2/c}{A_K}
\end{align*}
as claimed in the statement of the theorem. Let us now verify that there exist $\sigma_K > 0$ and $\rho_K > 0$ such that
\begin{align*}
\min_{\ell = 1}^n \left\{ \sigma_K + \rho_K \ln \left( \ddfrac{\ddfrac{X^{K+1}(p)}{X^0(p)}}{\ddfrac{X^{K+1}(\ell)}{X^0(\ell)}} \right) \right\} \geq 0,
\end{align*}
\begin{align*}
\min_{\ell = 1}^n \left\{ \sigma_K - \rho_K \ln \left( \ddfrac{\ddfrac{X^{K+1}(p)}{X^0(p)}}{\ddfrac{X^{K+1}(\ell)}{X^0(\ell)}} \right) \right\} = 1 \Leftrightarrow \min_{\ell = 1}^n \left\{ \sigma_K + \rho_K \ln \left( \ddfrac{\ddfrac{X^{K+1}(\ell)}{X^0(\ell)}}{\ddfrac{X^{K+1}(p)}{X^0(p)}} \right) \right\} = 1,
\end{align*}
and
\begin{align*}
\max_{\ell = 1}^n \left\{ \sigma_K - \rho_K \ln \left( \ddfrac{\ddfrac{X^{K+1}(p)}{X^0(p)}}{\ddfrac{X^{K+1}(\ell)}{X^0(\ell)}} \right) \right\} \leq 2 \Leftrightarrow \max_{\ell = 1}^n \left\{ \sigma_K + \rho_K \ln \left( \ddfrac{\ddfrac{X^{K+1}(\ell)}{X^0(\ell)}}{\ddfrac{X^{K+1}(p)}{X^0(p)}} \right) \right\} \leq 2.
\end{align*}
By the assumption
\begin{align*}
\liminf\limits_{K \rightarrow \infty} \left\{ \ln \left( \ddfrac{X^{K+1}(p)}{X^0(p)} \right) \right\} > - \infty
\end{align*}
the quantity
\begin{align*}
\ln \left( \ddfrac{X^{K+1}(p)}{X^0(p)} \right)
\end{align*}
is bounded away from $-\infty$. Choosing $\rho_K < \gamma \sigma_K$, to satisfy the previous inequalities it suffices to find $\rho_K$, $\sigma_K$, and $\gamma$ such that
\begin{align}
1 + \gamma \min_{\ell = 1}^n \left\{ \ln \left( \ddfrac{\ddfrac{X^{K+1}(p)}{X^0(p)}}{\ddfrac{X^{K+1}(\ell)}{X^0(\ell)}} \right) \right\} \geq 0,\label{a1}
\end{align}
\begin{align}
\sigma_K + \rho_K \min_{\ell = 1}^n \left\{ \ln \left( \ddfrac{\ddfrac{X^{K+1}(\ell)}{X^0(\ell)}}{\ddfrac{X^{K+1}(p)}{X^0(p)}} \right) \right\} = 1,\label{a2}
\end{align}
and
\begin{align}
\sigma_K \left(1 + \gamma \max_{\ell = 1}^n \left\{ \ln \left( \ddfrac{\ddfrac{X^{K+1}(\ell)}{X^0(\ell)}}{\ddfrac{X^{K+1}(p)}{X^0(p)}} \right) \right\} \right) \leq 2.\label{a3}
\end{align}
Choosing $\gamma$ arbitrarily close to $0$ (but bounded away from $0$) satisfies the first inequality, since
\begin{align*}
\liminf\limits_{K \rightarrow \infty} \left\{ \min_{\ell = 1}^n \left\{ \ln \left( \ddfrac{\ddfrac{X^{K+1}(p)}{X^0(p)}}{\ddfrac{X^{K+1}(\ell)}{X^0(\ell)}} \right) \right\} \right\} > - \infty.
\end{align*}
Further choosing $\sigma_K \leq 2$ also satisfies the third inequality, since
\begin{align*}
\max_{\ell = 1}^n \left\{ \ln \left( \ddfrac{\ddfrac{X^{K+1}(\ell)}{X^0(\ell)}}{\ddfrac{X^{K+1}(p)}{X^0(p)}} \right) \right\} \geq 0.
\end{align*}
Further choosing $\rho_K$ arbitrarily close to $0$ and $\sigma_K$ arbitrarily close to $1$ from above satisfies the second inequality. This completes the proof.\\

{\em (ii)} It follows by straight algebra that \eqref{a1}, \eqref{a2}, and \eqref{a3} are also satisfied.\\

{\em (iii)} That \eqref{cond2} implies \eqref{cond} follows by the symmetry of $f(x) = x^2$. That \eqref{cond2} is equivalent to \eqref{cond3} follows by Lemma \ref{nonuniform_lemma} as used in the following equivalence:

\begin{align*}
\max_{\ell=1}^n \left\{ \ln \left( \ddfrac{\ddfrac{X^{K+1}(\ell)}{X^0(\ell)}}{\ddfrac{X^{K+1}(p)}{X^0(p)}} \right) \right\} \leq \max_{\ell=1}^n \left\{ \ln \left( \ddfrac{\ddfrac{X^{K+1}(p)}{X^0(p)}}{\ddfrac{X^{K+1}(\ell)}{X^0(\ell)}} \right) \right\}
\end{align*}

\begin{align*}
\max_{i=1}^n \left\{ (C\bar{Y})_i \right\} - (C\bar{Y})_p \leq (C\bar{Y})_p - \min_{i=1}^n \left\{ (C\bar{Y})_i \right\}
\end{align*}

\begin{align*}
(C\bar{Y})_p \geq \frac{1}{2} \left( \max_{i=1}^n \left\{ (C\bar{Y})_i \right\} + \min_{i=1}^n \left\{ (C\bar{Y})_i \right\} \right)
\end{align*}

\end{proof}

\section{The multiplicative weights convexity lemma}

Let $(C, C^T)$ be a symmetric bimatrix game,
\begin{align*}
T_i(X) = X(i) \cdot \frac{\exp\left\{ \alpha (CX)_i \right\}}{ \sum_{j=1}^n X(j) \exp \left\{ \alpha (CX)_j \right\} } \quad i = 1, \ldots, n,
\end{align*}
and $RE(\cdot, \cdot)$ denote the relative entropy function between probability vectors. We then have that

\begin{lemma}[\citep{Avramopoulos2}]
\label{convexity_lemma}
\begin{align*}
\forall X \in \mathbb{\mathring{X}}(C) \mbox{ } \forall Y \in \mathbb{X}(C) : RE(Y, T(X)) \mbox{ is a convex function of }\alpha.
\end{align*}
\end{lemma}

\begin{proof}
Let $\hat{X} \equiv T(X)$. We have
{\allowdisplaybreaks
\begin{align*}
\frac{d}{d \alpha}& RE(Y, \hat{X}) = \\
  &= \frac{d}{d \alpha} \left( \sum_{i \in \mathcal{C}(Y)} Y(i) \ln\left( \frac{Y(i)}{\hat{X}(i)} \right) \right)\\
  &= \frac{d}{d \alpha} \left( \sum_{i \in \mathcal{C}(Y)} Y(i) \ln\left( Y(i) 
  \cdot \frac{\sum_{j = 1}^n X(j) \exp\{ \alpha (CX)_j \}}{ X(i) \exp\{ \alpha (CX)_i \}} \right) \right)\\
  &= \frac{d}{d \alpha} \left( \sum_{i \in \mathcal{C}(Y)} Y(i) \ln\left( \frac{\sum_{j = 1}^n X(j) \exp\{ \alpha (CX)_j \}}{X(i)\exp\{ \alpha (CX)_i \}} \right) \right)\\
  &= \sum_{i \in \mathcal{C}(Y)} Y(i) \frac{d}{d \alpha} \left( \ln\left( \frac{\sum_{j = 1}^n X(j) \exp\{ \alpha (CX)_j \}}{X(i)\exp\{ \alpha (CX)_i \}} \right) \right).
\end{align*}}
Furthermore, using $(\cdot)'$ as alternative notation (abbreviation) for $d/d\alpha(\cdot)$,
\begin{align*}
\frac{d}{d \alpha} \left( \ln\left( \frac{\sum_{j = 1}^n X(j) \exp\{ \alpha (CX)_j \}}{X(i)\exp\{ \alpha (CX)_i \}} \right) \right) = \frac{X(i)\exp\{ \alpha (CX)_i \}}{\sum_{j = 1}^n X(j) \exp\{ \alpha (CX)_j \}} \left( \frac{\sum_{j = 1}^n X(j) \exp\{ \alpha (CX)_j \}}{X(i)\exp\{ \alpha (CX)_i \}} \right)'
\end{align*}
and
\begin{align*}
\left( \frac{\sum_{j = 1}^n X(j) \exp\{ \alpha (CX)_j \}}{X(i)\exp\{ \alpha (CX)_i \}} \right)'  &= \frac{\sum_{j = 1}^n X(j) (CX)_j \exp\{ \alpha (CX)_j \} X(i) \exp\{ \alpha (CX)_i \}}{\left( X(i) \exp\{ \alpha (CX)_i \} \right)^2} -\\
  &- \frac{X(i) (CX)_i \sum_{j = 1}^n X(j) \exp\{ \alpha (CX)_j \} \exp\{ \alpha (CX)_i \}}{\left( X(i) \exp\{ \alpha (CX)_i \} \right)^2} =\\
  &= \frac{\sum_{j = 1}^n X(j) (CX)_j \exp\{ \alpha (CX)_j \} \}}{ X(i) \exp\{ \alpha (CX)_i \} } -\\
  &- \frac{(CX)_i \sum_{j = 1}^n X(j) \exp\{ \alpha (CX)_j \} \}}{ X(i) \exp\{ \alpha (CX)_i \} }.
\end{align*}
Therefore,
\begin{align}
\frac{d}{d \alpha} RE(Y, \hat{X}) = \frac{\sum_{j = 1}^n X(j) (CX)_j \exp\{ \alpha (CX)_j \}}{\sum_{j = 1}^n X(j) \exp\{ \alpha (CX)_j \}} - Y \cdot CX.\label{valentine}
\end{align}
Furthermore,
\begin{align*}
\frac{d^2}{d \alpha^2} RE(Y, \hat{X}) &= \frac{\left( \sum_{j = 1}^n X(j) ((CX)_j)^2 \exp\{ \alpha (CX)_j \} \right) \left( \sum_{j = 1}^n X(j) \exp\{ \alpha (CX)_j \} \right)}{\left( \sum_{j = 1}^n X(j) \exp\{ \alpha (CX)_j \} \right)^2} -\\
  &- \frac{\left( \sum_{j = 1}^n X(j) ((CX)_j) \exp\{ \alpha (CX)_j \} \right)^2}{\left( \sum_{j = 1}^n X(j) \exp\{ \alpha (CX)_j \} \right)^2}.
\end{align*}
Jensen's inequality implies that
\begin{align*}
\frac{\sum_{j = 1}^n X(j) ((CX)_j)^2 \exp\{ \alpha (CX)_j \}}{\sum_{j = 1}^n X(j) \exp\{ \alpha (CX)_j \}} \geq \left( \frac{\sum_{j = 1}^n X(j) ((CX)_j) \exp\{ \alpha (CX)_j \}}{\sum_{j = 1}^n X(j) \exp\{ \alpha (CX)_j \}} \right)^2,
\end{align*}
which is equivalent to the numerator of the second derivative being nonnegative as $X$ is a probability vector. Note that the inequality is strict unless 
\begin{align*}
\forall i, j \in \mathcal{C}(X) : (CX)_i = (CX)_j.
\end{align*}
This completes the proof.
\end{proof}

\begin{figure}[tb]
\centering
\includegraphics[width=10cm]{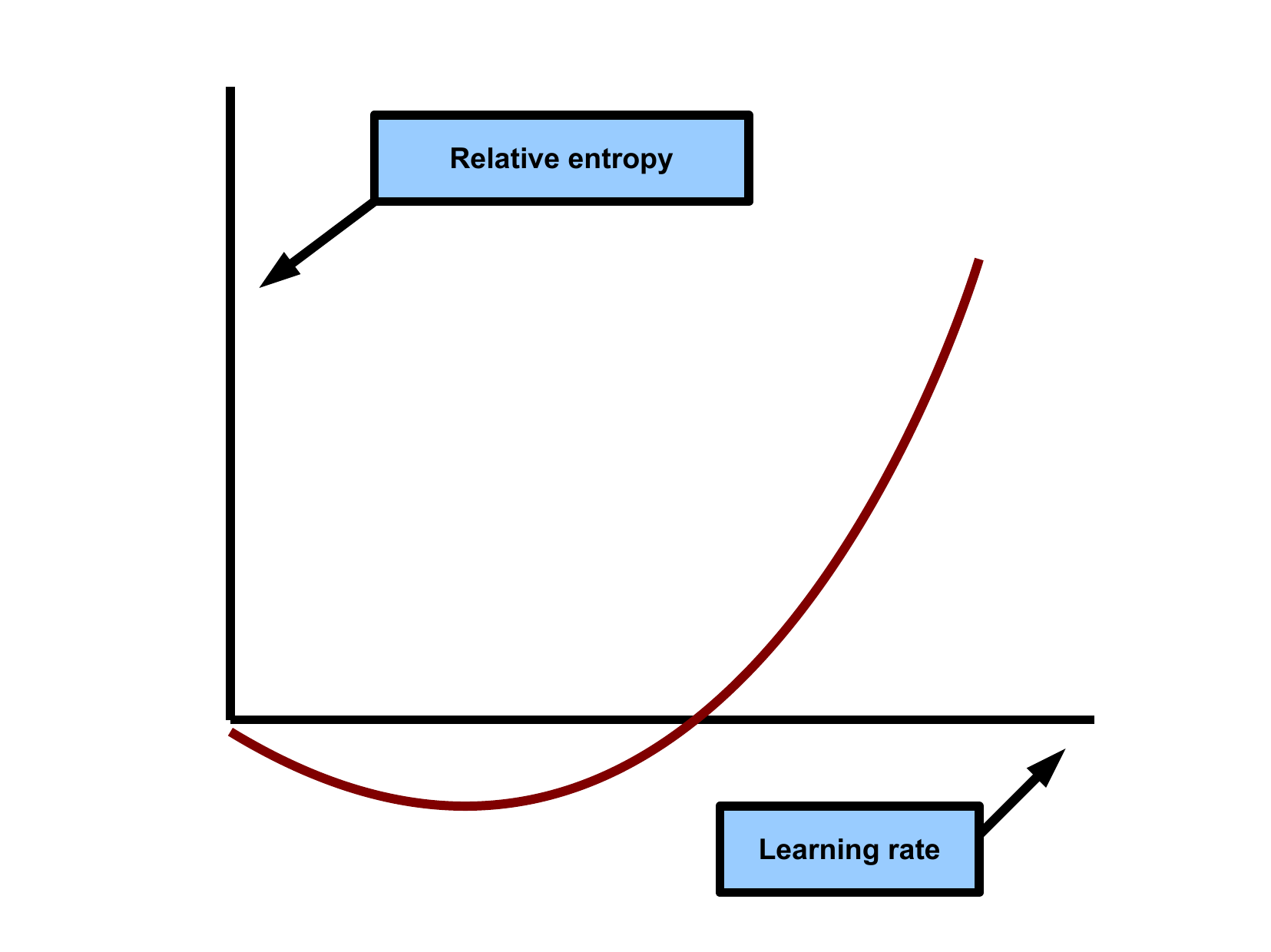}
\caption{\label{qpwoeirxwoexururxs}
$RE(X^*, T(X)) - RE(X^*, X)$ as a function of the learning rate $\alpha$.}
\end{figure}

\begin{lemma}
\label{convexity_lemma_2}
Let $C \in \mathbb{\hat{C}}$. Then, for all $Y \in \mathbb{X}(C)$ and for all $X \in \mathbb{\mathring{X}}(C)$, we have that
\begin{align*}
\forall \alpha > 0 : RE(Y, T(X)) \leq RE(Y, X) - \alpha (Y-X) \cdot CX + \alpha (\exp\{\alpha\} - 1) \bar{C},
\end{align*}
where $0 < \bar{C} < 1$ can be chosen independent of $X$ and $Y$.
\end{lemma}

\begin{proof}
Since, by Lemma \ref{convexity_lemma}, $RE(Y, T(X)) - RE(Y, X)$ is a convex function of $\alpha$, we have by the aforementioned secant inequality that, for $\alpha > 0$,
\begin{align}
RE(Y, T(X)) - RE(Y, X) \leq \alpha \left( RE(Y, T(X)) - RE(Y, X) \right)' = \alpha \cdot \frac{d}{d \alpha} RE(Y, T(X)).\label{ooone}
\end{align}
Straight calculus (cf. Lemma \ref{convexity_lemma}) implies that
\begin{align*}
\frac{d}{d \alpha} RE(Y, T(X)) = \frac{\sum_{j = 1}^n X(j) (CX)_j \exp\{ \alpha (CX)_j \}}{\sum_{j = 1}^n X(j) \exp\{ \alpha (CX)_j \}} - Y \cdot CX.
\end{align*}
Using Jensen's inequality in the previous expression, we obtain 
\begin{align}
\frac{d}{d \alpha} RE(Y, T(X)) \leq \frac{\sum_{j = 1}^n X(j) (CX)_j \exp\{ \alpha (CX)_j \}}{\exp\{ \alpha X \cdot CX \}} - Y \cdot CX.\label{vbvbvb}
\end{align}
Note now that
\begin{align}
\exp\{ \alpha x \} \leq 1 + (\exp\{ \alpha \} - 1) x, x \in [0, 1],\label{freund_schapire_in}
\end{align}
an inequality used in \cite[Lemma 2]{FreundSchapire2}. Using $C \in \mathbb{\hat{C}}$, \eqref{vbvbvb} and \eqref{freund_schapire_in} imply that
\begin{align*}
\frac{d}{d \alpha} RE(Y, T(X)) \leq \frac{X \cdot CX}{\exp\{ \alpha X \cdot CX \}} - Y \cdot CX + \left( \exp\{ \alpha \} - 1 \right) \frac{\sum_{j=1}^n X(j) (CX)_j^2}{\exp\{ \alpha X \cdot CX \}}
\end{align*}
and since $\exp\{\alpha X \cdot CX\} \geq 1$ (again by the assumption that $C \in \mathbb{\hat{C}}$), we have
\begin{align*}
\frac{d}{d \alpha} RE(Y, T(X)) \leq X \cdot CX - Y \cdot CX + \left( \exp\{ \alpha \} - 1 \right) \sum_{j=1}^n X(j) (CX)_j^2.
\end{align*}
Choosing $\bar{C} = \max \left\{\sum X(j) (CX)_j^2 \right\}$ and combining with \eqref{ooone} yields the lemma.
\end{proof}

\section{Repelling non-equilibrium fixed points}
\label{repelling}

\subsection{Repelling fixed points}

Recall that $X^*$ is a fixed point of $T$ if $T(X^*) = X^*$. We may assume $T$ is as in Section \ref{Hedge}. We are interested in the behavior of $T$ starting from an interior strategy of the probability simplex.

\begin{definition}
Let $T : \mathbb{X} \subset \mathbb{R}^n \rightarrow \mathbb{X}$ such that $X^*$ is a fixed point of $T$. $X^*$ is {\em repelling} under $T$ if, for all $X_0$ in the relative interior of $\mathbb{X}$,
\begin{align*}
\lim_{k \rightarrow \infty} T^k(X_0) \neq X^*.
\end{align*}
\end{definition}

\begin{lemma}
\label{repelling_lemma}
Suppose $X^*$ is a fixed point of $T$. If $\dot{V}$ and $V$ are positive definite with respect to $T$ and $X^*$, $X^*$ is repelling under $T$.
\end{lemma}

\begin{proof}
Suppose there exists $X_0$ such that $T^k(X_0)$ converges to $X^*$. Since $V$ is continuous, $V(T^k(X_0))$ also converges, in fact, to $0$. Consider any $k$ large enough that such $T^k(X_0)$ is close to $X^*$. Since, for all $X$ in a neighborhood of $X^*$, $\dot{V}(X) > 0$, for all $\hat{k} > k$, $V(T^{\hat{k}}(X_0)) > V(T^k(X_0))$, which is a positive constant bounded away from $0$, contradicting convergence of $V(T^k(X_0))$ to $0$.
\end{proof}

\subsection{Non-equilibrium fixed points are repelling}

Starting off with preliminaries, the following propositions are standard \citep{Econ_analysis}.

\begin{proposition}
Let $f : \mathbb{O} \rightarrow \mathbb{O}'$ be a continuous function between the topological spaces $\mathbb{O}$ and $\mathbb{O}'$. Then if $\mathbb{O}$ is connected, $f(\mathbb{O}) \subset \mathbb{O}'$ is connected.
\end{proposition}

The following proposition is known as the {\em intermediate value theorem}.

\begin{proposition}
If $f : |a, b| \subset \mathbb{R} \rightarrow \mathbb{R}$, where $|a, b|$ is an interval, is continuous, $y, y'' \in f(|a, b|)$ and $y < y' < y''$, then there exists $x \in |a, b|$ such that $f(x) = y'$.
\end{proposition}

A slightly more general version can be stated as follows.

\begin{proposition}
\label{gv_int_v_theorem}
Suppose $\mathbb{O}$ is connected and $f : \mathbb{O} \rightarrow \mathbb{R}$ is continuous. If $a, b \in \mathbb{O}$ and $f(a) < y < f(b)$, there exists $x \in \mathbb{O}$ such that $f(x) = y$. 
\end{proposition}

The following lemma is a straightforward implication of the previous propositions.

\begin{lemma}
\label{fp_instability_lemma}
Let $f : O \subset \mathbb{R}^n \rightarrow \mathbb{R}$ be a continuous map such $O$ is a neighborhood of $X^* \in \mathbb{R}^n$. If $f(X^*) > 0$, there exists a neighborhood $O' \subset O$ of $X^*$ such that, for all $X \in O'$, $f(X) > 0$.
\end{lemma}

\begin{proof}
If, for all $X \in O$, $f(X) \neq 0$, the lemma is trivially true by letting $O' = O$. Suppose, therefore, there exists $X \in O$ such that $f(X) = 0$ and let
\begin{align*}
\mathbb{F} = \{X \in O | f(X) > 0 \}.
\end{align*}
$\mathbb{F}$ is an open set containing $X^*$ and, therefore, there exists a neighborhood $O' \subset \mathbb{F}$ of $X^*$ such that, for all $X \in O'$, $f(X) > 0$. Since $f$ is continuous and $O'$ is connected, $f(O')$ is connected, and the intermediate value theorem implies that, for all $X \in O'$, $f(X) > 0$. For if $0 \in f(O')$, then, since $f(O')$ is open (since $O'$ is open and $f$ is continuous) and connected, there exists $X \in O'$ such that $f(X) < 0$, and Proposition \ref{gv_int_v_theorem} implies $\exists Y \in O' : f(Y) = 0$, contradicting the existence of $O'$.
\end{proof}

\begin{lemma}
\label{yannis_lemma}
Non-equilibrium fixed points are repelling under $T$ for small enough values of the learning rate.
\end{lemma}

\begin{proof}
Let $X^*$ be a non-equilibrium fixed point of $T$ and
\begin{align*}
i \in \arg\max \{ E_j \cdot CX^* | j \in \mathcal{K}(C) \}.
\end{align*}
By the assumption $X^* \not\in NE^+(C, C^T)$, $E_i \cdot CX^* > X^* \cdot CX^*$. Now let
\begin{align*}
V(X) = X(i) - X^*(i) = X(i)
\end{align*}
since $X^*(i) = 0$. Note $V(X^*) = 0$. Letting $\hat{X} = T(X)$, 
\begin{align*}
\dot{V}(X) &= V(\hat{X}) - V(X)\\
  &= \hat{X}(i) - X(i)\\
  &= X(i) \cdot \frac{\exp\left\{ \alpha E_i \cdot CX \right\}}{ \sum_{j=1}^n X(j) \exp \left\{ \alpha E_j \cdot CX \right\} } - X(i).
\end{align*} 
Let 
\begin{align*}
f(\alpha) \equiv \frac{\exp\left\{ \alpha E_{i} \cdot CX \right\}}{ \sum_{j=1}^n X(j) \exp \left\{ \alpha E_j \cdot CX \right\} } \equiv \frac{g(\alpha)}{h(\alpha)}.
\end{align*}
Letting $(CX)_j \equiv E_j \cdot CX$, we have
\begin{align}
\frac{d f}{d \alpha} = \frac{ (CX)_i g(\alpha) h(\alpha) - g(\alpha) \left( \sum_{j=1}^n X(j) (CX)_j \exp \left\{ \alpha (CX)_j \right\} \right) }{ h^2(\alpha)},\label{cutie_eq}
\end{align}
which implies
\begin{align*}
\left. \frac{df}{d \alpha} \right|_{\alpha = 0} = E_{i} \cdot CX - X \cdot CX.
\end{align*}
Therefore, since $E_i \cdot CX^* > X^* \cdot CX^*$, Lemma \ref{fp_instability_lemma} implies
\begin{align*}
\exists O \mbox{ } \forall X \in O/\{X^*\} : E_{i} \cdot CX > X \cdot CX.
\end{align*}
where $O$ is a neighborhood of $X^*$. Therefore, for all $X \in O$,
\begin{align*}
\left. \frac{d f}{d \alpha} \right|_{\alpha = 0} > 0.
\end{align*} 
Therefore, 
\begin{align*}
\forall X \in O \mbox{ } \exists \hat{\alpha} > 0 \mbox{ } \forall 0 < \alpha \leq \hat{\alpha}  : \dot{V}(X) > 0
\end{align*}
and Lemma \ref{repelling_lemma} completes the proof. 
\end{proof}

\subsection{Considering our maximum-clique computation algorithm}

Considering Ariadne, our maximum-clique computation algorithm, to ensure $T$ does not converge to a fixed point whose maximum payoff is equal to or greater than $C_{00} + \epsilon$, we ensure that
\begin{align*}
\dot{V}(X) = X(i) \cdot \ddfrac{\exp\left\{ \alpha \max_{i=1}^n \left\{ (CX)_i \right\} \right\}}{ \sum_{j=1}^n X(j) \exp \left\{ \alpha E_j \cdot CX \right\} } - X(i) > 0
\end{align*} 
or, equivalently, that
\begin{align*}
\exp\left\{ \alpha \max_{i=1}^n \left\{ (CX)_i \right\} \right\} - \sum_{j=1}^n X(j) \exp \left\{ \alpha (CX)_j \right\} > 0.
\end{align*} 
To that end, it suffices to choose $\alpha$ such that 
\begin{align*}
1 + \alpha \max_{i=1}^n \left\{ (CX)_i \right\} - 1 - (\exp\{\alpha\}-1) X \cdot CX > 0
\end{align*}
where in the previous inequality we have used \eqref{freund_schapire_in}. Cancelling terms, we obtain
\begin{align*}
\alpha \max_{i=1}^n \left\{ (CX)_i \right\} - (\exp\{\alpha\}-1) X \cdot CX > 0
\end{align*}
and using our bounds, we finally obtain
\begin{align}
\alpha (C_{00} + \epsilon) - (\exp\{\alpha\}-1) C_{00} > 0.\label{aaaaaaaaaa}
\end{align}
That is, provided $\alpha$ satisfies the latter inequality, $T$ cannot be attracted by a fixed point $X^*$ such that 
\begin{align*}
\max_{i=1}^n \left\{ (CX)_i \right\} \geq C_{00} + \epsilon.
\end{align*}

\section{Pseudocode}
\label{pseudocode}

In the pseudocode given below, Ariadne starts with the largest possible value of the Nisan parameter $k$ (which is equal to $n$, the number of vertices of the underlying graph $G$) and iteratively decreases the value of $k$ upon failure to compute a maximal clique clique of size equal to $k$. Once the sequence of iterates enters the effective interior of the lower feasibility set (that is, in other words, upon successful initialization of Ariadne's primary dynamical system), Ariadne uses an adaptive learning rate $\alpha$ and an adaptive parameter $\mathsf{C}$. It also computes a sequence of multipliers by a transformation of the (primary) sequence of iterates and using these multipliers computes a secondary sequence of iterates. To avoid cluttering notation we denote approximate multipliers by $Y$ (instead of $\tilde{Y}$). Ariadne's dynamical system iterates until either an equilibrium is computed by the empirical average of the sequence of multipliers or until the upper bound on the number of iterations to attain a desired equilibrium approximation error is violated. Both conditions are met in a polynomial number of iterations (cf. Lemma \ref{theorem1}). Unless the bound is violated, Ariadne computes an approximate well-supported equilibrium (cf. Proposition \ref{cgood}) using the empirical average of the sequence of multipliers and checks if the carrier of the well-supported equilibrium carries a fixed point of the replicator dynamic and if that fixed point is the characteristic vector of a clique of size equal to the Nisan parameter. If the Nisan parameter is equal to the clique number this test cannot fail (cf. Lemma \ref{theorem1}). Upon computation of an equilibrium that is a clique of size smaller than the Nisan parameter or a non-clique equilibrium or upon violation of the upper bound on the number of iterations, Ariadne sets $k \leftarrow k-1$ (decreases the value of the Nisan parameter by one) and repeats. Note that, as discussed earlier, if the Nisan parameter is equal to the clique number $\omega(G)$, Ariadne switches from the primary to the secondary system as the potential function $\mathsf{P}$ approaches the value $C_{00}$. The mechanism by which the simplified secondary system replaces the primary is simple: The primary system is replaced upon detection of an iterate whose potential value is greater than $\mathsf{C}_\ell$. To avoid cluttering the pseudocode, we do not invoke the secondary system below. The parameter $\alpha_\epsilon$ is chosen so as to satisfy \eqref{aaaaaaaaaa}. Typically $\alpha_\epsilon$ would remain constant given $C_{00}$.

\begin{algorithm}[]
\caption{{\sc Main-Body}($C$, $n$)}
\label{MBC}
\begin{algorithmic}[1]

\State{$k \leftarrow n$}

\While{true}
\State{$X$ = {\sc Compute-Initial-Condition}($C$)}
\State{$\bar{Y}$ = {\sc Compute-Equilibrium}$\left(C, X, \frac{1}{4} \left( \frac{1}{k-1} - \frac{1}{k} \right)\right)$}

\If{$\bar{Y}$ = NULL}

\State{$k \leftarrow k-1$}

\Else

\State{Compute well-supported approximate equilibrium $\hat{Y}$ using $\bar{Y}$}

\If{$\hat{Y}$ is in the carrier of a clique $Y^*$ of size equal to $k$}

\State{ \Return{$Y^*$} }

\Else

\State{$k \leftarrow k-1$}

\EndIf

\EndIf

\EndWhile
  
\end{algorithmic}
\end{algorithm}

The initial condition $X^0$ is set such that it is strictly upper feasible, such that
\begin{align*}
X^0 \cdot CX^0 \geq \frac{1}{2} \left( 1 + \frac{1}{2} \right),
\end{align*} 
and such that
\begin{align*}
\min_{i=1}^n \left\{ X^0(i) \right\} \geq 2^{-n}.
\end{align*}
The first requirement ensures that we can use the discrete-time replicator dynamic to initialize the primary dynamical system. The second requirement ensures that uniform equalizers that are not cliques cannot attract the iterates. The third requirement (plugged in the formula which gives the equilibrium approximation error) ensures that the upper bound on the number of iterations until either a maximum-clique is computed or until the Nisan parameter is abandoned is polynomial.

 \begin{algorithm}[]
\caption{{\sc Compute-Initial-Condition}($C$)}
\label{CIC}
\begin{algorithmic}[1]

\State{Find an edge that belongs to a clique of size four.}
\State{Call the characteristic vector of this edge $X^*$.}
\State{Call the barycenter of $\mathbb{Y}$ $X^0$.}
\While{$X^0$ is not strictly upper feasible and $X^0 \cdot CX^0 < (1/2)(1 + 1/2)$}

\State{$X^0 \leftarrow 1/2 (X^0 + X^*)$}

\EndWhile

\State{$X = X^0$}

\State{$\bar{Y} = X$}

\While{$X$ is above the effective interior}

\State{$X = J(X)$}

\State{Compute multiplier $Y$, use learning rate $\alpha' \leftarrow \alpha_{\epsilon}$, and {\sc Update-and-Extend}($\bar{Y}$, $Y$, $\alpha'$)}

\EndWhile

\If{$X$ is inside the effective interior}

\State{Compute multiplier $Y$, use learning rate $\alpha' \leftarrow \alpha_{\epsilon}$, and {\sc Update-and-Extend}($\bar{Y}$, $Y$, $\alpha'$)}

\State{ \Return{$X$} }
 
\Else

\State{Choose intermediate point between last and second-to-last iterate such that}

		\State{$X$ is strictly lower feasible and $\ddfrac{X \cdot CX - \mathsf{C}}{\max_{i=1}^n \left\{ (CX)_i \right\} - \mathsf{C}_\ell} > \mathsf{G}^*_0$ and $(CX)_{\max} < C_u - \epsilon$}

		\State{Compute multiplier $Y$, use learning rate $\alpha' \leftarrow \alpha_{\epsilon}$, and {\sc Update-and-Extend}($\bar{Y}$, $Y$, $\alpha'$)}
\State{ \Return{$X$} }

\EndIf
\end{algorithmic}
\end{algorithm}

\newpage

Recall that the empirical average of the sequence of multipliers is defined as
\begin{align*}
\bar{Y}^K = \frac{1}{A_K} \sum_{k = 0}^K \alpha_k Y^k \quad \mbox{where} \quad A_K = \sum_{k=0}^K \alpha_k.
\end{align*}
The formula for updating the empirical average in the next subroutine is obtained by straight algebra from the previous definition. But the next subroutine not only updates the empirical average, but also generates the secondary sequence of iterates and extends the sequence of multipliers to prevent convergence of the secondary sequence of iterates to a pure strategy.

\begin{algorithm}[H]
\caption{{\sc Update-and-Extend}($\bar{Y}$, $Y$, $\alpha$)}
\label{UEA}
\begin{algorithmic}[1]

\State{ $\bar{Y} \leftarrow \frac{\alpha}{A + \alpha} Y + \frac{A}{A + \alpha} \bar{Y}$; $A \leftarrow A + \alpha$ }

\State{ \begin{align*}
\tilde{X}(i) \leftarrow \tilde{X}(i) \ddfrac{\exp\{ \alpha (C Y)_i  \}}{\sum_{j=1}^n X(j) \exp\{ \alpha (C Y)_j \}} \quad i =1, \ldots, n.
\end{align*}
}

\While{$(C\tilde{X})_{\max} \geq C_{00} + \epsilon$}
		
		\State{$\tilde{X} = T(\tilde{X})$; $\bar{Y} \leftarrow \frac{\alpha}{A + \alpha} \tilde{X} + \frac{A}{A + \alpha} \bar{Y}$; $A \leftarrow A + \alpha$ }
	
\EndWhile

\end{algorithmic}
\end{algorithm}

Note that the computation of the empirical average $\bar{Y}$ commences at $X^0$ (before an iterate becomes strictly lower feasible). The reason for this is technical, useful in upper-bounding the number of iterations of the dynamical system until computation of a maximum clique by a polynomial.

\begin{algorithm}[H]
\caption{{\sc Invoke-upper-mechanism}$(C, X)$}
\label{MBB}
\begin{algorithmic}[1]

\State{X = J(X); Compute multiplier $Y$; use rate $\alpha' \leftarrow \alpha_{\epsilon}$; {\sc Update-and-Extend}($\bar{Y}$, $Y$, $\alpha'$)}
		
		\While{$(CX)_{\max} \geq \mathsf{C}_u - \epsilon$}
		
		\State{X = J(X); Compute multiplier $Y$; use rate $\alpha' \leftarrow \alpha_{\epsilon}$; {\sc Update-and-Extend}($\bar{Y}$, $Y$, $\alpha'$)}
		
		\EndWhile
		
		\If{$X$ is not in the effective interior}

		\State{Choose intermediate point between last and second-to-last iterate such that}

		\State{$X$ is strictly lower feasible and $\ddfrac{X \cdot CX - \mathsf{C}}{\max_{i=1}^n \left\{ (CX)_i \right\} - \mathsf{C}_\ell} > \mathsf{G}^*_0$ and $(CX)_{\max} < \mathsf{C}_u - \epsilon$}

		\State{Compute multiplier $Y$, use learning rate $\alpha' \leftarrow \alpha_{\epsilon}$, and {\sc Update-and-Extend}($\bar{Y}$, $Y$, $\alpha'$)}
		
		\State{\Return{$X$}}

		\EndIf
  
\end{algorithmic}
\end{algorithm}

In the following subroutine $\bar{k}$ is a configurable constant parameter. Its goal is to configure parameters $\mathsf{C}$ (of the first and second primary map) such that the potential function $\mathsf{P}$ increases. Parameter $\bar{k}$ limits the number of times these parameters need to be reconfigured.

\begin{algorithm}[H]
\caption{{\sc Wind}($C$, $X$, $k$, $\alpha$)}
\label{MB}
\begin{algorithmic}[1]

\If {$k \geq \bar{k}$}

\State{$\hat{X} = T_{\mathsf{G}}(X)$; $\hat{X'} = T_{\mathsf{G'}}(\hat{X})$}

\State{$\mathsf{C} = X \cdot CX$; $\mathsf{C'} = \hat{X} \cdot C\hat{X}$}

\Else

\For {$i = 1$ to $k$}

\State{$\hat{X} = T_{\mathsf{G}}(X)$; $\hat{X'} = T_{\mathsf{G'}}(\hat{X})$}

\State{$\mathsf{C} = \frac{1}{2} \left(\mathsf{C} + X \cdot CX \right)$; $\mathsf{C'} = \frac{1}{2} \left(\mathsf{C'} + \hat{X} \cdot C\hat{X} \right)$}

\EndFor

\EndIf

\State{\Return{$(\hat{X}, \hat{X'})$}}
  
\end{algorithmic}
\end{algorithm}

The following subroutine implements the {\em leapfrogging mechanism} that circumvents non-equilibrium fixed points. Note that as a fixed point is circumvented using this mechanism, the computation of the sequences of iterates and multipliers does not start anew, but rather treats the ``leapfrogging step'' as a step in the corresponding dynamical system. There is a technical reason for this, namely, to retain the favorable properties of $X^0$ in the bound on the equilibrium approximation error.

\begin{algorithm}[H]
\caption{{\sc Circumvent-Non-Equilibrium-Fixed-Points}$(C, X)$}
\label{MBA}
\begin{algorithmic}[1]

\State{Sort $X$ in descending order to produce probability vector $R_X$}

\For {$i = n$ to $1$}

\If{top-$i$ elements of $R_X$ are a clique, say $X^*$, that is inside the effective interior}

\If{$X$ is in the probability sector of $X^*$ and $\mathsf{P}(X^*) > \mathsf{P}(X)$}

\State{Find $\hat{X}$ in the effective interior such that $\mathsf{P}(\hat{X}) > \mathsf{P}(X^*)$ and set $X \leftarrow \hat{X}$}

\State{Compute multiplier $Y$; use rate $\alpha' \leftarrow \alpha_{\epsilon}$; {\sc Update-and-Extend}($\bar{Y}$, $Y$, $\alpha'$)}

\State{\Return{$X$}}

\EndIf

\Else

\If{top-$i$ elements of $R_X$ are a clique $X^*$ that is on the boundary the effective interior}

\If{$X$ is the probability sector of $X^*$ and $\mathsf{P}(X^*) > \mathsf{P}(X)$}

\State{Slightly increase $\epsilon$ or $\mathsf{G}^*_0$ until the sequence exceeds $\mathsf{P}(X^*)$}

\State{\Return{$X$}}

\EndIf

\EndIf

\EndIf

\EndFor
  
\end{algorithmic}
\end{algorithm}

Finally, one point is worth discussing in the next subroutine in relation to line 46, namely, that it is possible that the equilibrium approximation error of the empirical average of the extended sequence of multipliers may attain the value $\epsilon_a^2/8$ more than once. But once the upper bound on the equilibrium approximation error (as that is computed in the proof of Lemma \ref{theorem1}) is less than $\epsilon_a^2/8$, then this is no longer possible as this upper bound monotonically diminishes.

\begin{algorithm}[H]
\caption{{\sc Compute-Equilibrium}($C$, $X$, $\epsilon_a$)}
\label{CE}
\begin{algorithmic}[1]

\State{$\alpha_h \leftarrow \alpha_{\epsilon}$, $\alpha_\ell \leftarrow 0.001$}

 \While{true}
 
 \If{$X \cdot CX \geq \mathsf{C}_\ell + \epsilon$}
 
 \State{$\alpha_h \leftarrow d$, $\alpha_\ell \leftarrow d/1000$}
 
 \EndIf
 
 \State{$k=1$}
 
 \While{true}
 
 \State{$(X_1, X_2)$  = {\sc Wind}($C$, $X$, $k$, $\alpha_h$)}
 
 \If{$X_1 \cdot CX_1 - X \cdot CX > 0$ and $X_2 \cdot CX_2 - X_1 \cdot CX_1 > 0$}
 
 \State{break}
 
 \Else
 
 \State{$k \leftarrow k+1$}
 
 \EndIf
 
 \EndWhile

 \If{$(CX_2)_{\max} < \mathsf{C}_u - \epsilon$ and $\mathsf{G}^*(X_2) < \mathsf{G}^*_0$}
 
 \State{$X = X_2$; Compute multiplier $Y$; use rate $\alpha' \leftarrow \alpha_{\epsilon}$; {\sc Update-and-Extend}($\bar{Y}$, $Y$, $\alpha'$)}
 
 \Else \If{$(CX_2)_{\max} > \mathsf{C}_u - \epsilon$}

\State{Find $\alpha$ such that $(C (T_{\mathsf{G'}} \circ T_{\mathsf{G}})(X_2))_{\max} = \mathsf{C}_u - \epsilon$ using the bisection method}

\State{If find point whose potential $\leq X \cdot CX$ goto line $7$ using $k \leftarrow k+1$}

\If{$\alpha > \alpha_\ell$}

\State{$\alpha \leftarrow \alpha / 2$; $X = (T_{\mathsf{G'}} \circ T_{\mathsf{G}})(X_2)$;}

\State{Compute multiplier $Y$; use rate $\alpha' \leftarrow \alpha_{\epsilon}$; {\sc Update-and-Extend}($\bar{Y}$, $Y$, $\alpha'$)}

\Else

\State{$X = (T_{\mathsf{G'}} \circ T_{\mathsf{G}})(X_2)$ using $\alpha$}

\State{Compute multiplier $Y$; use rate $\alpha' \leftarrow \alpha_{\epsilon}$; {\sc Update-and-Extend}($\bar{Y}$, $Y$, $\alpha'$)}

\State{$X$ = {\sc Invoke-upper-mechanism}$(C, X)$}

\EndIf

\Else \If{$\mathsf{G}^*(X_2) < \mathsf{G}^*_0$}

\State{Find $\alpha$ such that $\mathsf{G}^*((T_{\mathsf{G'}} \circ T_{\mathsf{G}})(X_2)) = \mathsf{G}^*_0$ using the bisection method}

\State{If find point whose potential $\leq X \cdot CX$ goto line $7$ using $k \leftarrow k+1$}

\If{$\alpha > \alpha_\ell$}

\State{$\alpha \leftarrow \alpha / 2$; $X = (T_{\mathsf{G'}} \circ T_{\mathsf{G}})(X_2)$}

\State{Compute multiplier $Y$; use rate $\alpha' \leftarrow \alpha_{\epsilon}$; {\sc Update-and-Extend}($\bar{Y}$, $Y$, $\alpha'$)}

\Else

\State{$X = (T_{\mathsf{G'}} \circ T_{\mathsf{G}})(X_2)$ using $\alpha$}

\State{Compute multiplier $Y$; use rate $\alpha' \leftarrow \alpha_{\epsilon}$; {\sc Update-and-Extend}($\bar{Y}$, $Y$, $\alpha'$)}

\State{$X = T_{\mathsf{G}}^{OPT}$ using some $\alpha$ such that $X$ is in the effective interior}

\State{Compute multiplier $Y$; use rate $\alpha' \leftarrow \alpha_{\epsilon}$; {\sc Update-and-Extend}($\bar{Y}$, $Y$, $\alpha'$)}

\EndIf

\EndIf

\EndIf

\EndIf

 \State{$X$ = {\sc Circumvent-Non-Equilibrium-Fixed-Points}$(C, X)$}

 \State{If $(C\bar{Y})_{\max} - \bar{Y} \cdot C \bar{Y}$ is upper bounded by $\epsilon_a^2/8$, then return $\bar{Y}$}
 
 \State{If the equilibrium approximation bound has been violated, then return NULL}

\EndWhile

 \end{algorithmic}
  \end{algorithm}

\end{document}